\newcommand{\bg}[1]{\mathring{#1}}
\newcommand{\vol}[1]{{\rm vol}_{#1}}
\begin{document}
    
    \date{\today}

    \email{lewis.napper@surrey.ac.uk,i.roulstone@surrey.ac.uk,volodya@univ-angers. fr,m.wolf@surrey.ac.uk}

    \preprint{DMUS--MP--23/03}
    \title{Monge--Amp{\`e}re Geometry and Vortices}

    \author[a]{Lewis~Napper\,\orcidlink{0009-0001-6961-7055}\,}
    \author[a]{Ian~Roulstone\,\orcidlink{0000-0003-2091-0952}\,}
    \author[b]{Vladimir~Rubtsov\,\orcidlink{0000-0002-2765-0957}\,}
    \author[a]{Martin~Wolf\,\orcidlink{0009-0002-8192-3124}\,}

    \affil[a]{School of Mathematics and Physics, University of Surrey,\\ Guildford GU2 7XH, United Kingdom}
    \affil[b]{University of Angers, CNRS, LAREMA, SFR MATHSTIC,\\ F-49000 Angers, France}

    \abstract{We introduce a new approach to Monge--Amp{\`e}re geometry based on techniques from higher symplectic geometry. Our work is motivated by the application of Monge--Amp{\`e}re geometry to the Poisson equation for the pressure that arises for incompressible Navier--Stokes flows. Whilst this equation constitutes an elliptic problem for the pressure, it can also be viewed as a non-linear partial differential equation connecting the pressure, the vorticity, and the rate-of-strain. As such, it is a key diagnostic relation in the quest to understand the formation of vortices in turbulent flows. We study this equation via an associated (higher) Lagrangian submanifold in the cotangent bundle to the configuration space of the fluid. Using our definition of a (higher) Monge--Amp{\`e}re structure, we study an associated metric on the cotangent bundle together with its pull-back to the (higher) Lagrangian submanifold. The signatures of these metrics are dictated by the relationship between vorticity and rate-of-strain, and their scalar curvatures can be interpreted in a physical context in terms of the accumulation of vorticity, strain, and their gradients. We show explicity, in the case of two-dimensional flows, how topological information can be derived from the Monge--Amp{\`e}re geometry of the Lagrangian submanifold. We also demonstrate how certain solutions to the three-dimensional incompressible Navier--Stokes equations, such as Hill's spherical vortex and an integrable case of Arnol'd--Beltrami--Childress flow, have symmetries that facilitate a formulation of these solutions from the perspective of (higher) symplectic reduction.}

    \acknowledgements{We gratefully acknowledge stimulating conversations with Jonathan Bevan, Thomas Bridges, Roberto D'Onofrio, Jan Gutowski, Giovanni Ortenzi, Christian Saemann, and Paul Skerritt. We are particularly grateful to Jan Gutowski and Paul Skerritt for comments on the first version of this paper. LN was supported by the STFC grant ST/V507118/1. VR thanks the IHES for hospitality during the final stage of this work as well as the Centre Henri Lebesgue, programme ANR-11-LABX-0020-0.}

    \datalicencemanagement{No additional research data beyond the data presented and cited in this work are needed to validate the research findings in this work. For the purpose of open access, the authors have applied a Creative Commons Attribution (CC BY) licence to any Author Accepted Manuscript version arising.} 

    \begin{body}

        \section{Introduction}\label{sec:intro}

        The anatomy and dynamics of vortices are subjects of fundamental importance in the study of the incompressible Euler and Navier--Stokes equations in two and three dimensions. According to the incompressible Navier--Stokes equations on a three-dimensional Euclidean domain, the evolution of the vorticity, ${\bm\zeta}\coloneqq{\bm\nabla}\times\bmv$, is given by
        \begin{equation}\label{eq:NSV}
            \frac{\rmD{\bm\zeta}}{\rmD t}\ =\ {\bm\sigma}+\nu\lap{\bm\zeta}
            \ewith
            \frac{\rmD}{\rmD t}\ \coloneqq\ \parder{t}+\bmv\cdot{\bm\nabla}~,
        \end{equation}
        where $\bmv$ is the fluid velocity, ${\bm\sigma}\coloneqq({\bm\zeta}\cdot{\bm\nabla})\bmv$ is the vortex-stetching vector, $\lap$ is the Laplacian, and $\nu$ is the viscosity. The equation for divergence-free flow is 
        \begin{equation}\label{eq:incomp}
            \nabla\cdot\bmv\ =\ 0~. 
        \end{equation}
        When the flow is inviscid, $\nu=0$, we obtain the Euler equations. In two dimensions, the vortex-stretching vector is identically zero, and the vorticity is a materially conserved scalar field when the flow is inviscid. 

        The vortex-stretching vector ${\bm\sigma}$ can be written in terms of the rate-of-strain matrix, $S$, which is the symmetric part of the velocity-gradient matrix, as follows
        \begin{equation}\label{eq:S1}
            {\bm\sigma}\ =\ S{\bm\zeta}~.
        \end{equation}
        The magnitude and direction of vorticity are critical features in studies of turbulent flows in three dimensions. If, for the moment, we focus on inviscid flows in three dimensions, then using~\eqref{eq:NSV} and the Euler equation relating flow velocity and the pressure gradient
        \begin{equation}\label{eq:E1}
            \frac{\rmD\bmv}{\rmD t}\ =\ -{\bm\nabla}p~,
        \end{equation}
        where $p$ is pressure, it can be shown (see e.g.~\cite{Gibbon:2008aa}) that the vortex-stretching vector evolves according to
        \begin{equation}
            \frac{\rmD\bm\sigma}{\rmD t}\ =\ -P{\bm\zeta}~,
        \end{equation}
        where $P$ is the Hessian of the pressure. Upon taking the divergence of~\eqref{eq:E1} and using~\eqref{eq:incomp}, the rate-of-strain, the vorticity, and the pressure field are related by 
        \begin{equation}\label{eq:PEP}
            \tr(P)\ =\ \lap p\ =\ \tfrac12{\bm\zeta}^2-\tr(S^2)~.
        \end{equation}
        This equation holds for the incompressible Euler and Navier--Stokes equations in both two and three dimensions. In the standard literature,~\eqref{eq:PEP} is recognised as the Poisson equation for the pressure, and depends on time only as a parameter. Consequently, the time evolution of $\lap p$ will depend on whether we are considering the Euler or the Navier--Stokes equations.

        Equation~\eqref{eq:PEP} has been employed in studies of the accumulation of vorticity and of the plausible conditions under which such accumulations may be considered to be `a vortex'. From~\eqref{eq:PEP} it follows that, when vorticity dominates over strain, then $\lap p>0$, and conversely, when strain dominates over vorticity, $\lap p<0$. In two dimensions, the vorticity is a scalar field, the rate-of-strain matrix has only two independent components (due to incompressibility) and therefore~\eqref{eq:PEP} is a useful diagnostic relation involving the velocity gradients and the pressure field, as studied in~\cite{Larcheveque:1990aa,Larcheveque:1993aa,Weiss:1991aa}. In three dimensions, although the components of vorticity and strain can interact in more complicated ways to determine the sign of the Laplacian of the pressure,~\eqref{eq:PEP} has still been widely used in studies that address the enduring question as to what a vortex is; see e.g.~\cite{Jeong:1995aa,Dubief:2000aa}.

        Equation~\eqref{eq:PEP}, or rather its reformulation on an arbitrary Riemannian manifold, is a focal point of this paper. In particular,~\cite{Larcheveque:1990aa,Larcheveque:1993aa,Weiss:1991aa} studied this equation in the context of incompressible flows on a two-dimensional Euclidean domain, whereupon the velocity can be expressed in terms of the derivatives of a stream function. Using such a representation of the velocity field, the right-hand side of~\eqref{eq:PEP} becomes proportional to the determinant of the Hessian of the stream function. Consequently, the Gau{\ss}ian curvature of the stream function is related to the sign of the Laplacian of pressure. When vorticity dominates over strain, the stream function, viewed as a graph in Euclidean space, has positive Gau{\ss}ian curvature, and it has negative Gau{\ss}ian curvature when strain dominates. By introducing the stream function,~\eqref{eq:PEP} can then be viewed as a non-linear Monge--Amp{\`e}re equation for this function, assuming $\lap p$ is known and time, $t$, is considered a parameter. When $\lap p>0$, this equation is elliptic; conversely, when $\lap p<0$, the equation is hyperbolic (we shall return to these points in greater detail later in this paper). 

        The appearance of a Monge--Amp{\`e}re equation for two-dimensional incompressible flows led~\cite{Roulstone:2009bb} to study this problem from the point of view of the Monge--Amp{\`e}re geometry of~\cite{Kushner:2007aa}. In this context, one considers a pair of differential two-forms, $(\omega,\alpha)$, on $T^*M$, where $M$ is the configuration space of the fluid, $\omega$ is the symplectic form, and $\alpha$ is called the Monge--Amp{\`e}re form, which encodes~\eqref{eq:PEP}. This pair of forms satisfies a non-degeneracy condition, and such a pair is called a Monge--Amp{\`e}re structure. With this geometric picture in mind, the conditions for ellipticity and hyperbolicity noted by~\cite{Larcheveque:1990aa,Larcheveque:1993aa,Weiss:1991aa} then translate, via the Monge--Amp{\`e}re structure, into almost (para-)complex structures on $T^*M$ which, in fact, extend to almost quaternionic (para-)Hermitian structures~\cite{Roulstone:2009bb,Banos:2015aa}. 
                
        When incompressible flows in three dimensions are considered in terms of~\eqref{eq:PEP}, the absence of a stream function prohibits a generalisation of the classification of flows in terms of an elliptic or hyperbolic Monge--Amp{\`e}re equation~\cite{Larcheveque:1990aa,Larcheveque:1993aa}. Nevertheless,~\cite{Roulstone:2009aa} showed how, on a three-dimensional Euclidean domain,~\eqref{eq:PEP} can still be described using a suitably-defined Monge--Amp{\`e}re structure. This construction facilitated a generalisation of the criteria derived by~\cite{Larcheveque:1990aa,Larcheveque:1993aa,Weiss:1991aa} to three-dimensional incompressible flows. In this present paper, we shall show how these earlier results can be reformulated and unified by combining the ideas of Monge--Amp{\`e}re geometry with that of the higher symplectic geometry of~\cite{Cantrijn:1999aa,Baez:2010,Rogers:2011}, thereby providing a generalisation of this approach to incompressible flows on arbitrary Riemannian manifolds of arbitrary dimensions.

        Concretely, when classifying Monge--Amp{\`e}re equations in two or three independent variables,~\cite{Lychagin:1993aa} introduced a certain Riemannian or Kleinian metric on $T^*M$, whose signature has been related to the elliptic or hyperbolic nature of the underlying Monge--Amp{\`e}re equation. In view of this, one makes use of generalised solutions to the Monge--Amp{\`e}re equation associated with a Monge--Amp{\`e}re structure $(\omega,\alpha)$, which are Lagrangian submanifolds of $T^*M$ on which $\alpha$ vanishes. The notion of a generalised solution to a partial differential equation was first introduced in~\cite{Vinogradov:1973aa,Vinogradov:1975aa,Vinogradov:1977aa} and corresponds to admitting solutions that are multivalued or are not globally defined. In this context, classical (global, single valued) solutions to a Monge--Amp{\`e}re equation are described precisely by the graphs of differentials of functions. We shall adopt this view of generalised solutions when extending the aforementioned ideas to higher-dimensional incompressible flows. In particular, beginning from a (higher) Monge--Amp{\`e}re structure, now a specific pair of differential $m$-forms on $T^*M$ in the $m$-dimensional case, we construct a metric on $T^*M$. Moreover, it will be useful to study its pull-back to a certain (higher) Lagrangian submanifold of $T^*M$, noting that the submanifold is of the same dimension as the configuration space of the fluid.

        A physical motivation for considering the pull-back metric is as follows. In~\cite{Gibbon:2008aa}, it was noted that~\eqref{eq:PEP} locally holds the key to the formation of vortical structures through the sign of $\lap p$. The equation also plays a role in Navier--Stokes turbulence calculations in which vorticity tends to accumulate on `thin sets' -- quasi-two-dimensional sheets that roll up into one-dimensional tubes~\cite{Douady:1991aa}. The topology of vortex tubes can become highly complicated, but they are ubiquitous features of turbulent flows and have been dubbed `the sinews of turbulence' in~\cite{Moffatt:1994aa}. Extracting topological information from the underlying partial differential equation of the Navier--Stokes equations is an enduring problem: for a review, see~\cite{Paralta:2015aa}. Using the pull-back metric described above, we demonstrate, explicitly in the case of two-dimensional incompressible flows, that a topological invariant can be associated with the Lagrangian submanifold $L$ defined by~\eqref{eq:PEP}. Since $L$ is two-dimensional in this case, when the pull-back metric is Riemannian we can use the Gau{\ss}--Bonnet theorem to calculate an Euler number. We find that the curvature of $L$ is related to the physical properties of the flow in terms of gradients of vorticity and strain.

        We go on to show that certain solutions to the three-dimensional Navier--Stokes equations, such as Burgers' vortex, Hill's spherical vortex, and an integrable Arnol'd--Beltrami--Childress flow, possess symmetries that facilitate Hamiltonian reductions to two-dimensional problems. These results extend those presented in~\cite{Banos:2015aa}; in particular we show how such solutions can be studied from the point of view of the higher symplectic reduction of~\cite{Blacker:2021aa}. As somewhat of an aside for this paper, we note that helicity is described readily using the component parts of the Monge--Amp{\`e}re geometry developed herein. Further investigations using helicity or other invariants (e.g.~Maslov index) to study the topology of three-dimensional flows, within the framework of our Monge--Amp{\`e}re geometry, is a topic for future research.

        \paragraph{Organisation of the paper.}
        In \cref{sec:IFF}, we present the Navier--Stokes equations in a covariant framework, where the configuration space is an arbitrary Riemannian manifold $M$. Whilst this introduces some additional structure, the majority of our results are couched in a covariant language and it is therefore consistent to allow for arbitrary background geometries. Furthermore, the Poisson equation for the pressure involves additional curvature terms when written for an arbitrary $M$.

        Focusing on two-dimensional incompressible flows,  in \cref{sec:geometry2dFlows} we introduce the machinery of Monge--Amp{\`e}re geometry and Monge--Amp{\`e}re structures, following the geometric approach as described, for example, by~\cite{Kushner:2007aa}. This allows us to formulate the Monge--Amp{\`e}re equation arising in the Poisson equation for the pressure, revisiting some of the results of~\cite{Roulstone:2009bb}. However, in addition, we describe the role of the metric structure that arises on $T^*M$, as well as its pull-back to the Lagrangian submanifold $L$. We then use the pull-back metric on $L$ to show how the Gau{\ss}--Bonnet theorem, together with conditions on the projection $L\rightarrow M$, enable us to define the Euler number for `a vortex'. 

        Examples are then given in \cref{sec:examples2D}, in order to illustrate the application of the foregoing theory. These examples include a flow with topological bifurcations and the Taylor--Green vortex in two dimensions. Naturally, it is possible to find solutions to~\eqref{eq:PEP} that are not solutions to the full dynamical equations. However, our focus for now is to take the view that the geometry developed herein gives us the possibility to characterise the physical features of given solutions, rather than to explore how the Monge--Amp{\`e}re structure might facilitate the search for new solutions.

        In \cref{sec:geometricProperties3d}, we move on to consider~\eqref{eq:PEP} for flows in three dimensions. Concretely, we first revisit, in \cref{sec:2dr}, the two-dimensional case and note that another Monge--Amp{\`e}re structure can be defined using a different choice of symplectic structure. This choice is characterised by a duality (with respect to the metric on the configuration manifold, $M$), which in two-dimensions simply provides an alternative formulation to the one used before. However, in three dimensions, the duality leads to a higher Monge--Amp{\`e}re type structure defined in terms of a pair of differential three-forms, and this naturally encodes an equation such as~\eqref{eq:PEP}, even though there is no longer an underlying Monge--Amp{\`e}re equation in a single dependent variable. In \cref{sec:kpm}, we introduce the relevant concepts from higher symplectic geometry, in which the symplectic form is superseded by a closed, non-degenerate differential form of degree higher than two. 

        In \cref{sec:mag3df}, we explicitly set out the Monge--Amp{\`e}re geometry of three-dimensional flow, thereby extending the results of~\cite{Banos:2015aa}. We explain how the curvature of $L$ can be related, once again, to gradients of vorticity and strain. However, because the higher Lagrangian submanifold is now three-dimensional, we can no longer use the Gau{\ss}--Bonnet theorem to quantify the topology of vortices. Instead, we remark that helicity, a much-studied invariant of incompressible flows in three dimensions, can be formulated in terms of the geometric objects we have introduced. In principle, the resulting formulation can be applied to the Navier--Stokes equations in arbitrary dimensions. 

        We then illustrate the foregoing theory with examples in \cref{sec:eg3d}. We begin by writing out the form of the pull-back metric in terms of vorticity and rate-of-strain, via the velocity gradient matrix, and then discuss an example based on Burgers' vortex. This canonical model of a vortex tube has been studied from the point-of-view of Monge--Amp{\`e}re geometry in~\cite{Banos:2015aa}, where symplectic reduction was employed to illuminate the symmetry of the model, which is characterised physically by uniformity along the axis of rotation. 

        Inspired by this approach to solutions to the three-dimensional Navier--Stokes equations with symmetry, we introduce, in \cref{sec:kprs}, a higher symplectic reduction of the phase space and apply this to an integrable example of the Arnol'd--Beltrami--Childress flows, and to Hill's spherical vortex (a special case of the Hicks--Moffatt vortex), in \cref{sec:examplesReduction}.

        Finally, in \cref{sec:conclusions}, we summarise and draw conclusions.
        
        \section{Fluid flows and Monge--Amp{\`e}re geometry}\label{sec:FFMAG}

        \subsection{Incompressible fluid flows}\label{sec:IFF}

        To set the stage, let us summarise some facts about incompressible fluid flows. We shall be somewhat more general in that we allow the domain of the fluid flow to be a Riemannian manifold. This is because the equation~\eqref{eq:PEP} for the Laplacian of the pressure will be modified by a term depending on the Ricci curvature tensor of the underlying domain.

        \paragraph{Navier--Stokes equations.}
        Consider an $m$-dimensional oriented Riemannian manifold\footnote{We shall always assume that our manifolds are equipped with a \uline{good cover}, by which we mean a covering by open and contractible sets. Likewise, we shall always work in the induced topologies.} $M$ with metric $\bg{g}$. Let `$\rmd$' be the \uline{exterior derivative} on $M$ and `$\star_{\bg{g}}$' be the \uline{Hodge star} with respect to the metric $\bg{g}$. Furthermore, let the \uline{codifferential} acting on differential $p$-forms $\Omega^p(M)$ and the \uline{Hodge Laplacian} be given by
        \begin{equation}\label{eq: HodgeLaplace}
        	\bg{\delta}\ \coloneqq\ (-1)^{m(p-1)+1}{\star_{\bg{g}}\rmd\star_{\bg{g}}}
            \eand
            \bg{\lap}_{\rm H}\ \coloneqq\ \bg{\delta}\rmd+\rmd\bg{\delta}~,
        \end{equation}
        respectively. Set
        \begin{equation}
        	|\rho|^2\ \coloneqq\ \frac{\rho\wedge{\star_{\bg{g}}\rho}}{\vol{M}}
        \end{equation}
        for all $\rho\in\Omega^p(M)$, with `$\vol{M}$' the volume form on $M$ induced by $\bg{g}$.

        In the following, $M$ is taken to be the domain of the fluid flow in which we are interested, and the fluid flow is described by a one-parameter family of differential one-forms $v\in\Omega^1(M)$ on $M$, parametrised by $t\in\IR$, known as the \uline{velocity (co-)vector field}. The \uline{incompressible Navier--Stokes equations} is a system consisting of a flow equation for $v$,
        \begin{subequations}\label{eq:generalNavierStokes}
            \begin{equation}\label{eq:generalNavierStokesA}
                \parder[v]{t}\ =\ -(-1)^m{\star_{\bg{g}}(v\wedge{\star_{\bg{g}}\rmd v})}-\tfrac12\rmd|v|^2-\rmd p-\nu\bg{\lap}_{\rm H}v~,
            \end{equation}
            together with the \uline{divergence-free constraint}
            \begin{equation}\label{eq:generalNavierStokesB}
                \bg{\delta}v\ =\ 0~.
            \end{equation}
        \end{subequations}
        Here, $p\in\scC^\infty(M)$ is known as the \uline{pressure field} and $\nu\in\IR$ as the \uline{viscosity}, respectively.

        If we coordinatise $M$ by $x^i$ for $i,j,\ldots=1,\ldots,m$, then, with $v=v_i\rmd x^i$ and $v_i=v_i(t,x^1,\ldots,x^m)$, the equations~\eqref{eq:generalNavierStokes} become
         \begin{subequations}\label{eq:generalNavierStokesLocal}
            \begin{equation}\label{eq:generalNavierStokesLocalA}
                \parder[v^i]{t}\ =\ -v^j\bg{\nabla}_jv^i-\partial^ip+\nu(\bg{\lap}_{\rm B}v^i-\bg{R}^{ij}v_j)
            \end{equation}
            and
            \begin{equation}\label{eq:generalNavierStokesIncompressibilityLocal}
                \bg{\nabla}_iv^i\ =\ 0~,
            \end{equation}
        \end{subequations}
        where $\bg{\nabla}_i$ is the \uline{Levi-Civita connection} for the metric $\bg{g}_{ij}$ with Christoffel symbols denoted by $\bg{\Gamma}_{ij}{}^k$. Furthermore, the components of the associated \uline{Ricci tensor} and \uline{Riemann curvature tensor} are given by 
        \begin{equation}\label{eq:curvatureTensors}
        	\bg{R}_{ij}\ \coloneqq\ \bg{R}_{kij}{}^k \eand \bg{R}_{ijk}{}^l\ \coloneqq\ \partial_i\bg{\Gamma}_{jk}{}^l-\partial_j\bg{\Gamma}_{ik}{}^l-\bg{\Gamma}_{ik}{}^m\bg{\Gamma}_{jm}{}^l+\bg{\Gamma}_{jk}{}^m\bg{\Gamma}_{im}{}^l~,
        \end{equation}
        respectively. Letting $\bg{g}^{ij}$ be the inverse of $\bg{g}_{ij}$, the \uline{Beltrami Laplacian} is given by
        \begin{equation}\label{eq: BeltramiLaplace}
        	\bg{\lap}_{\rm B}\ \coloneqq\ \bg{g}^{ij}\bg{\nabla}_i\bg{\nabla}_j\ =\ \bg{\nabla}^i\bg{\nabla}_i~.
        \end{equation}
        Indices are raised and lowered using $\bg{g}^{ij}$ and $\bg{g}_{ij}$ respectively and we always use Einstein's summation convention. In deriving~\eqref{eq:generalNavierStokesLocal}, we have used the standard \uline{Weitzenb{\"o}ck formula}
        \begin{equation}\label{eq:Weitzenboeck}
            (\bg{\lap}_{\rm H}\rho)_{i_1\cdots i_p}\ =\ -\bg{\lap}_{\rm B}\rho_{i_1\cdots i_p}+p\bg{R}_{j[i_1}\rho^j{}_{i_2\cdots i_p]}+\tfrac12p(p-1)\bg{R}_{jk[i_1i_2}\rho^{jk}{}_{i_3\cdots i_p]}
        \end{equation}
        for a differential $p$-form $\rho=\frac{1}{p!}\rho_{i_1\cdots i_p}\rmd x^{i_1}\wedge\ldots\wedge\rmd x^{i_p}$ on $M$ that relates~\eqref{eq: HodgeLaplace} and~\eqref{eq: BeltramiLaplace}. Here and in the following, parentheses (respectively, square brackets) denote \uline{normalised symmetrisation} (respectively, \uline{anti-symmetrisation}) of the enclosed indices.

        Evidently, when $M$ is $\IR^m$ with the standard Euclidean metric $\delta_{ij}=1$ for $i=j$ and zero otherwise, the equations~\eqref{eq:generalNavierStokesLocal} reduce to the more familiar equations
        \begin{subequations}
            \begin{equation}\label{eq:standardNSEuclidean}
                \parder[v^i]{t}\ =\ -v^j\partial_j v^i-\partial^ip+\nu\lap v^i
            \end{equation}
            and
            \begin{equation}
                \partial_iv^i\ =\ 0~,
            \end{equation}
        \end{subequations}
        where now $\lap\coloneqq\partial^i\partial_i$ is the standard Euclidean Laplacian.

        \paragraph{Pressure constraint.}
        Upon applying the codifferential to~\eqref{eq:generalNavierStokesA} and using the divergence-free constraint~\eqref{eq:generalNavierStokesB}, we obtain the so-called \uline{pressure equation}
        \begin{equation}
            \bg{\lap}_{\rm H}p\ =\ -|\rmd v|^2+{\star_{\bg{g}}(v\wedge{\star_{\bg{g}}\bg{\lap}_{\rm H}v})}-\tfrac12\bg{\lap}_{\rm H}|v|^2~.
        \end{equation}
        In local coordinates, this becomes
        \begin{equation}\label{eq:pressureLaplacian}
            \bg{\lap}_{\rm B}p\ =\ -(\bg{\nabla}_iv_j)(\bg{\nabla}^jv^i)-v^iv^j\bg{R}_{ij}~,
        \end{equation}
        where we have again used the Weitzenb{\"o}ck formula~\eqref{eq:Weitzenboeck}. 

        Upon setting
        \begin{equation}\label{eq:vorticityAndStrain}
            \zeta_{ij}\ \coloneqq\ \bg{\nabla}_{[i}v_{j]}\ =\ \partial_{[i}v_{j]}
            \eand
            S_{ij}\ \coloneqq\ \bg{\nabla}_{(i}v_{j)}~,
        \end{equation}
        which are called the \uline{vorticity two-form} and the \uline{rate-of-strain tensor}, respectively, the pressure equation~\eqref{eq:pressureLaplacian} can be written in a more standard form as
        \begin{equation}\label{eq:pressureLapVorticityStrain}
            \bg{\lap}_{\rm B}p\ =\ \zeta_{ij}\zeta^{ij}-S_{ij}S^{ij}-v^iv^j\bg{R}_{ij}~.
        \end{equation}
        By definition, the rate-of-strain tensor vanishes if any only if the velocity vector field is a Killing vector field.

        Furthermore, because of the Poincar{\'e} lemma, on an open and contractible\footnote{In the following, we declare a \uline{neighbourhood} (of a point) to be an open and contractible set.} set $U\subseteq M$, the divergence-free constraint~\eqref{eq:generalNavierStokesB} can always be solved as
        \begin{equation}\label{eq:localVelocity}
            v\ =\ {\star_{\bg{g}}\rmd\psi}
            \efor
            \psi\ \in\ \Omega^{m-2}(U)
            \quad
            \Longleftrightarrow
            \quad
            v^i\ =\ \frac{\sqrt{\det(\bg{g})}}{(m-2)!}\eps^{i_1\cdots i_{m-1}i}\partial_{i_1}\psi_{i_2\cdots i_{m-1}}~,
        \end{equation}
        where $\eps_{i_1\cdots i_m}$ is the Levi-Civita symbol with $\eps_{1\cdots m}=1$; note that $\eps^{1\cdots m}=\frac{1}{\det(\bg{g})}\eps_{1\cdots m}$. Upon substituting this expression into~\eqref{eq:pressureLaplacian}, we obtain a Monge--Amp{\`e}re-type equation for $\psi$. For $m=2$, $\psi$ is known as the \uline{stream function}, and we obtain a Monge--Amp{\`e}re equation in a familiar setting. Generally, we may refer to $\psi\in\Omega^{m-2}(U)$ as the \uline{stream $(m-2)$-form}. 

        \begin{remark}
            The viscosity term in the Navier--Stokes equations~\eqref{eq:generalNavierStokesA} may be modified as
            \begin{equation}
                \parder[v]{t}\ =\ -(-1)^m{\star_{\bg{g}}(v\wedge{\star_{\bg{g}}\rmd v})}-\tfrac12\rmd|v|^2-\rmd p-\nu[\bg{\lap}_{\rm H}v+c\,\bg{\sfR\sfi\sfc}(v)]~,
            \end{equation}
            where $c\in\IR$ and $\bg{\sfR\sfi\sfc}(\rho)\coloneqq\bg{R}_i{}^j\rho_j\rmd x^i$ for any $\rho\in\Omega^1(M)$. Evidently, in the flat case, the extra term vanishes and this modified equation again reduces to the standard equation~\eqref{eq:standardNSEuclidean}. The coordinate version~\eqref{eq:generalNavierStokesLocalA} then becomes
            \begin{equation}
                \parder[v^i]{t}\ =\ -v^j\bg{\nabla}_jv^i-\partial^ip+\nu[\bg{\lap}_{\rm B}v^i-(c+1)\bg{R}^{ij}v_j]~.
            \end{equation}
            When $c=0$, we return to the situation given by~\eqref{eq:generalNavierStokesA}, and this version of the Navier--Stokes equations was perhaps first studied in the seminal work~\cite{Ebin:1970aa} and more recently in e.g.~\cite{Cao:1999aa,Kobayashi:2008aa}. The case $c=-1$ was discussed in e.g.~\cite{Pierfelice:2017aa} and the case $c=-2$ in e.g.~\cite{Ebin:1970aa,Chan:2017aa,Samavaki:2020aa} and the references therein. For instance, under the assumption of the divergence-free constraint~\eqref{eq:generalNavierStokesIncompressibilityLocal}, when $c=-2$ and with $S_{ij}$ the rate-of-strain tensor~\eqref{eq:vorticityAndStrain}, it can straightforwardly be seen that the viscosity term can be rewritten as
            \begin{equation}
                \bg{\lap}_{\rm B}v^i+\bg{R}^{ij}v_j\ =\ 2\bg{\nabla}_jS^{ij}~.
            \end{equation}
            Hence, in this case, for velocity fields that preserve the metric (i.e.~that are Killing), the viscosity term drops out from the Navier--Stokes equations. Generally, with the $c$-term switched on, the pressure equation~\eqref{eq:pressureLaplacian} takes the form
            \begin{equation}
                \bg{\lap}_{\rm B}p\ =\ -(\bg{\nabla}_iv_j)(\bg{\nabla}^jv^i)-v^iv^j\bg{R}_{ij}-\nu c\big(\bg{R}_{ij}S^{ij}+\tfrac12v^i\partial_i\bg{R}\big)\,,
            \end{equation}
            where $\bg{R}\coloneqq\bg{g}^{ij}\bg{R}_{ij}$ is the curvature scalar and we have used the well-known identity
            \begin{equation}
            	\bg{\nabla}^i\big(\bg{R}_{ij}-\tfrac12\bg{g}_{ij}\bg{R}\big)\ =\ 0~.
            \end{equation}
            Henceforth, we shall \uline{always} assume that $c=0$. Our results and conclusions remain unchanged, and all of the formul{\ae} can easily be adjusted to accommodate the $c$-term.
        \end{remark}

		\subsection{Geometric properties of fluids in two dimensions}\label{sec:geometry2dFlows}

		One of the enduring challenges of fluid mechanics is to understand the topology of vortices. However, no systematic method has been developed to extract topological information from the partial differential equations governing the flow. To this end, we now make some observations about how Monge--Amp{\`e}re geometry might provide new insights by focusing on incompressible flows in two spatial dimensions. We start off by recalling some of the key aspects of Monge--Amp{\`e}re geometry. See also \cref{app:MongeAmpereStructures}. We shall be rather brief and only list the relevant material for our discussion, and we refer the interested reader to the text book~\cite{Kushner:2007aa} for more details.

        \paragraph{Monge--Amp{\`e}re structures.}
        Let us consider the cotangent bundle $\pi:T^*M\rightarrow M$ of an $m$-dimensional manifold $M$, which we coordinatise by $(x^i,q_i)$ with $x^i$ local coordinates on $M$ as before and $q_i$ local fibre coordinates. Then, $T^*M$ comes with a canonical symplectic structure $\omega$ which in local coordinates is $\omega=\rmd q_i\wedge\rmd x^i$. Following~\cite{Lychagin:1993aa} (see also~\cite{Lychagin:1979aa}) a differential $m$-form $\alpha\in \Omega^m(T^*M)$ is called \uline{$\omega$-effective} whenever $\omega\wedge\alpha =0$. Furthermore, the pair $(\omega,\alpha)$, with $\alpha$ an $\omega$-effective $m$-form, will be called a \uline{Monge--Amp{\`e}re structure}~\cite{Banos:2002aa}. In this context, we shall refer to $\alpha$ as the \uline{Monge--Amp{\`e}re form}. We draw attention to the purpose of the requirement that the Monge--Amp{\`e}re form $\alpha$ is $\omega$-effective. This constraint removes redundancy that would occur if $\alpha$ were an arbitrary differential $m$-form, two of which produce the same Monge--Amp{\`e}re equation if and only if their difference is a differential form which is not $\omega$-effective~\cite{Lychagin:1993aa,Lychagin:1979aa}. \cref{thm:HodgeLepageLychagin} then tells us that the $\omega$-effective piece of a differential $m$-form uniquely determines the Monge--Amp{\`e}re equation.
        
        A \uline{generalised solution} for a Monge--Amp{\`e}re structure $(\omega,\alpha)$ is a Lagrangian submanifold $\iota:L\hookrightarrow T^*M$ with respect to $\omega$, that is, $\iota^*\omega=0$ and $\dim(L)=\dim(M)$, for which, in addition, we have $\iota^*\alpha=0$. In particular, the section $\rmd\psi:M\to T^*M$ associated with the function $\psi\in\scC^\infty(M)$ and locally given by $x^i\mapsto(x^i,q_i)=(x^i,\partial_i\psi)$ defines a Lagrangian submanifold $L_M\coloneqq\rmd\psi(M)$. Additionally, the requirement that a generalised solution satisfies $\iota^*\alpha =0$ then reads $(\rmd\psi)^*\alpha=0$, which in turn yields a \uline{Monge--Amp{\`e}re equation} for $\psi$. In this case, the functions $\psi\in\scC^\infty(M)$ satisfying the Monge--Amp{\`e}re equation and the corresponding generalised solutions described by $\rmd\psi$ are both referred to as \uline{classical solutions}. We shall call  Monge--Amp{\`e}re structures $(\omega,\alpha)$ and $(\omega,\alpha')$ \uline{symplectically equivalent} whenever there is a symplectomorphism $\Phi\in\scC^\infty(T^*M)$ such that $\alpha'=\Phi^*\alpha$.
   
        Moreover, as explained in \cref{app:LagrangianSubmanifolds}, a Lagrangian submanifold $L$ is locally a section $\rmd\psi:U\rightarrow T^*M$ for some $\psi\in\scC^\infty(U)$ and $U\subseteq M$ open and contractible if and only if the map $\pi|_L\coloneqq\pi\circ\iota:L\rightarrow M$ is a local diffeomorphism. In this case, we may take $x^i$ as local coordinates on $L$ and we have $\iota:x^i\mapsto(x^i,q_i)=(x^i,\partial_i\psi)$. However, an arbitrary generalised solution may exhibit singular behaviour where the projection $\pi|_L$ fails to be an immersion~\cite{Ishikawa:2006aa,Ishikawa:2015aa,Lychagin:1985aa} and $L$ is not locally described by the coordinates $x^i$. Recent work~\cite{Donofrio:2022aa} studying the semi-geostrophic equations has shown that such projection singularities may be related to the degeneracy of a specific metric on $L$. In order to isolate behaviour of $L$ which is due to the variation of vorticity and strain, hereafter we predominantly consider solutions which are (locally) described by a section.
           
        \paragraph{Monge--Amp{\`e}re geometry of two-dimensional fluid flows.}
        Let us now specialise to incompressible fluid flows in $m=2$ dimensions. In this case, the components~\eqref{eq:curvatureTensors} of the Ricci and Riemann curvature tensors simplify to 
        \begin{equation}
        	\bg{R}_{ij}\ =\ \tfrac{\bg{R}}{2}\bg{g}_{ij}
            \eand
            \bg{R}_{ijk}{}^l\ =\ \bg{R}\bg{g}_{k[j}\delta_{i]}{}^l
        \end{equation}
        respectively, where $\bg{R}$ is the curvature scalar. Furthermore, we denote by $\sfHess(\psi)$ the \uline{Hessian} of a function $\psi\in\scC^\infty(M)$. Explicitly, in local coordinates, it reads as $\sfHess(\psi)=(\bg{\nabla}_i\partial_j\psi)=(\bg{\nabla}_j\partial_i\psi)$. In this case,~\eqref{eq:localVelocity} yields
        \begin{equation}
        	v^i\ =\ -\sqrt{\det(\bg{g})}\eps^{ij}\partial_j\psi\,,
        \end{equation}
        and the pressure equation~\eqref{eq:pressureLaplacian}, on an open and contractible set $U\subseteq M$, becomes 
        \begin{equation}\label{eq:pressureLaplacian2d}
            \tfrac12\bg{\lap}_{\rm B}p\ =\ \det(\bg{g}^{-1}\sfHess(\psi))-\tfrac{\bg{R}}{4}|\rmd\psi|^2
            \quad
            \Longleftrightarrow
            \quad
            \tfrac12\bg{\nabla}^i\partial_ip\ =\ \det(\bg{\nabla}^i\partial_j\psi)-\tfrac{\bg{R}}{4}(\bg{\nabla}^i\psi)(\partial_i\psi)~.
        \end{equation}
        This can be understood as a Monge--Amp{\`e}re equation for the stream function and hence for the velocity field. The vorticity two-form~\eqref{eq:vorticityAndStrain} can be written as
        \begin{equation}\label{eq:vorticity2d}
            \zeta_{ij}\ =\ \tfrac12\sqrt{\det(\bg{g})}\,\eps_{ij}\,\zeta
            \ewith
            \zeta\ \coloneqq\ \bg{\lap}_{\rm B}\psi
            \quad\Longrightarrow\quad
            \zeta_{ij}\zeta^{ij}\ =\ \tfrac12\zeta^2~.
        \end{equation}

        Importantly, the pressure equation~\eqref{eq:pressureLaplacian2d} arises from a Monge--Amp{\`e}re structure on $T^*M$. Indeed, upon fixing the notation
        \begin{subequations}
			\begin{equation}\label{eq:PressureCurvature2d}
				\hat f\ \coloneqq\ \tfrac12 \bg{\lap}_{\rm B}p + \tfrac{\bg{R}}{4}|q|^2 \eand \bg{\nabla}q_i\ \coloneqq\ \rmd q_i-\rmd x^j\bg{\Gamma}_{ji}{}^kq_k~,
			\end{equation}       	
        	with $\bg{\Gamma}_{ij}{}^k$ the Christoffel symbols for $\bg{g}_{ij}$, it is readily checked that the differential forms\footnote{Note that $\bg{\nabla}q_i\wedge\rmd x^i=\rmd q_i\wedge\rmd x^i$.} 
        	\begin{equation}\label{eq:MongeAmpereStructure2d}
           		\begin{gathered}
                	\omega\ \coloneqq\ \bg{\nabla}q_i\wedge\rmd x^i~,
                	\\
                	\alpha\ \coloneqq\ \tfrac{\sqrt{\det(\bg{g})}}{2}\big[\eps^{ij}\bg{\nabla}q_i\wedge\bg{\nabla}q_j-\hat f\eps_{ij}\rmd x^i\wedge\rmd x^j\big]
            	\end{gathered}
        	\end{equation}
        \end{subequations}
        on $T^*M$ form a Monge--Amp{\`e}re structure on $T^*M$. For Lagrangian submanifolds $\iota:L\hookrightarrow T^*M$ which are locally $\rmd\psi:U\rightarrow T^*M$ for some $\psi\in\scC^\infty(U)$ with $U\subseteq M$ open and contractible, whilst $\iota^*\omega=0$ is automatic, the condition $\iota^*\alpha=0$ is equivalent to $\psi$ satisfying the Monge--Amp{\`e}re equation~\eqref{eq:pressureLaplacian2d}. In conclusion, the Monge--Amp{\`e}re equation~\eqref{eq:pressureLaplacian2d} arises from the Monge--Amp{\`e}re structure~\eqref{eq:MongeAmpereStructure2d}. 

        Note that $\alpha$ is non-degenerate if and only if $\hat f\neq 0$, and it is shown in \cref{sec:geometricProperties3d} that $\alpha$ is closed. We also note that pulling back $\alpha$ via $v = \star_{\bg{g}}\rmd\psi$ again yields~\eqref{eq:pressureLaplacian2d}; this observation shall inform the alternative Monge--Amp{\`e}re structure chosen in \cref{sec:geometricProperties3d}, which naturally generalises to higher dimensions.

        \paragraph{Almost (para-)Hermitian structure.}
        Next, following~\cite{Lychagin:1993aa}\footnote{see also~\eqref{eq:almostStructure4D}.}, we associate with the Monge--Amp{\`e}re structure~\eqref{eq:MongeAmpereStructure2d} an endomorphism $\hat J$ of the tangent bundle of $T^*M$ defined by
        \begin{equation}\label{eq:definitionJ2d}
            \frac{\alpha}{\sqrt{|\hat f|}}\ \eqqcolon\ \hat J\intprod\omega
        \end{equation}
        with $\hat f$ as defined in~\eqref{eq:MongeAmpereStructure2d} and under the assumption that $\hat f$ does not vanish. By virtue of the results of~\cite{Lychagin:1993aa}, $\hat J$ is an almost complex structure on $T^*M$ when $\hat f>0$ (in which case the Monge--Amp{\`e}re equation~\eqref{eq:pressureLaplacian2d} is elliptic) and an almost para-complex structure on $T^*M$ when $\hat f<0$ (in which case the Monge--Amp{\`e}re equation~\eqref{eq:pressureLaplacian2d} is hyperbolic). As can be checked following the arguments of~\cite{Lychagin:1993aa,Kruglikov:1999aa,Kushner:2007aa}, this structure is integrable if and only if $\hat f$ is constant.\footnote{This then necessarily means that the curvature of $M$ vanishes, i.e. $M$ is flat.} 

        Furthermore, as discussed in \cref{app:MongeAmpereStructures}, we can always find a differential two-form $\hat K$ which is of type $(1,1)$ with respect to $\hat J$, such that $\hat K\wedge\omega=0$, $\hat K\wedge(\hat J\intprod\omega)=0$, and $\hat K\wedge\hat K\neq 0$. Explicitly, we may take
        \begin{equation}\label{eq:almostKaehlerForm2d}
            \hat K\ \coloneqq\ -\sqrt{|\hat f|}\,\bg{\nabla}q_i\wedge{\star_{\bg{g}}\rmd x^i}~.
        \end{equation}
		
        Since $\hat K(\hat JX,Y)=-\hat K(X,\hat JY)$ for all $X,Y\in\frX(T^*M)$, we are naturally led to the almost (para-)Hermitian metric $\hat g(X,Y)\coloneqq\hat K(X,\hat JY)$ on $T^*M$ for all $X,Y\in\frX(T^*M)$, which is explicitly given by 
        \begin{equation}\label{eq:fluidMetric2d}
            \hat g\ =\ \tfrac12\hat f\bg{g}_{ij}\rmd x^i\odot\rmd x^j+\tfrac12\bg{g}^{ij}\bg{\nabla}q_i\odot\bg{\nabla}q_j~.
        \end{equation} 
        Evidently, in the elliptic case, when $\hat f>0$, the metric $\hat g$ is Riemannian, whilst in the hyperbolic case, when $\hat f<0$, the metric is Kleinian.

        \paragraph{Pull-back metric.}
        It is easily seen that the pull-back $g\coloneqq\iota^*\hat g$ of~\eqref{eq:fluidMetric2d} to the Lagrangian submanifold $\iota:L \hookrightarrow T^*M$ via $\rmd\psi$ is
        \begin{equation}\label{eq:fluidMetric2dPullBack}
            g\ =\ \tfrac12g_{ij}\rmd x^i\odot\rmd x^j
            \ewith
            g_{ij}\ \coloneqq\ \zeta\bg{\nabla}_i\partial_j\psi~,
        \end{equation}
        where we have used that 
        \begin{equation}\label{eq:pullBackHatf2D}
            f\ \coloneqq\ \iota^*\hat f\ =\ \tfrac12\bg{\lap}_{\rm B}p+\tfrac{\bg{R}}{4}|\rmd\psi|^2\ =\ \det(\bg{\nabla}^i\partial_j\psi)
        \end{equation}
        by~\eqref{eq:pressureLaplacian2d} and substituted~\eqref{eq:vorticity2d}. Clearly, in regions where the vorticity vanishes, this metric vanishes as well. When both $\tr(\bg{g}^{ik}g_{kj})>0$ and $\det(\bg{g}^{ik}g_{kj})>0$, it follows that $g$ is Riemannian. The former condition is always satisfied since $\tr(\bg{g}^{ik}g_{kj})=\zeta^2$, and the latter is satisfied if and only if $f>0$. Similarly, when $f<0$, $g$ is Kleinian. Hence, the signature of $g$ is independent of the sign of the vorticity~\eqref{eq:vorticity2d} and only depends on the sign of $f$.
        
        Upon comparing~\eqref{eq:pressureLapVorticityStrain} and~\eqref{eq:pullBackHatf2D}, we find that $f=\frac12(\zeta_{ij}\zeta^{ij}-S_{ij}S^{ij})$ with the indices on $\zeta_{ij}$ and $S_{ij}$ raised with the background metric. Hence, when $f>0$ and the metric $g$ is Riemannian, vorticity dominates, yet when $f<0$ and $g$ is Kleinian, strain dominates. This statement covariantly extends the pressure criterion for a vortex, as given in~\cite{Larcheveque:1990aa,Larcheveque:1993aa}, to an arbitrary Riemannian background manifold, while accounting for the underlying curvature. The standard criterion are recovered on a flat background.
        
		Now that we have criterion for testing the dominance of vorticity and strain of a flow on a Riemannian manifold, we discuss how to obtain topological information about the flow.

		\paragraph{Local Gau{\ss}--Bonnet theorem.}
		Let $L$ be a Lagrangian submanifold of $T^*M$, which is locally described by sections $\rmd\psi_U:U\rightarrow T^*M$, with $U\subseteq M$ open and contractible, and $\psi_U\in\scC^\infty(U)$ the stream function on $U$. Furthermore, let $\Sigma\subseteq U$ be a compact region in $U$, on which $f>0$. We can then define the compact region $L_\Sigma\subseteq L$ by $L_\Sigma\coloneqq\rmd\psi_U(\Sigma)$. It is now natural to consider the question of how we might use the \uline{local Gau{\ss}--Bonnet theorem} to relate the geometry of $L_\Sigma$ to its topology, as given by the Euler characteristic $\chi(L_\Sigma)$. For the reader's convenience, let us state this theorem, see e.g.~\cite[Theorem 4.2]{Rotskoff:2010aa} for details. 

        \begin{theorem}
            Let $\Sigma$ be a two-dimensional, compact, oriented Riemannian manifold with metric $g$. Suppose that $\Sigma$ has a boundary composed of disjoint, simple, closed, piecewise regular, piecewise arc-length parametrised curves $\gamma_\alpha$, that is, $\partial\Sigma=\bigcup_\alpha\gamma_\alpha$. Let $R$ be the curvature scalar of the Levi-Civita connection of $g$, $\vol{\Sigma}$ the volume form on $\Sigma$, and $\kappa$ the geodesic curvature. Furthermore, let $\varphi_\beta$ be the exterior angles at the non-smooth points of the boundary $\partial\Sigma$. Then, the Euler number $\chi(\Sigma)$ of $\Sigma$ is given by
            \begin{equation}\label{eq:localGaussBonnet}
                \tfrac12\int_\Sigma\vol{\Sigma}\,R+\sum_\alpha\int_{\gamma_\alpha}\rmd s\,\kappa(\gamma_\alpha(s))+\sum_\beta\varphi_\beta\ =\ 2\pi\chi(\Sigma)~.
            \end{equation}
        \end{theorem}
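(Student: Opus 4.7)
The plan is to reduce the statement to a computation on geodesic triangles and then assemble the global identity by triangulation. The key local ingredient is Cartan's structure equation: on a neighbourhood of $\Sigma$ with an orthonormal coframe $(e^1,e^2)$, the Levi-Civita connection one-form $\lambda$ defined by $\rmd e^1+\lambda\wedge e^2=0$ and $\rmd e^2-\lambda\wedge e^1=0$ satisfies $\rmd\lambda=-\tfrac12 R\,\vol{\Sigma}$, since in two dimensions the Gaussian curvature equals $\tfrac12 R$. Along any piecewise regular, arc-length-parametrised curve $\gamma$ a short calculation gives $\gamma^*\lambda=\kappa(\gamma(s))\,\rmd s-\rmd\theta$, where $\theta$ is the angle from $e_1$ to $\dot\gamma$.

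First I would prove the formula for a geodesic triangle $T\subseteq\Sigma$ contained in a single coordinate chart. Stokes' theorem yields
\begin{equation*}
-\tfrac12\int_T\vol{\Sigma}\,R\ =\ \int_T\rmd\lambda\ =\ \int_{\partial T}\lambda\ =\ \sum_\alpha\int_{\gamma_\alpha}\rmd s\,\kappa(\gamma_\alpha(s))-\Delta\theta~,
\end{equation*}
where $\Delta\theta$ is the total change of $\theta$ around $\partial T$. By the Hopf Umlaufsatz, the total rotation of $\dot\gamma$ around a simple closed positively-oriented curve equals $2\pi$, and hence $\Delta\theta=2\pi-\sum_\beta\varphi_\beta$. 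Rearranging recovers~\eqref{eq:localGaussBonnet} with $\chi(T)=1$.

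Next I would triangulate $\Sigma$ into finitely many geodesic triangles $\{T_k\}$ small enough for each to lie in a chart, arranging that the non-smooth points of $\partial\Sigma$ appear as triangulation vertices and that each boundary arc $\gamma_\alpha$ is subdivided into smooth sub-arcs. Summing the local identity over all $T_k$, interior edges are traversed twice with opposite orientations so their geodesic-curvature integrals cancel, while the exterior angles at each interior vertex sum to $2\pi$ and those at vertices of $\partial\Sigma$ reassemble the prescribed $\varphi_\beta$. Combining this with Euler's combinatorial identity $V-E+F=\chi(\Sigma)$ yields the right-hand side $2\pi\chi(\Sigma)$. The main obstacle is the bookkeeping in this final assembly: carefully tracking sums of angles across shared vertices, distinguishing interior from boundary vertices, and maintaining consistent orientations. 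The analytic inputs (structure equation, Stokes, and Umlaufsatz) are classical, whereas the combinatorial accounting is where subtle sign errors typically arise.
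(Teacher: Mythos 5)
The paper does not actually prove this statement: it is quoted verbatim as a classical result, with the proof deferred to the cited reference (Theorem 4.2 of the Gau{\ss}--Bonnet survey referenced there). So there is no in-paper argument to compare against; what you have written is the standard textbook proof (Cartan structure equation $\rmd\lambda=-\tfrac12 R\,\vol{\Sigma}$, Stokes, the Hopf Umlaufsatz for the local statement on a disc, then triangulation and the combinatorial identity $V-E+F=\chi$), and it is sound. Two small points of hygiene: the local building block should be an arbitrary piecewise regular triangle lying in a chart rather than a \emph{geodesic} triangle --- your own display already carries the $\int\kappa\,\rmd s$ terms, which vanish on geodesic edges, and the triangles abutting $\partial\Sigma$ necessarily have a non-geodesic edge --- and in the final assembly you should make explicit that at a boundary vertex the interior angles of the incident triangles sum to $\pi$ minus the prescribed exterior angle $\varphi_\beta$ (rather than to $2\pi$ as at interior vertices), which is exactly where the $\sum_\beta\varphi_\beta$ term survives the cancellation. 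With those points spelled out, the bookkeeping closes in the usual way and the proof is complete.
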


        \noindent
        Let $\Sigma\subseteq U\subseteq M$ be as above. Then, $\chi(\Sigma)=\chi(L_\Sigma)$, since $\pi|_L$ is now a diffeomorphism. For instance, if such $\Sigma$ is bounded by a simple, closed curve such as closed, isovortical contour, or a closed stream-line, then $\Sigma$ is homeomorphic to a disc and so, $\chi(L_\Sigma)=1$. 

        \paragraph{Christoffel symbols and curvatures.}
        Let us now give some of the formul{\ae} needed when evaluating~\eqref{eq:localGaussBonnet}. In particular, we introduce the notation
        \begin{equation}\label{eq:psiIndexNotation}
            \psi_{i_1\cdots i_n}\ \coloneqq\ \bg{\nabla}_{(i_1}\cdots\bg{\nabla}_{i_{n-1}}\partial_{i_n)}\psi~.
        \end{equation}
        A quick calculation shows that $\psi_{i_1\cdots i_n}$ can be expressed in terms of the components of the rate-of-strain tensor and the vorticity two-form, see~\eqref{eq:vorticityAndStrain} and~\eqref{eq:vorticity2d}, as
        \begin{equation}\label{eq:gradientsVorticityStrain}
            \psi_{i_1\cdots i_n}\ =\ -\sqrt{\det(\bg{g})}\bg{g}^{jk}\eps_{j(i_1}\bg{\nabla}_{i_2}\cdots\bg{\nabla}_{i_{n-1}}S_{i_n)k}+\tfrac12\bg{g}_{(i_1i_2}\bg{\nabla}_{i_3}\cdots\bg{\nabla}_{i_{n-1}}\partial_{i_n)}\zeta
        \end{equation}
        for $n>1$. Then, using~\eqref{eq:vorticity2d}, we can write the metric~\eqref{eq:fluidMetric2dPullBack} as $g_{ij}=\zeta\tilde g_{ij}$ with $\tilde g_{ij}\coloneqq\psi_{ij}$. Hence, due to its conformal nature, the Christoffel symbols $\Gamma_{ij}{}^k$ of $g_{ij}$ take the form
        \begin{subequations}
            \begin{equation}\label{eq:pullback2DChristoffel}
                \Gamma_{ij}{}^k\ =\ \tilde\Gamma_{ij}{}^k+\partial_{(i}\delta_{j)}{}^k\log(|\zeta|)-\tfrac12\tilde g_{ij}\tilde g^{kl}\partial_l\log(|\zeta|)~,
            \end{equation}
            where $\tilde g^{ij}$ denotes the inverse of $\tilde g_{ij}$, and the $\tilde\Gamma_{ij}{}^k$ are the Christoffel symbols of the Hessian metric $\tilde g_{ij}$, 
            \begin{equation}\label{eq:hessianChristoffel}
                \tilde\Gamma_{ij}{}^k\ =\ \bg{\Gamma}_{ij}{}^k+\tfrac12\Upsilon_{ijl}\tilde g^{lk}
                \ewith
                \Upsilon_{ijk}\ \coloneqq\ \psi_{ijk}+\tfrac43\psi_l\bg{R}_{k(ij)}{}^l~.
            \end{equation}
        \end{subequations}
        Consequently, the curvature scalar $R$ of $g_{ij}$ is given by
        \begin{subequations}\label{eq:curvature2D}
            \begin{equation}\label{eq:scalarCurvature2D}
                R\ =\ \frac{1}{\zeta}\left\{\tilde R-\tfrac{1}{\sqrt{|\det(\tilde g)|}}\partial_i\Big[\sqrt{|\det(\tilde g)|}\,\tilde g^{ij}\partial_j\log(|\zeta|)\Big]\right\},
            \end{equation}
            where $\tilde R$ is the curvature scalar for $\tilde g_{ij}$,
            \begin{equation}\label{eq:HessianScalarCurvature}
                \begin{aligned}
                    \tilde R\ &=\ \tfrac12\tilde g^{ij}\bg{g}_{ij}\bg{R}-\tfrac14\tilde g^{ij}\tilde g^{kl}\tilde g^{mn}(\Upsilon_{ijm}\Upsilon_{kln}-\Upsilon_{ikm}\Upsilon_{jln})
                    \\
                    &\kern1cm+\tfrac23\tilde g^{ij}\tilde g^{kl}\big[\psi_{mn}\big(\delta^m_i\bg{R}_{j(kl)}{}^n-\delta^m_j\bg{R}_{l(ik)}{}^n\big)+\psi_m\big(\bg{\nabla}_i\bg{R}_{j(kl)}{}^m-\bg{\nabla}_j\bg{R}_{l(ik)}{}^m\big)\big]\,,
                \end{aligned}
            \end{equation}
        \end{subequations}
        see \cref{app:pullbackCurvature} for details. Importantly, no fourth-order derivatives of the stream function appear, and in that sense, the curvature scalar of the pull-back metric~\eqref{eq:fluidMetric2dPullBack} is generated by gradients of vorticity and strain, see~\eqref{eq:gradientsVorticityStrain}. In addition, $\psi_i$ occurs without any further derivatives, hence the curvature scalar depends also on the components of velocity directly. 
        
        Furthermore, given an arc-length parametrised curve $\gamma:s\rightarrow(y^1(s),y^2(s))$ in two dimensions, we may use Beltrami's formula 
        \begin{equation}\label{eq:geodesicCurvature}
            \kappa(\gamma(s))\ =\ \sqrt{|\det(g(y(s)))|}\,\eps_{ij}\,\dot y^i(s)\big[\ddot y^j(s)+\Gamma_{kl}{}^j(y(s))\dot y^k(s)\dot y^l(s)\big]~,
        \end{equation}
        for the geodesic curvature $\kappa$ at a point $\gamma(s)$ of the curve. Here, the superposed dots indicate derivatives with respect to the arc-length parameter $s$. 
         
        Let us return to our previous example, where $\Sigma\subseteq U\subseteq M$ with $f>0$ and a boundary given by a simple, regular, closed curve $c:\IR\rightarrow U$. As $\rmd\psi_U$ is a diffeomorphism on $U$, it follows that the boundary of $L_\Sigma\coloneqq\rmd\psi_{U}(\Sigma)$ is given by $\gamma\coloneqq\rmd\psi_U\circ c:\IR\rightarrow L$, which is also a simple, regular, closed curve and may be assumed to be arclength-parametrised without loss of generality. Consequently,~\eqref{eq:localGaussBonnet} evaluates to
        \begin{equation}\label{eq:localGaussBonnetExample}
            \int_{\gamma}\rmd s\,\kappa(\gamma(s))\ =\ 2\pi-\tfrac12\int_{L_\Sigma}\vol{L_{\Sigma}}\,R~,
        \end{equation}
        on $L_\Sigma$, where $R$ is given by~\eqref{eq:scalarCurvature2D}. That is, the mean curvature of the boundary is determined by the average curvature of the interior. Noting~\eqref{eq:scalarCurvature2D},~\eqref{eq:HessianScalarCurvature}, and~\eqref{eq:geodesicCurvature}, we remark that at a formal qualitative level, the local Gau{\ss}--Bonnet relation~\eqref{eq:localGaussBonnetExample} is a statement to the effect that\footnote{Recall here that the boundary of $L_\Sigma$ is given by the image of the boundary of $\Sigma$, that is, the image of the closed stream-line bounding a candidate vortex in $M$, under $\rmd\psi_U$.}
        \begin{equation}
            \begin{aligned}
                &\mbox{mean curvature of the boundary of } L_\Sigma \ =\
                \\
                &\kern1cm=\ 2\pi-\mbox{mean gradients of vorticity and strain}~.
            \end{aligned}
        \end{equation}

		In this sense, we can use Monge--Amp{\`e}re geometry, when $f>0$, to assign a topological invariant to a `vortex' described by $L_\Sigma$ --- the image of the graph of the gradient of the stream function, over a compact region of $M$ bounded by some closed stream-line. Whilst the framework described here is an elaborate mechanism for determining the Euler number of a vortex patch, it illuminates a relationship between vortex topology and the physical phenomena, such as the gradients of vorticity and strain, that determine certain topological properties of the flow via the topology of $L$ and the diffeomorphic nature of the projection $\pi|_L$. When $\pi|_L$ fails to be a diffeomorphism, then singular behaviour may be anticipated (note the recent work by some of the authors in~\cite{Donofrio:2022aa} focuses on this problem in the context of the semi-geostrophic equations of meteorological flows). When the pull-back metric~\eqref{eq:fluidMetric2dPullBack} is Kleinian, the Gau{\ss}--Bonnet theorem can be extended to such cases, under certain conditions pertaining to the boundary $\partial L_\Sigma$ --- e.g.~it should have no null segments --- however, the link between topology as quantified by the Euler characteristic and the Gau{\ss}--Bonnet theorem becomes tenuous~\cite{Birman:1984aa,Steller:2006aa}.

        \subsection{Examples in two dimensions}\label{sec:examples2D}

        We now consider some classical examples of flows on $\IR^2$ with metric $\bg{g}_{ij}=\delta_{ij}$. As noted in \cref{sec:intro}, a solution for~\eqref{eq:PEP} is not necessarily one for the Navier--Stokes equations, and our first two examples fall into this category. However, they illustrate how the topology of flows can change with time and it is useful to view such phenomena from the point of view of Monge--Amp{\`e}re geometry. The second example can be turned into a solution to the Navier--Stokes equations by adding higher-order terms~\cite{Moffatt:2001aa}, yet the basic topological features on which we focus (as did~\cite{Moffatt:2001aa}) are most clearly illustrated in the form presented below. Our final example, the Taylor--Green vortex, is a solution to Navier--Stokes equations.
        
        \paragraph{Preliminaries.}
        For convenience, let us summarise the relevant simplified formul{\ae} first, adopting the notation $x\coloneqq x^1$ and $y\coloneqq x^2$. Working with a flat background metric, $\bg{R}=0$, and so we find for $\hat f$ given in~\eqref{eq:MongeAmpereStructure2d} and $f$ given below~\eqref{eq:fluidMetric2dPullBack} that
        \begin{equation}
			\hat f\ =\ \tfrac12\lap p\ =\ \partial_x^2\psi \partial_y^2\psi - (\partial_x\partial_y\psi)^2 =\ f
            \ewith
            \lap\ \coloneqq\ \partial_x^2+\partial_y^2~.
        \end{equation}
        Hence, the metric~\eqref{eq:fluidMetric2d} on $T^*\IR^2$ takes the form
        \begin{equation}\label{eq:flatBack2DMetric}
            \hat g\ =\ 
            \begin{pmatrix}
                f\unit_2 & 0
                \\
                0 & \unit_2
            \end{pmatrix}
        \end{equation}
        with its signature dictated by the sign of $f$. This is singular if and only if $f=0$, and the corresponding curvature scalar~\eqref{eq:curvatureScalarFluidMetric} becomes
        \begin{equation}\label{eq:flatBackCurvatureLR2D}
            \hat R\ =\ \tfrac{1}{f^3}\big(\partial_xf\partial_xf+\partial_yf\partial_yf-f\lap f\big)\,.
        \end{equation}
        Thus, at a stationary point of $f$, the sign of $\hat R$ is determined by the sign of $\lap f$. Consequently, when $f$ accumulates and has a local maximum, $\lap f<0$ and $\hat R>0$. 
        
        The vorticity~\eqref{eq:vorticity2d} is simply $\zeta=\lap\psi$ for the stream function $\psi=\psi(x,y)$, so the pull-back metric~\eqref{eq:fluidMetric2dPullBack} becomes
        \begin{equation}\label{eq:flatBack2DPullback}
        	g\ =\ \zeta
    		\begin{pmatrix}
    			\partial_x^2\psi & \partial_x\partial_y\psi 
    			\\
    			\partial_x\partial_y\psi & \partial_y^2\psi
    		\end{pmatrix} 
            \ =\ \frac\zeta2
    		\begin{pmatrix}
    			\zeta+2S_{xy} & -2S_{xx}
    			\\
    			-2S_{xx} & \zeta-2S_{xy}
    		\end{pmatrix},
        \end{equation}
        where $S_{xx}=-S_{yy}$ and $S_{xy}$ are the components of rate-of-strain tensor~\eqref{eq:vorticityAndStrain}, describing a \uline{shearing deformation} at an angle of $\frac12\arctan\Big(\frac{S_{xy}}{S_{xx}}\Big)$, without overall dilation, since our flow is divergence-free~\cite{Roulstone:2013aa,Landau:1987aa}. We note that $g$ is singular when the vorticity vanishes, in addition to when the Hessian part of the metric is singular, that is, where $f=0$. We shall discuss these points in due course. We also note that when $f$ depends on time $t$, then the metric~\eqref{eq:flatBack2DMetric} will depend on $t$ as a parameter. The same is true for~\eqref{eq:flatBack2DPullback} via the time-dependence of vorticity and rate-of-strain. The one-parameter family of metrics~\eqref{eq:flatBack2DMetric} and~\eqref{eq:flatBack2DPullback} will thus evolve according to either the Euler or the Navier--Stokes equations. 

        Another rotational invariant of the \uline{velocity-gradient matrix} 
        \begin{equation}
            A\ \coloneqq\ 
            \begin{pmatrix}
                -\partial_x\partial_y\psi & -\partial^2_y\psi
                \\
                \partial^2_x\psi & \partial_x\partial_y\psi
            \end{pmatrix},
        \end{equation}
        the \uline{resultant deformation} $D_{\rm R}$~\cite{Roulstone:2013aa}, occurs in the expression for the eigenvalues of~\eqref{eq:flatBack2DPullback},
        \begin{equation}\label{eq:flatBackPullbackEigen}
        	E_\pm\ =\ \tfrac12\big(\zeta^2\pm|\zeta|D_{\rm R}\big)
            \ewith
            D_{\rm R}^2\ \coloneqq\ 4(\partial_x\partial_y\psi)^2+\big(\partial_x^2\psi-\partial_y^2\psi\big)^2\,.
        \end{equation}
        Note that $D_{\rm R}^2=\zeta^2-4f$, so the eigenvalues take the same sign for $f>0$ and opposite sign for $f<0$, provided they are both non-zero. Finally, the curvature scalars~\eqref{eq:curvature2D} reduce to
        \begin{subequations}\label{eq:flatBackCurvature2D}
            \begin{equation}
                R\ =\ \frac1\zeta\left\{\tilde R-\tfrac{1}{\sqrt{|\det(\tilde g)|}}\partial_i\big[\sqrt{|\det(\tilde g)|}\,\tilde g^{ij}\partial_j\log(|\zeta|)\big]\right\},
            \end{equation}
    		where
            \begin{equation}
                \tilde g^{-1}\ =\ \frac1f
                \begin{pmatrix}
                    \partial_y^2\psi & -\partial_x\partial_y\psi 
                    \\
                    -\partial_x\partial_y\psi & \partial_x^2\psi
                \end{pmatrix}
            \end{equation}
            and
            \begin{equation}
                \tilde R\ =\ -\tfrac14\tilde g^{ij}\tilde g^{kl}\tilde g^{mn}(\partial_i\partial_j\partial_m\psi\,\partial_k\partial_l\partial_n\psi-\partial_i\partial_k\partial_m\psi\,\partial_j\partial_l\partial_n\psi)~.
            \end{equation}
        \end{subequations}
        As shown in~\eqref{eq:flatBack2DPullback}, the pull-back metric can be considered a function of vorticity and rate-of-strain, and the curvature of that metric therefore involves derivatives of these quantities. In turbulent flows, fine-scale structure (such as vortex filaments) could imply large gradients of vorticity and rate-of-strain, which in turn could present challenges in calculating such gradients in numerical simulations. However, as we shall illustrate in the following section, when the metric structure degenerates and/or the scalar curvature becomes singular, then these geometric features are associated with topological changes in the fluid flow.

        \paragraph{Larchev{\^e}que's criterion and uniform vorticity and strain.}
        In~\cite{Larcheveque:1990aa,Larcheveque:1993aa} it is noted that the stream function is uniquely defined on a simply connected domain $\Sigma$ bounded by a closed streamline when $\lap p>0$ and $\psi|_{\partial\Sigma}$ is known. For example, consider the stream function 
        \begin{equation}\label{eq:streamFunctionLarcheveque}
            \psi(t,x,y)\ \coloneqq\ \tfrac12\big[a(t)x^2+b(t)y^2\big]\,,
        \end{equation} 
        where $a$ and $b$ are functions of time $t$ alone. The Laplacian of the pressure is $\lap p=2f=2ab$, hence when $a$ and $b$ have the same sign, $f>0$, vorticity dominates, and the metric $\hat g$ given by~\eqref{eq:flatBack2DMetric} is Riemannian. Similarly, when $a$ and $b$ have different signs, $f<0$, strain dominates, and the metric $\hat g$ is Kleinian. Additionally, the metric is globally singular when $a$ or $b$ vanish, that is, when $f=0$.
        
        The vorticity is simply $\zeta=a+b$ and the pull-back metric~\eqref{eq:flatBack2DPullback} becomes
        \begin{equation}\label{eq:lrpbLarcheveque}
            g\ =\ (a+b)
            \begin{pmatrix}
                a & 0
                \\
                0 & b
            \end{pmatrix}.
        \end{equation}
        Like $\hat g$, its pull-back is Riemannian when $f>0$, Kleinian when $f<0$, and singular when $f=0$, with the following exception: the pull-back metric is also singular when the vorticity vanishes, that is, where $a=-b$. In line with Larchev{\^e}que, this additional singularity falls outside of Riemannian regions, hence the sign of vorticity remains constant where $f>0$.
		
        \paragraph{Flow with bifurcations.}
        In the following example, discussed in~\cite{Moffatt:2001aa} in connection with topological fluid dynamics, the occurrence of singularities in the Monge--Amp{\`e}re geometry can be associated with important features, such as bifurcations, in the fluid flow.

        \begin{figure}[ht]
            \vspace{15pt}
            \begin{center}
                \includegraphics[scale=0.201]{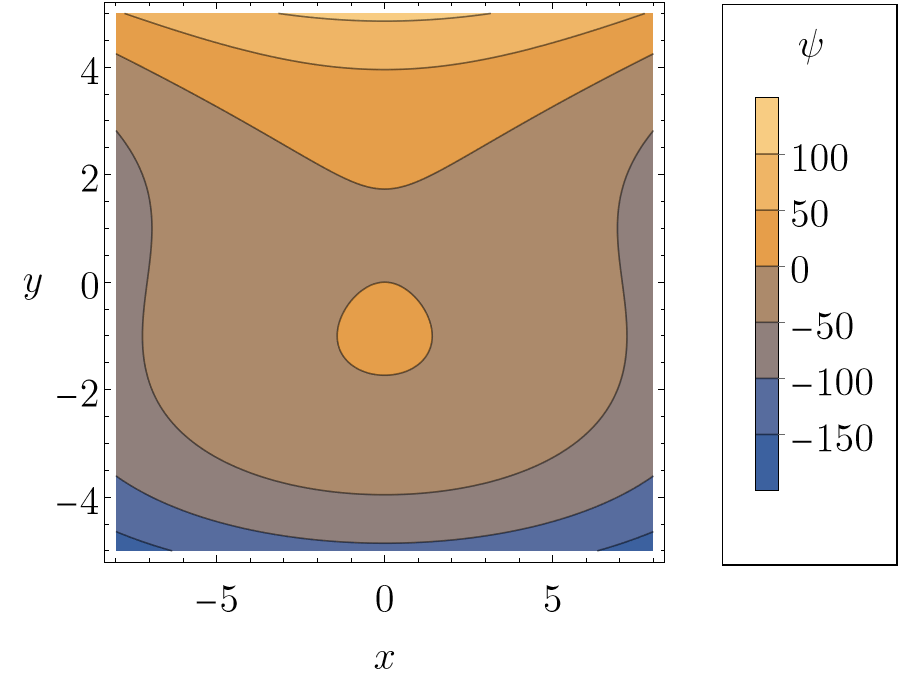}
                \caption{Plot of the streamlines for stream function~\eqref{eq:streamFunctionMoffat} with $t=-1$. The streamlines around the elliptic point $(0,-1)$ form closed contours, whilst those near the hyperbolic point $(0,1)$ diverge.}
                \label{fig:streamLinesMoffatt}
            \end{center}
        \end{figure}

        In particular, following~\cite{Moffatt:2001aa}, we consider the stream function
        \begin{equation}\label{eq:streamFunctionMoffat}
            \psi(t,x,y)\ \coloneqq\ -x^2+3yt+y^3~,
        \end{equation}  
        shown in \cref{fig:streamLinesMoffatt}, where time, $t$, is a parameter. Using~\eqref{eq:streamFunctionMoffat}, the Poisson equation for the pressure takes the explicit form
        \begin{equation}\label{eq:MAMoffatt}
            \partial_x^2\,\psi\partial_y^2\psi-(\partial_x\partial_y\psi)^2\ =\ -12y\ =\ f~,
        \end{equation}
        which we can view as a variable-coefficient Monge--Amp{\`e}re equation for $\psi$. This equation is elliptic when $y<0$, hyperbolic when $y>0$, and it degenerates to parabolic form on $y=0$. 

        The equations for the flow are
        \begin{equation}\label{eq:flowMoffat} 
            \dot x\ =\ -\partial_y\psi\ =\ -3\big(t+y^2\big)
            \eand
            \dot y\ =\ \partial_x\psi\ =\ -2x~,
        \end{equation}
        where the superposed dot indicates the derivative with respect to $t$. When $t>0$ there are no real fixed points\footnote{The stability matrix $A^i{}_j$ with $\delta\dot x^i=A^i{}_j\delta x^j$ is the velocity-gradient matrix $A^i{}_j\coloneqq\partial_jv^i$ which, upon recalling~\eqref{eq:localVelocity}, is related to the Hessian of the stream function by means of $A^i{}_j=\eps^{ik}\partial_j\partial_k\psi$. It has eigenvalues $\pm\sqrt{-f}$ which are thus purely imaginary when $f>0$.} (defined by $\dot x=\dot y=0$) and when $t=0$ the level set $\psi=0$ has a cusp singularity at the origin. When $t<0$, the fixed points are located at $(0,\pm\sqrt{-t})$; the fixed point at $(0,\sqrt{-t})$ is hyperbolic, hence streamlines in a neighbourhood of this fixed point tend to diverge, whilst the fixed point at $(0,-\sqrt{-t})$ is elliptic, indicating that the flow in a neighbourhood around this fixed point tends to swirl. This shows that for values of $t$ at which there are fixed points, the elliptic fixed point lies in the region where vorticity dominates, whilst the hyperbolic fixed point resides where strain dominates. Note that $f$ is time-independent, so these regions remain coherent in time, regardless of the fixed points.

        In terms of the Monge--Amp{\`e}re geometry introduced thus far, the metric $\hat g$ on $T^*\IR^2$ is given by~\eqref{eq:flatBack2DMetric}. The corresponding curvature scalar~\eqref{eq:flatBackCurvatureLR2D} is given by
        \begin{equation}\label{eq:MoffattCurvatureLR}
            \hat R\ =\ -\frac{1}{12y^3}~.
        \end{equation}
        Note that $f$ vanishes at $y=0$ and hence the metric $\hat g$ is singular. Furthermore, the signs of $\hat R$ and $f$ coincide. More generally, for $y<0$, we have $f>0$ and vorticity dominates in this region, where the metric~\eqref{eq:flatBack2DMetric} is Riemannian with positive curvature scalar. For $y>0$, it follows that $f<0$ and strain dominates, with~\eqref{eq:flatBack2DMetric} becoming Kleinian with negative curvature scalar. Furthermore, the vorticity is $\zeta=2(3y-1)$, hence the pull-back metric~\eqref{eq:flatBack2DPullback} is
        \begin{equation}\label{eq:pullbackMetricMoffatt}
            g\ =\ 4(1-3y)
            \begin{pmatrix}
                1 & 0
                \\
                0 & -3y
            \end{pmatrix}.
        \end{equation}
        Evidently, the metric is Riemannian for $y<0$ and singular when $y=0$ or $y=\frac13$. The former singularity corresponds to where $f=0$, with the latter occurring precisely where the vorticity vanishes. Using~\eqref{eq:flatBack2DPullback}, the components of strain are given by $S_{xx}=S_{yy}=0$ and $S_{xy}=-3y-1$, describing shearing at an angle $\frac{\pi}{4}$ to the coordinate axes, near hyperbolic fixed points, in regions where strain dominates. The corresponding curvature scalars~\eqref{eq:flatBackCurvature2D} are
        \begin{equation}\label{eq:curvatureMoffatt2D}
            R\ =\ \frac{1-9y}{8y^2(1-3y)^3}
            \eand
            \tilde R\ =\ 0~.
        \end{equation}
        This, in turn, shows that the metric singularities $y=0$ and $y=\frac13$ are, in fact, singularities of the scalar curvature, which is invariant under changes of coordinates on $M$.

        The picture emerging here has some interesting features. Recall the definitions of classical and generalised solutions from \cref{sec:geometry2dFlows}. Then, commencing with~\eqref{eq:streamFunctionMoffat}, we note that $L$ is a classical solution. However, as just indicated, the metric and curvature of $L$ have singularities, which are related to the points at which the flow changes from elliptic to hyperbolic, and where vorticity vanishes. We shall show next that we can describe this singular behaviour in terms of a generalised solution to the \uline{Legendre-dual problem}.
         
        \paragraph{Legendre duals.}   
        Consider a domain $\Sigma$ of the flow, with $L_\Sigma\coloneqq\rmd\psi(\Sigma)=\{(x,y;\partial_x\psi,\partial_y\psi)\,|\,(x,y)\in\Sigma\}$ the corresponding region in the Lagrangian submanifold $L$ described locally by $\rmd\psi$. Then, locally on this domain, $\pi|_L$ is the identity and is hence non-singular. In~\cite{Sewell:1987aa} it is shown that
        \begin{equation}\label{eq:legendreLocalInversion}
      		x'(t)\ =\ \parder[\psi(t,x,y)]{x}\ =\ v
      		\eand
      		y'(t)\ =\ \parder[\psi(t,x,y)]{y}\ =\ -u~.
      	\end{equation}
      	is a local inversion and one can define the \uline{Legendre transformation}~\cite{Sewell:1994aa}
      	\begin{equation}\label{eq:legendreStream}
      		\psi'(t,x',y')\ \coloneqq\ x'x+y'y-\psi(t,x,y)
      	\end{equation}
      	when $f\not\in\{0,\infty\}$, where finiteness of $f$ follows from the non-singular nature of the projection $\pi|_L$. Here, $x'$ and $y'$ are the local coordinates on the Legendre-dual space and $\psi'$ is known as the \uline{Legendre-dual (stream-)function}. Furthermore, in this setting we may also define the map $\tilde\pi|_L:(x,y)\mapsto(\partial_x\psi,\partial_y\psi)=(x',y')$,\footnote{Note that $\tilde \pi|_L=\tilde\pi\circ\iota$ with $\tilde\pi:(x,y,p,q)\mapsto(p,q)$ defined on $T^*\IR^2$. Furthermore, we use that $\pi|_L$ is locally a diffeomorphism to coordinatise $L$ by $(x,y)$ from $M$.} with determinant $f$. Hence, $f\neq 0$ precisely when $\tilde \pi|_L$ is non-singular, corresponding to when~\eqref{eq:legendreLocalInversion} is a local inversion. 
        
        It follows that the map $\tilde\pi|_L$ is singular precisely when $f$ vanishes and~\eqref{eq:legendreLocalInversion} is not invertible, in which case $\psi$ and $\psi'$ have different regularity. In particular, if $\psi$ is a classical solution to~\eqref{eq:pressureLaplacian2d}, the Legendre-dual $\psi'$ generates a generalised solution to the dual Monge--Amp{\`e}re equation, with singular behaviour where $f=0$. The dual Monge--Amp{\`e}re equation is given by
      	\begin{equation}\label{eq:MongeLegendre}
      		f(t,x(x',y'),y(x',y'))\ =\ \frac{1}{\det(\sfHess(\psi'(t,x',y')))}~.
      	\end{equation}
        Consequently, vanishing $f$ corresponds to $\det(\sfHess(\psi'(t,x',y')))$ blowing up. As $f$ is finite, it follows that $\det(\sfHess(\psi'(x',y')))\neq 0$. The Legendrian dual to the Hessian part of the metric~\eqref{eq:fluidMetric2dPullBack} is
      	\begin{equation}\label{eq:hessMetLegendre}
            \tilde g'\ =\ \tfrac12\psi'_{ij}\rmd x'^i\odot\rmd x'^j~,
        \end{equation}
      	where $(x'^i)=(x',y')$ and $\det(\tilde g')=\frac1f$. It follows that $\tilde g'$ is non-degenerate, however, it does blow up when $f=0$. The accompanying vorticity conformal factor $\zeta$ has the Legendrian dual
      	\begin{equation}
      		\zeta'\ =\ f\lap'\psi'~,
      	\end{equation}
      	with $\lap'$ the Laplacian with respect to $(x',y')$. From this, it follows that $g'=\zeta'\tilde g'$ may in-fact be singular when $f=0$ or $\lap'\psi'=0$. The curvature scalar associated to $g'$ is given by~\eqref{eq:flatBackCurvature2D}, with objects replaced by their primed Legendre dual as appropriate and 
        \begin{equation}\label{eq:lagMetLegendre}
            g'^{-1}\ =\ \frac{f}{(\lap'\psi')^2}
            \begin{pmatrix}
                \partial_{y'}^2\psi' & -\partial_{x'}\partial_{y'}\psi' 
                \\
                -\partial_{x'}\partial_{y'}\psi' & \partial_{x'}^2\psi'
            \end{pmatrix}.
        \end{equation} 
        Thus, singularities of $\tilde g$ in the $(x,y)$ coordinates do not occur in $\tilde g'$ in the $(x',y')$ coordinates and are instead transferred to the projection $\tilde\pi|_L$ and the dual solution $\psi'$, via the Legendre transformation. By restricting our domain $\Sigma$ such that $f$ has constant sign, we can impose that the Legendre transformation is well defined and both $\tilde\pi|_L$ and $\tilde g$ are non-singular. 

        \begin{figure}[ht]
            \vspace{15pt}
            \begin{center}
                \begin{subfigure}[b]{0.425\textwidth}
                    \includegraphics[scale=0.201]{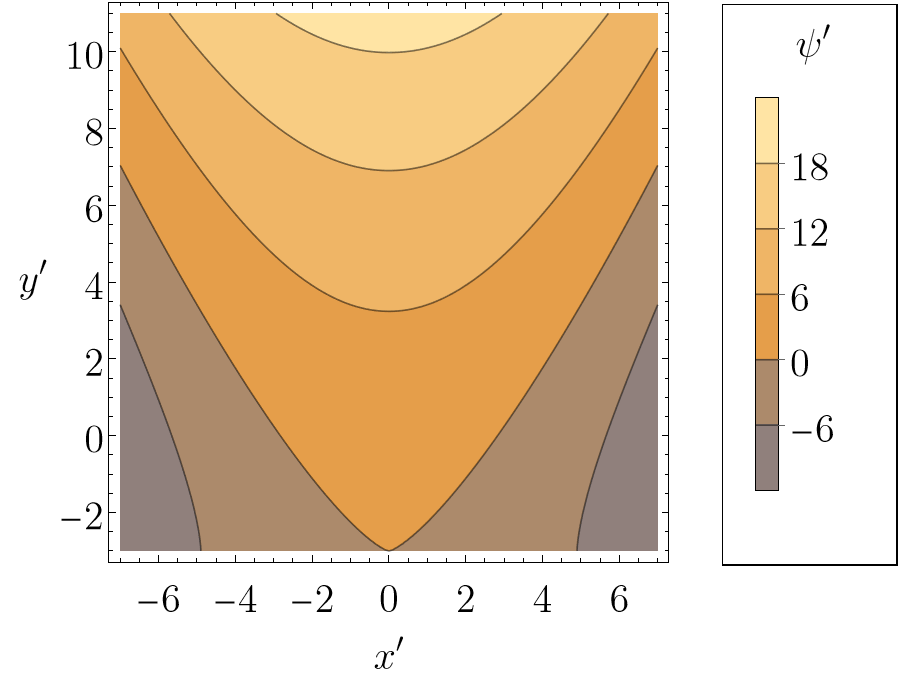}
                    \caption{Contour plot for the upper sheet of the function~\eqref{eq:moffattDualStream}. There is a hyperbolic point at $(0,3t)$.}
                \end{subfigure}
                \hspace{25pt}
                \begin{subfigure}[b]{0.425\textwidth}
                    \includegraphics[scale=0.201]{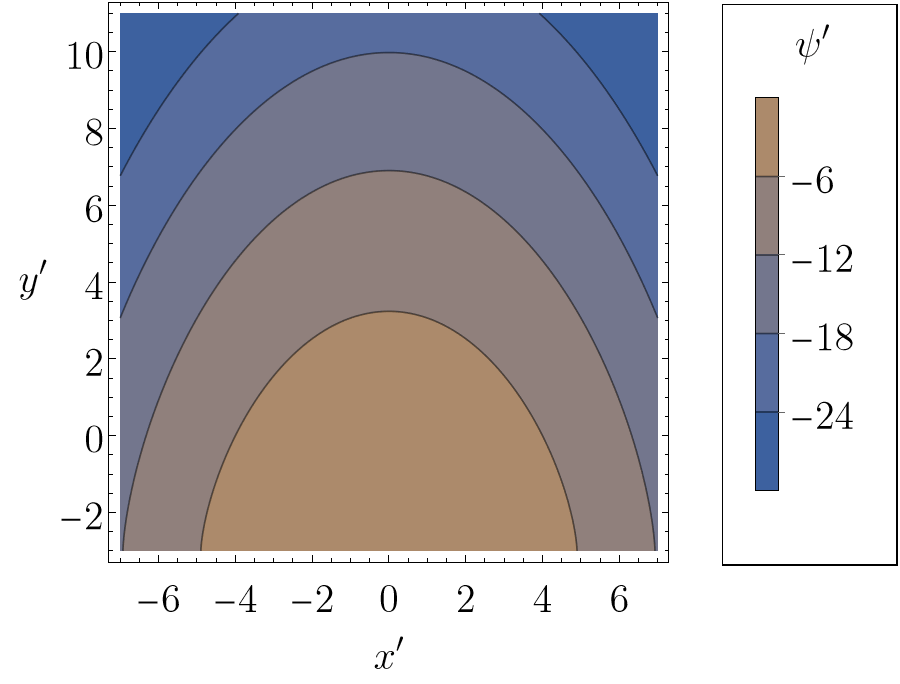}
                    \caption{Contour plot for the lower sheet of the function~\eqref{eq:moffattDualStream}. There is an elliptic point at $(0,3t)$.}
                \end{subfigure}
                \\[15pt]
                \begin{subfigure}[b]{\textwidth}
                    \begin{center}
                        \includegraphics[scale=0.32]{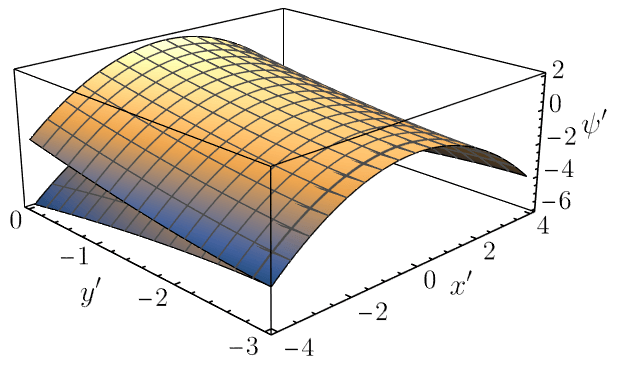}
                    \end{center}
                    \caption{A plot of the two sheets of the function~\eqref{eq:moffattDualStream}, with a fold singularity appearing along the line $y'=-3$, where the two sheets meet.}
                \end{subfigure}
                \caption{A selection of plots of the Legendre-dual stream function~\eqref{eq:moffattDualStream}, at time $t=-1$. The multivalued behaviour of $\psi'$ is associated with the corresponding Lagrangian submanifold $\iota:L\hookrightarrow T^*\IR^2$ being a generalised solution to~\eqref{eq:MongeLegendre}.}
                \label{fig:moffattDualStream}
            \end{center}
        \end{figure}  

        \paragraph{Flow with bifurcations and Legendre duality.}
        Returning to the stream function~\eqref{eq:streamFunctionMoffat}, we obtain for~\eqref{eq:legendreStream}
		\begin{subequations}
			\begin{equation}\label{eq:moffattDualStream}
    	   		\psi'(t,x',y')\ =\ -\tfrac14x'^2\pm\tfrac{2}{3\sqrt{3}}(y'-3t)^{\frac32}~,
        	\end{equation}
        	with 
        	\begin{equation}
        		x'\ =\ \dot y\ =\ -2x
        		\eand 
        		y'\ =\ -\dot x\ =\ 3y^2 + 3t~,
        	\end{equation}
        	and
        	\begin{equation}
        		x\ =\ -\tfrac12x'\ =\ \parder[\psi']{x'}
        		\eand
        		y\ =\ \pm\sqrt{\tfrac13(y'-3t)}\ =\ \parder[\psi']{y'}~,
        	\end{equation}
		\end{subequations}
        respectively. As $\psi$ is a classical solution to our Monge--Amp{\`e}re equation, it follows that $\pi|_L$ is the identity on $\IR^2$. However, $\psi'$ is only defined for $y'-3t\geq 0$ and is multivalued on momentum space, except where $y'-3t=0$; plots of the two sheets are shown in~\cref{fig:moffattDualStream}.
		
		As the Jacobian of the projection $\tilde\pi|_L:(x,y)\mapsto(\partial_x\psi,\partial_y\psi)$ is precisely the Hessian, it follows that both $\tilde\pi|_L$ and the local inversion~\eqref{eq:legendreLocalInversion} are singular where the Hessian is degenerate, that is, where $f = 0 =y'-3t$. Restricting to a domain $\Sigma$ on which the sign of $f$ is constant then amounts to choosing a sheet of $\psi'$.
		
        \begin{figure}[ht]
            \vspace{15pt}
            \begin{center}
                \begin{subfigure}[b]{0.425\textwidth}
                    \includegraphics[scale=0.201]{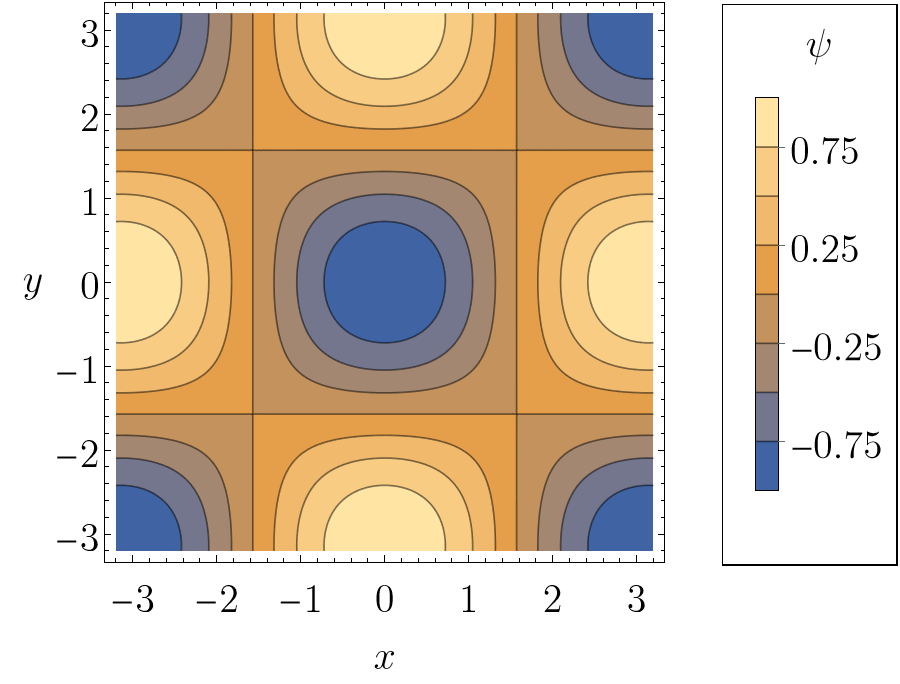}
                    \caption{Plot of the streamlines of the stream function~\eqref{eq:streamFunctionTGV}. The domain is partitioned the domain into squares of side length $\pi$, across which the sign of the stream function alternates.}
                    \label{fig:streamFunctionTGV}                        
                \end{subfigure}
                \hspace{25pt}
                \begin{subfigure}[b]{0.425\textwidth}
                    \includegraphics[scale=0.201]{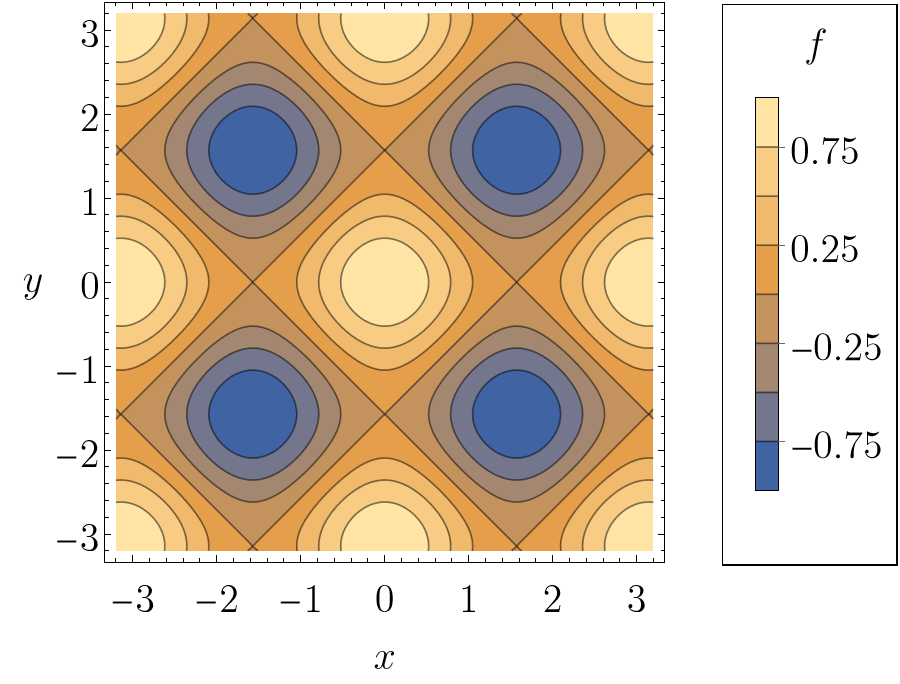}
                    \caption{Contour plot of $f=\frac12a^2b^2F^2[\cos(2ax)\linebreak[4]+\cos(2by)]$. The domain is partitioned into rhombi, with positive/negative regions around elliptic/hyperbolic fixed points.}   
                \end{subfigure}
                \caption{Plots of the iso-lines of the stream function and half the Laplacian of pressure for the Taylor--Green vortex with parameters $a=b=1$ and $F(t)\equiv 1$, which shall be used for the remainder of the plots for this example. Streamlines corresponding to values of sufficiently large magnitude are closed contours contained in regions of positive $f$, where vorticity dominates. The vorticity is proportional to the stream function, $\zeta=-(a^2+b^2)\psi$.}
            \end{center}
            \vspace{-10pt}
        \end{figure}

        \paragraph{Taylor--Green vortex.}
        In two dimensions, the stream function of the \uline{Taylor--Green vortex}~\cite{Taylor:1937aa} takes the form 
        \begin{equation}\label{eq:streamFunctionTGV}
            \psi(t,x,y)\ \coloneqq\ -F(t)\cos(ax)\cos(by)
        \end{equation}
        where $F$ is a function of time $t$ alone and $a,b\in\IR$ are some parameters. See \cref{fig:streamFunctionTGV}.

        Hence, for~\eqref{eq:streamFunctionTGV}, we have $f=\frac12a^2b^2F^2[\cos(2ax)+\cos(2by)]$, so the metric is again~\eqref{eq:flatBack2DMetric}, and the curvature scalar~\eqref{eq:flatBackCurvatureLR2D} is simply given by
        \begin{equation}\label{eq:curvatureLR_TGV}
            \hat R\ =\ \frac{8(a^2+b^2)[1+\cos(2ax)\cos(2by)]}{a^2b^2F^2[\cos(2ax)+\cos(2by)]^3}~.
        \end{equation}
        See \cref{fig:curvature4DTGV}. Consequently, when $\cos(2ax)+\cos(2by)>0$, the metric is Riemannian with a positive curvature scalar and vorticity dominates. When $\cos(2ax)+\cos(2by)<0$ the metric is Kleinian with negative curvature scalar, and strain dominates. The signs of $f$ and $\hat R$ coincide. Both the metric and curvature scalar are singular when $abF=0$ and along the lines $y=\frac{a}{b}x+\frac{\pi}{2b}(2n+1)$ for all $n\in\IZ$ (when $\cos(2ax)+\cos(2by)=0$), corresponding to where $f=0$. 
        
        Furthermore, the vorticity is given by $\zeta=(a^2+b^2)F\cos(ax)\cos(by)$ so that the pull-back metric~\eqref{eq:flatBack2DPullback} becomes
        \begin{equation}\label{eq:pullbackMetricTGV}
            g\ =\ \frac{(a^2+b^2)F^2}{4}
            \begin{pmatrix}
                a^2[1+\cos(2ax)][1+\cos(2by)] & -ab\sin(2ax)\sin(2by)
                \\
                -ab\sin(2ax)\sin(2by) & b^2[1+\cos(2ax)][1+\cos(2by)]
            \end{pmatrix}.
        \end{equation}
        Its eigenvalues~\eqref{eq:flatBackPullbackEigen} are
        \begin{subequations}\label{eq:eigenvaluesTGV}
            \begin{equation}
                E_\pm\ =\ \tfrac{F^{2}(a^2+b^2)}{4}\left[2\big(a^2+b^2\big)\cos^2(ax)\cos^2(by)\pm|\cos(ax)\cos(by)|\sqrt{\tilde E}\,\right]
            \end{equation}
            with
            \begin{equation}
                \tilde E\ \coloneqq\ \big(a^4-6a^2b^2+b^4\big)[\cos(2ax)+\cos(2by)]+\big(a^2+b^2\big)^2[1+\cos(2ax)\cos(2by)]~.
            \end{equation}
        \end{subequations}
        See \cref{fig:eigenvaluesTGV}. The corresponding curvature scalars~\eqref{eq:flatBackCurvature2D} are
        \begin{equation}\label{eq:curvaturePullbackTGV}
            R\ =\ \frac{8}{F^2(a^2+b^2)[\cos(2ax)+\cos(2by)]^2}
            \eand
            \tilde R\ =\ 0~.
        \end{equation}
        Evidently, $E_+$ is everywhere non-negative, so the signature of the metric~\eqref{eq:pullbackMetricTGV} is determined by the sign of $E_-$. It is clear from~\cref{fig:eigenvalueMinusTGV} that, when $\cos(2ax)+\cos(2by)\gtrless 0$, we have $E_-\gtrless 0$ and the metric $g$ is Riemannian/Kleinian with vorticity/strain dominating. Also, $E_-=0$ when $\cos(2ax)+\cos(2by)=0$ and we note from~\eqref{eq:curvaturePullbackTGV} that the scalar curvature is singular at these points too. The vorticity changes sign as the contours $x=\frac{\pi}{2a}(2n+1)$ or $y=\frac{\pi}{2b}(2n+1)$ are crossed, and the metric~\eqref{eq:pullbackMetricTGV} is Kleinian on both sides; this is consistent with the observations that the vorticity has constant sign in Riemannian regions, where it also dominates~\cite{Larcheveque:1993aa}. 

        \begin{figure}[ht]
            \vspace{15pt}
            \begin{center}
                \begin{subfigure}[b]{0.425\textwidth}
                    \includegraphics[scale=0.201]{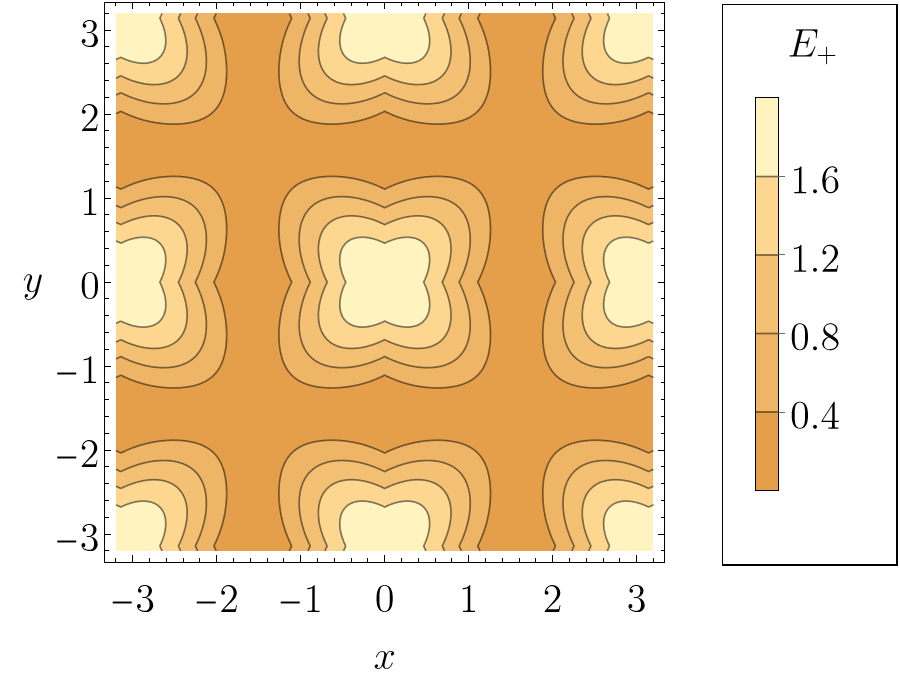}
                    \caption{Contour plot for the eigenvalue $E_+$. This eigenvalue is non-negative on the domain and vanishes along $x=\frac\pi2(2n+1)$ or $y=\frac\pi2(2n+1)$ for all $n\in\IZ$.}
                \end{subfigure}
                \hspace{25pt}
                \begin{subfigure}[b]{0.425\textwidth}
                    \includegraphics[scale=0.201]{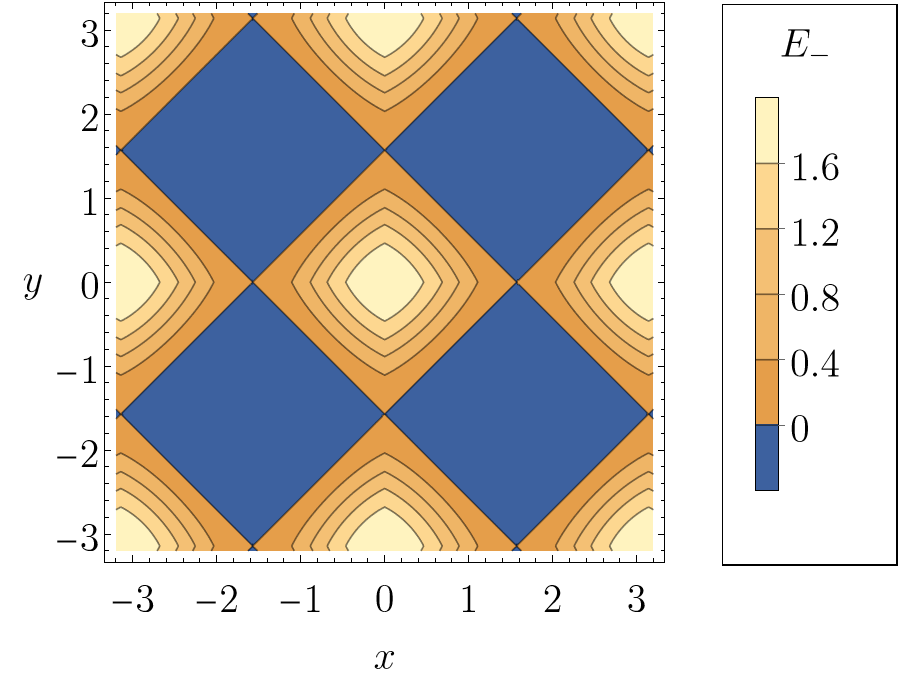}
                    \caption{Contour plot for the eigenvalue $E_-$. This eigenvalue is non-positive within the dark blue regions, but also vanishes at points within these domains.}
                    \label{fig:eigenvalueMinusTGV}
                \end{subfigure}
                \caption{Plots of the eigenvalues~\eqref{eq:eigenvaluesTGV} of the pull-back metric~\eqref{eq:pullbackMetricTGV} for the Taylor--Green vortex with parameters $a=b=1$ and $F(t)\equiv1$.}
                \label{fig:eigenvaluesTGV}
            \end{center}
            \vspace{-15pt}
        \end{figure}

        \begin{figure}[ht]
            \begin{center}
                \begin{subfigure}[b]{0.425\textwidth}
                    \includegraphics[scale=0.201]{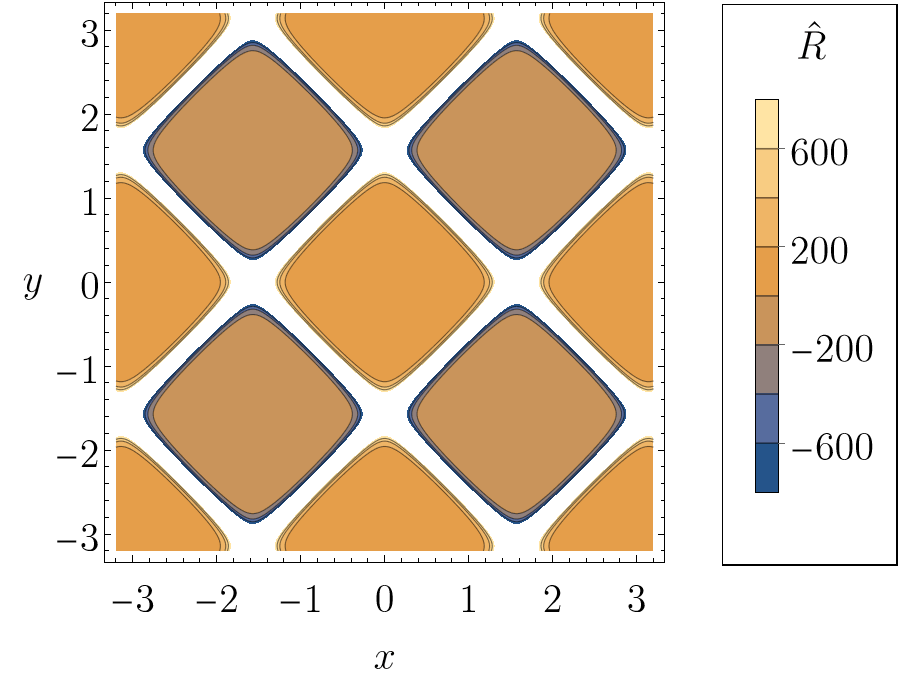}
                    \caption{Contour plot for the curvature scalar $\hat R$ on $T^*\IR^2$. This is singular at $y=x+\frac\pi2(2n+1)$ for all $n\in\IZ$, which is where $f$ vanishes.}
                    \label{fig:curvature4DTGV}
                \end{subfigure}
                \hspace{25pt}
                \begin{subfigure}[b]{0.425\textwidth}
                    \includegraphics[scale=0.201]{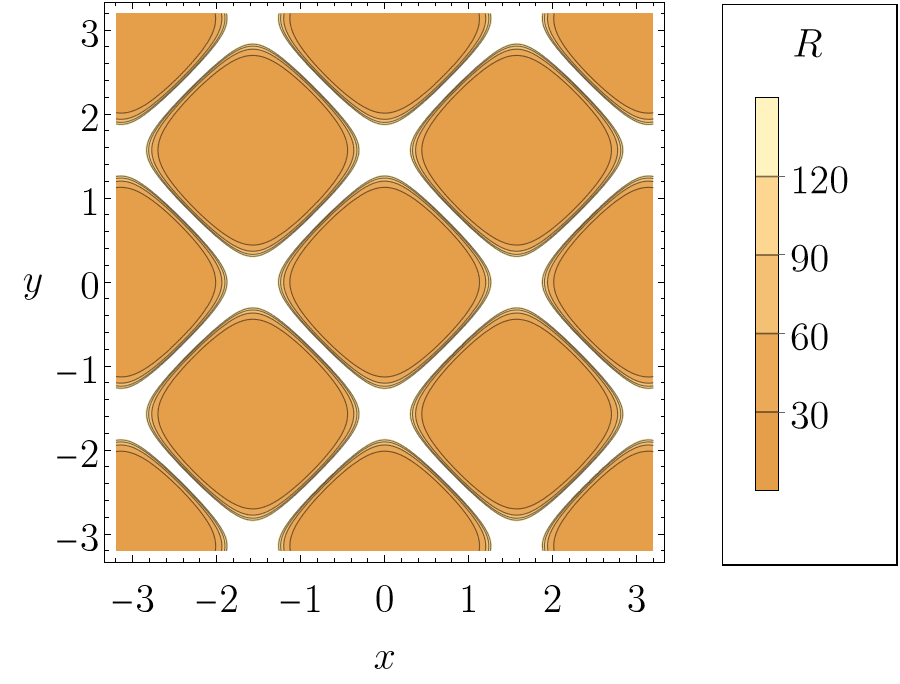}
                    \caption{Contour plot for the curvature scalar $R$. This is everywhere positive yet small away from the singularities given by $y=x+\frac{\pi}{2}(2n+1)$ for all $n\in\IZ$.}
                \end{subfigure}
                \caption{Contour plots of the curvatures~\eqref{eq:curvatureLR_TGV} and~\eqref{eq:curvaturePullbackTGV} respectively, for the Taylor Green vortex with parameters $a=b=1$ and $F(t)\equiv 1$.}
            \end{center}
            \vspace{-25pt}
        \end{figure}

        \section{Geometric properties of fluids in three dimensions}\label{sec:geometricProperties3d}

        Having discussed two-dimensional fluid flows, we now consider flows in three (or even higher) dimensions. Here, the situation is much more involved, since in the general case the flow is not described by a stream function.\footnote{We shall consider some examples of three-dimensional flows with symmetries, which can be described in terms of a stream function, such as Hill's spherical vortex.} Hence, the pressure equation~\eqref{eq:pressureLaplacian} cannot be converted into a Monge--Amp{\`e}re equation. Nevertheless, one can formulate the flow in terms of differential forms, as we shall now explain. To do this, we first revisit the formulation we have just used for two-dimensional flows, and note that an alternative description naturally presents itself. Whilst this alternative view makes little difference to the geometric picture in two dimensions, we show that it provides a mechanism to unify what could otherwise be quite different geometric descriptions of two-dimensional and three-dimensional flows, as was described in~\cite{Banos:2015aa}.

        \subsection{Two-dimensional case revisited}\label{sec:2dr}

        \paragraph{Monge--Amp{\`e}re structure.}
        In \cref{sec:geometry2dFlows}, we have seen that the Monge--Amp{\`e}re structure~\eqref{eq:MongeAmpereStructure2d} encodes incompressible fluids on a two-dimensional Riemannian manifold $(M,\bg{g})$. As before, let $(x^i,q_i)$ be local coordinates on $T^*M$. Instead of using the standard symplectic structure~\eqref{eq:MongeAmpereStructure2d} on $T^*M$, we now propose taking the `dual' form
        \begin{equation}\label{eq:alternativeSymplecticStructure}
            \varpi\ \coloneqq\ \bg{\nabla}q_i\wedge{\star_{\bg{g}}}\rmd x^i~,
        \end{equation}
        that is,~\eqref{eq:almostKaehlerForm2d} without the pre-factor. Evidently, $\varpi$ is non-degenerate\footnote{We have already seen this in our discussion around~\eqref{eq:almostKaehlerForm2d}.} and it is also closed as a consequence of
        \begin{subequations}\label{eq:closureOmega}
            \begin{equation}\label{eq:dNablaq}
                \rmd(\bg{\nabla}q_i)\ =\ \tfrac12\rmd x^l\wedge\rmd x^k\bg{R}_{kli}{}^jq_j+\rmd x^j\bg{\Gamma}_{ji}{}^k\wedge\bg{\nabla}q_k
            \end{equation}
            and
            \begin{equation}\label{eq:dstardx}
                \rmd{\star_{\bg{g}}}\rmd x^i\ =\ -\bg{g}^{jk}\bg{\Gamma}_{jk}{}^i\,\vol{M}
                \ewith
                \vol{M}\ \coloneqq\ \tfrac{\sqrt{\det(\bg{g})}}{2}\eps_{ij}\rmd x^i\wedge\rmd x^j~.
            \end{equation}
        \end{subequations}
        Hence,~\eqref{eq:alternativeSymplecticStructure} defines a symplectic structure. It is then easily seen that, when $\iota:L\hookrightarrow T^*M$ is given by
        \begin{equation}\label{eq:incompressibilityFromLagrangian}
            \iota\,:\,x^i\ \mapsto\ (x^i,q_i)\ \coloneqq\ (x^i,v_i(x))~,
        \end{equation}
        where $v_i=v_i(x)$ are the components of the velocity (co-)vector field, the condition $\iota^*\varpi=0$ is equivalent to requiring the divergence-free constraint~\eqref{eq:generalNavierStokesIncompressibilityLocal}. Thus, we again obtain a Lagrangian submanifold $L$ of $T^*M$, and this time $L$ encodes the divergence-free constraint. 

        Moreover, using ${\star_{\bg{g}}}(\rmd x^i\wedge\rmd x^j)=\sqrt{\det(\bg{g})}\eps^{ij}$ and the volume form~\eqref{eq:dstardx} on $M$, we may rewrite the Monge--Amp{\`e}re form $\alpha$ defined in~\eqref{eq:MongeAmpereStructure2d} as
        \begin{equation}\label{eq:alternativeMAForm}
            \alpha\ =\ \tfrac12\bg{\nabla}q_i\wedge\bg{\nabla}q_j\wedge{\star_{\bg{g}}}(\rmd x^i\wedge\rmd x^j)-\hat f\,\vol{M}~.
        \end{equation}
        Again, using~\eqref{eq:dNablaq} together with
        \begin{subequations}\label{eq:closureAlpha}
            \begin{equation}
                \rmd{\star_{\bg{g}}(\rmd x^i\wedge\rmd x^j)}\ =\ 2\bg{g}^{kl}\bg{\Gamma}_{kl}{}^{[i}{\star_{\bg{g}}\rmd x^{j]}}+2\bg{g}^{k[i}\bg{\Gamma}_{kl}{}^{j]}{\star_{\bg{g}}\rmd x^l}
            \end{equation}
            and
            \begin{equation}
                \rmd x^k\wedge{\star_{\bg{g}}(\rmd x^i\wedge\rmd x^j)}\ =\ -2\bg{g}^{k[i}{\star_{\bg{g}}\rmd x^{j]}}
            \end{equation}
        \end{subequations}
        it is not too difficult to see that $\alpha$ is closed. In addition, the requirement that the pull-back of $\alpha$ under~\eqref{eq:incompressibilityFromLagrangian} vanishes is directly equivalent to the pressure equation~\eqref{eq:pressureLaplacian}, provided that we simultaneously demand that $\iota^*\varpi=0$. Notice that we also have $\alpha\wedge\varpi=0$ so that the pair $(\varpi,\alpha)$ is again a Monge--Amp{\`e}re structure. 

        \paragraph{Almost (para-)Hermitian structure.}
        We may now follow our discussion in \cref{sec:geometry2dFlows} and define an endomorphism $\hat\caJ$ of the tangent bundle of $T^*M$ by
        \begin{equation}\label{eq:alternativeDefinitionJ2d}
            \frac{\alpha}{\sqrt{|\hat f|}}\ \eqqcolon\ \hat\caJ\intprod\varpi
        \end{equation}
        under the assumption that $\hat f$ does not vanish. As before, $\hat\caJ$ is an almost complex structure when $\hat f>0$, an almost para-complex structure when $\hat f<0$, and integrable if and only if $\hat f$ is constant. As in \cref{sec:geometry2dFlows}, we can always find a differential two-form $\,\hat\caK$ of type $(1,1)$ with respect to $\hat\caJ$ such that $\,\hat\caK\wedge\varpi=0$, $\,\hat\caK\wedge(\hat\caJ\intprod\varpi)=0$, and $\,\hat\caK\wedge\,\hat\caK\neq0$. In particular, we choose
        \begin{equation}\label{eq:alternativeAlmostKaehlerForm2d}
            \,\hat\caK\ \coloneqq\ \sqrt{|\hat f|}\,\bg{\nabla}q_i\wedge\rmd x^i~,
        \end{equation}
        that is, the standard symplectic structure~\eqref{eq:MongeAmpereStructure2d} times the same function as in~\eqref{eq:almostKaehlerForm2d}. Importantly, the compatibility of $\,\hat\caK$ and $\hat\caJ$ again yields the metric~\eqref{eq:fluidMetric2d}. It should be noted that the pull-back of the standard symplectic structure $\omega=\bg{\nabla}q_i\wedge\rmd x^i$ from~\eqref{eq:MongeAmpereStructure2d} under~\eqref{eq:incompressibilityFromLagrangian} is $\iota^*\omega=\rmd v$ and thus, this vanishes if and only if the vorticity~\eqref{eq:vorticityAndStrain} is zero.

        \begin{remark}
            In conclusion, the Monge--Amp{\`e}re structure $(\varpi,\alpha)$, with $\varpi$ defined by~\eqref{eq:alternativeSymplecticStructure} and $\alpha$ written as~\eqref{eq:alternativeMAForm}, represents alternative means to describe two-dimensional incompressible fluids. Whilst the Monge--Amp{\`e}re structure~\eqref{eq:MongeAmpereStructure2d} yields manifestly the description of the fluid flow in terms of a stream function, the advantage of this alternative Monge--Amp{\`e}re structure is that with this choice,\footnote{The triple of differential two-forms $\alpha$, $\bg{\nabla}q_i\wedge\rmd x^i$, and $\bg{\nabla}q_i\wedge{\star_{\bg{g}}\rmd x^i}$ define for $\hat f>0$ what is known as an almost quaternionic Hermitian structure on $T^*M$ and for $\hat f<0$ an almost quaternionic para-Hermitian structure, respectively, with the two choices of Monge--Amp{\`ere} structure we have presented corresponding to picking specific points in the moduli space of such structures.} we can straightforwardly generalise our treatment to fluid flows in any dimension. Essentially, this is due to the fact that the conditions $\iota^*\varpi=0$ and $\iota^*\alpha=0$ with $\iota$ given by~\eqref{eq:incompressibilityFromLagrangian} (with $i=1,\ldots,m$) are equivalent to the divergence-free constraint and the pressure equation in any dimension. However, in $m>2$ dimensions, we leave the realm of symplectic geometry as we shall explain shortly.
        \end{remark}
        
        \begin{remark}
        	At this stage, it is worth noting how our above choices deviate from constructions used in previous works. It is clear that~\eqref{eq:alternativeSymplecticStructure} and~\eqref{eq:alternativeMAForm} are a covariantisation of the Monge--Amp{\`e}re structure in~\cite{Banos:2015aa,Roulstone:2009aa}, with~\eqref{eq:alternativeDefinitionJ2d} the corresponding almost (para-)complex structure. However, we are free to make a choice of differential two-form in~\eqref{eq:alternativeAlmostKaehlerForm2d}, which corresponds to a choice of almost (para)-Hermitian metric on $T^*M$. In particular,~\cite{Banos:2015aa} works with the non-degenerate bilinear form 
        	\begin{equation}\label{eq:oldLRMetric}
        		g_\alpha(X,Y)\ \coloneqq\ \frac{[(X\intprod\alpha)\wedge(Y\intprod\varpi)+(Y\intprod\alpha)\wedge(X\intprod\varpi)]\wedge\vol{M}}{\frac12\varpi^2}
        	\end{equation}
            for all $X,Y\in\frX(T^*M)$. As discussed in~\cite{Roulstone:2001aa,Kossowski:1992aa}, the third differential two-form may be defined by\footnote{Whilst we present these expressions in our notation, the literature only treats the Euclidean case.} $\sqrt{|\hat f|}\,g_\alpha(\hat\caJ X,Y)$ for all $X,Y\in\frX(T^*M)$ in this case. The pull-back of $g_\alpha$ via~\eqref{eq:incompressibilityFromLagrangian} is then simply the Hessian of $\psi$ without vorticity as a conformal factor, in contrast to~\eqref{eq:fluidMetric2dPullBack}, where the vorticity is made manifest. Whilst the presence of the vorticity prefactor is clearly significant in our context, our choice is far from ad-hoc, as it arises perhaps even more naturally from the underlying geometry than~\eqref{eq:oldLRMetric}. Note also that the metric~\eqref{eq:oldLRMetric} has been linked in~\cite{Donofrio:2023aa} to a metric occurring in the theory of optimal mass transport in which optimal maps are characterised by volume-maximising Lagrangian submanifolds.  
        \end{remark}

        \subsection{Higher symplectic manifolds}\label{sec:kpm}

        The appropriate notion for our purposes is that of higher symplectic geometry. Here, we shall be rather brief and merely summarise some of the key facts that are needed for our subsequent discussion. For more details, we refer the interested reader to~\cite{Cantrijn:1999aa,Baez:2010,Rogers:2011}.

        \paragraph{Higher symplectic vector spaces.}
        To begin with, let $V$ be a real vector space and $\varpi\in\bigwedge^{k+1}V^*$ a $(k+1)$-form. Then, $\varpi$ is called \uline{non-degenerate} if and only if the contraction map $V\rightarrow\bigwedge^kV^*$, given by $v\mapsto v\intprod\varpi$ for all $v\in V$, is injective. Generally, the contraction map is not surjective; for $k=1$, however, injectivity implies surjectivity by the rank--nullity theorem, and we obtain the identification $V\cong V^*$ in this case. We call the pair $(V,\varpi)$ with $\varpi\in\bigwedge^{k+1}V^*$ non-degenerate a \uline{$k$-plectic vector space}. When $k=1$, we recover the standard case of a symplectic vector space. 

        Furthermore, for $U\subseteq V$ a vector subspace of $V$, we define the $\ell$-th \uline{orthogonal complement} $U^{\perp,\ell}$ for $\ell=1,\ldots,k$ with respect to $\varpi$ by
        \begin{equation}
            U^{\perp,\ell}\ \coloneqq\ \{v\in V\,|\,v\intprod u_1\ldots\intprod u_\ell\intprod\varpi=0\mbox{ for all }u_1,\ldots,u_\ell\in U\}~.
        \end{equation}
        Whenever $U=U^{\perp,\ell}$ for some $\ell=1,\ldots,k$, we call the vector subspace $U$ an \uline{$\ell$-Lagrangian subspace} of $V$. For $k=1$, there are only $1$-Lagrangian subspaces (or simply Lagrangian subspaces), and they all have the same dimension $\frac12\dim(V)$. For $k>1$, $\ell$-Lagrangian subspaces may have different dimensions.
        
        \paragraph{Higher symplectic manifolds.}
        Let $M$ be a manifold and $\varpi\in\Omega^{k+1}(M)$ a differential $(k+1)$-form which is point-wise non-degenerate. Suppose also that $\varpi$ is closed. Such a manifold is called a \uline{$k$-plectic manifold}, and in this case $\varpi$ is referred to as a \uline{$k$-plectic structure}. A diffeomorphism on $M$ that preserves $\varpi$ is called a \uline{$k$-plectomorphism}. Furthermore, a submanifold $L\hookrightarrow M$ is called an $\ell$-Lagrangian submanifold of $M$ if and only if $TL=TL^{\perp,\ell}$ for some $\ell=1,\ldots,k$. Here, we have used the obvious notation
        \begin{equation}
            TL^{\perp,\ell}\ \coloneqq\ \bigcup_{p\in L}\{(p,X_p)\,|\,X_p\in(T_pL)^{\perp,\ell}\}~.
        \end{equation}

        \subsection{Higher Monge--Amp{\`e}re geometry of three-dimensional fluid flows}\label{sec:mag3df}

        Having introduced the notion of $k$-plectic manifolds, we can now make precise the description of higher-dimensional incompressible fluid flows. 

        \paragraph{Higher Monge--Amp{\`e}re structure.}
        In $m=3$ dimensions, the components~\eqref{eq:curvatureTensors} of the Riemann and Ricci curvature tensors are related by the identity
        \begin{equation}
        	\bg{R}_{ijk}{}^l\ =\ 2\bg{R}^l{}_{[i}\bg{g}_{j]k}-2\big(\bg{R}_{k[i}-\tfrac12\bg{R}\bg{g}_{k[i}\big)\delta_{j]}{}^l~,
        \end{equation}
        where $\bg{R}$ is the curvature scalar. Upon following our above discussion and setting
        \begin{subequations}
        	\begin{equation}\label{eq:PressureCurvature3d}
        		\hat f\ \coloneqq\ \tfrac12\big(\bg{\lap}_{\rm B}p+\bg{R}^{ij}q_iq_j\big)\,,
        	\end{equation}
        	we consider the pair of differential three-forms
        	\begin{equation}\label{eq:MongeAmpereStructure3d}
            	\begin{gathered}
                	\varpi\ \coloneqq\ \bg{\nabla}q_i\wedge{\star_{\bg{g}}}\rmd x^i~,
                	\\
                	\alpha\ \coloneqq\ \tfrac12\bg{\nabla}q_i\wedge\bg{\nabla}q_j\wedge{\star_{\bg{g}}}(\rmd x^i\wedge\rmd x^j)-\hat f\vol{M}
            	\end{gathered}
        	\end{equation}
            on $T^*M$, where the volume form on $M$ is now given by
            \begin{equation}
        	   \vol{M}\ \coloneqq\ \tfrac{\sqrt{\det(\bg{g})}}{3!}\eps_{ijk}\rmd x^i\wedge\rmd x^j\wedge\rmd x^k~.
            \end{equation} 
        \end{subequations}
        Again, $\varpi$ is non-degenerate and closed by virtue of~\eqref{eq:closureOmega} and so, $\varpi$ defines a 2-plectic structure on $T^*M$. The submanifold $\iota:L\hookrightarrow T^*M$ defined by $\iota^*\varpi=0$ with $\iota$ given by~\eqref{eq:incompressibilityFromLagrangian} with $i=1,2,3$ is a three-dimensional $2$-Lagrangian submanifold of the $2$-plectic manifold $(T^*M,\varpi)$. As discussed above, the conditions $\iota^*\varpi=0$ and $\iota^*\alpha=0$ are equivalent to the divergence-free constraint~\eqref{eq:generalNavierStokesIncompressibilityLocal} and the pressure equation~\eqref{eq:pressureLaplacian}, respectively. Furthermore, by virtue of~\eqref{eq:dNablaq} and~\eqref{eq:closureAlpha}, $\alpha$ is closed. It is also non-degenerate, so $(T^*M,\alpha)$ defines a $2$-plectic manifold. Note that~\eqref{eq:MongeAmpereStructure3d} can be understood as a covariantisation of what was given previously in~\cite{Roulstone:2009aa,Banos:2015aa}. Note also that, for
        \begin{equation}\label{eq:standardSymp3D}
        	\omega\ =\ \bg{\nabla}q_i\wedge\rmd x^i
        \end{equation}
        the standard symplectic structure on $T^*M$, $\varpi\wedge\omega=0$ and $\alpha\wedge\omega=0$, so $\varpi$ and $\alpha$ are both Monge--Amp{\`e}re forms for $\omega$.

        Importantly, the formulation~\eqref{eq:MongeAmpereStructure3d} makes it transparent that this construction works in any dimension $m>1$. Indeed, we simply need to take the appropriate volume form in $\alpha$ and the function $\hat f$ is the same in any dimension. The pull-backs $\iota^*\varpi=0$ and $\iota^*\alpha=0$  then yield the divergence-free constraint and the pressure equation. Furthermore, $\varpi$ is $(m-1)$-plectic, and it defines an $m$-dimensional $(m-1)$-Lagrangian submanifold for general $m$. However, in general, whilst $\alpha$ is $(m-1)$-plectic, it may not define an $m$-dimensional $(m-1)$-Lagrangian submanifold. It should also be noted that $\alpha\wedge\varpi$ vanishes if and only if $m\neq 3$.

        \paragraph{Almost (para-)Hermitian structure.}
        Next, we wish to generalise the relation~\eqref{eq:alternativeDefinitionJ2d}. To this end, we use the results of~\cite{Hitchin:2000jd}. In particular, we note that the there is a isomorphism $\Omega^5(T^*M)\cong\frX(T^*M)\otimes\Omega^6(T^*M)$ that is induced by the natural exterior product pairing $\Omega^1(T^*M)\otimes\Omega^5(T^*M)\rightarrow\Omega^6(T^*M)$.\footnote{Explicitly, $\phi:\Omega^5(T^*M)\rightarrow\frX(T^*M)\otimes\Omega^6(T^*M)$ is given by $\phi(\rho)(\lambda,X_1,\ldots,X_6)\coloneqq X_1\intprod\ldots\intprod X_6\intprod(\rho\wedge\lambda)$ for all $\rho\in\Omega^5(T^*M)$, $\lambda\in\Omega^1(T^*M)$, and $X_1,\ldots,X_6\in\frX(T^*M)$.} Consequently, upon letting $\omega$ be the standard symplectic structure on $T^*M$ as in~\eqref{eq:standardSymp3D} and $\eps$ be the poly-vector field dual to the \uline{Liouville volume form} $\frac{1}{3!}\omega^3$ on $T^*M$, that is, $\eps\intprod\frac{1}{3!}\omega^3=1$, we may associate with the differential three-form $\alpha$ defined in~\eqref{eq:MongeAmpereStructure3d} the endomorphism
        \begin{equation}\label{eq:definitionJ3d}
            \hat\caJ X\ \coloneqq\ -\tfrac{1}{2\sqrt{|\hat f|}}\,\eps\intprod(\alpha\wedge X\intprod\alpha)
            \eforall
            X\ \in\ \frX(T^*M)
        \end{equation}
        under the assumption that $\hat f$ does not vanish. It then follows that $\hat\caJ$ is an almost complex structure on $T^*M$ when $\hat f>0$ and an almost para-complex structure when $\hat f<0$. 

        Furthermore, the differential two-form $\hat\caK$ defined in~\eqref{eq:alternativeAlmostKaehlerForm2d}, now with $i$ running from one to three, together with~\eqref{eq:definitionJ3d}, satisfies $\hat\caK(\hat\caJ X,Y)=-\hat\caK(X,\hat\caJ Y)$ for all $X,Y\in\frX(T^*M)$ and so, $\,\hat\caK$ of type $(1,1)$ with respect to~\eqref{eq:definitionJ3d}. Consequently, we can define an almost (para-)Hermitian metric $\hat g$ on $T^*M$ with respect to~\eqref{eq:definitionJ3d} by setting $\hat g(X,Y)\coloneqq\hat\caK(X,\hat\caJ Y)$ for all $X,Y\in\frX(T^*M)$. Explicitly,
        \begin{equation}\label{eq:fluidMetric3d}
            \hat g\ =\ \tfrac12\hat f\bg{g}_{ij}\rmd x^i\odot\rmd x^j+\tfrac12\bg{g}^{ij}\bg{\nabla}q_i\odot\bg{\nabla}q_j~.
        \end{equation} 
        Evidently, this metric is the direct generalisation of~\eqref{eq:fluidMetric2d}, and it is essentially a covariantisation that follows from a bilinear form introduced in~\cite{Lychagin:1993aa} (see also~\cite{Roulstone:2009aa}). This also justifies using the same letter $\hat\caJ$ in the definition~\eqref{eq:definitionJ3d} as it is a direct generalisation of~\eqref{eq:alternativeDefinitionJ2d}.

        \begin{remark}
            Recall that $\alpha$ defined in~\eqref{eq:MongeAmpereStructure3d} is closed. Furthermore, it can be taken as the imaginary part (under a quaternionic structure) of a differential (3,0)-form with respect to the almost (para-)complex structure $\hat\caJ$ defined in~\eqref{eq:definitionJ3d}.\footnote{Recall that a \uline{quaternionic structure} is an anti-linear endomorphism that squares to minus the identity. When $\hat\caJ$ is an almost complex structure, this quaternionic structure is simply complex conjugation.} In addition, since $\hat\caK$ defined in~\eqref{eq:alternativeAlmostKaehlerForm2d} is simply a rescaling by $\sqrt{|\hat f|}$ of the standard symplectic structure~\eqref{eq:standardSymp3D}, we conclude that also the standard symplectic structure is a differential $(1,1)$-form with respect to $\hat\caJ$. Hence, the tuple 
            \begin{equation}
                \left(\tfrac{1}{\sqrt{|\hat f|}}\hat\caK,\hat\caJ,\tfrac{1}{\sqrt{|\hat f|}}\hat g,\alpha\right)
            \end{equation}
            defines what is known as a \uline{nearly (para-)Calabi--Yau structure}~\cite{Xu:2006aa,Xu:2008aa}.
        \end{remark}

        \begin{remark}\label{rmk:dimensionalReduction}
            We can make the relationship between~\eqref{eq:alternativeDefinitionJ2d} and~\eqref{eq:definitionJ3d} more explicit. To make a notational distinction between the dimensions $m=2$ and $m=3$, we shall write $(M^m,\bg{g}_m)$ as well as $\varpi_m$ and $\alpha_m$ for~\eqref{eq:MongeAmpereStructure3d}, $\hat\caJ_m$ for~\eqref{eq:alternativeDefinitionJ2d} and~\eqref{eq:definitionJ3d}, and $\omega_m$ for the standard symplectic structure. In addition, we let $\eps_m$ be the poly-vector field dual to Liouville volume form on $T^*M^m$ with respect to $\omega_m$. 

            Firstly, it is not too difficult to see that~\eqref{eq:alternativeDefinitionJ2d} can be rewritten as
            \begin{equation}\label{eq:definitionJ2dRewritten}
                \hat\caJ_2X\ =\ \tfrac{1}{\sqrt{|\hat f|}}\,\eps_2\intprod(\varpi_2\wedge X\intprod\alpha_2)
                \eforall
                X\ \in\ \frX(T^*M^2)~.
            \end{equation}
            Note that $\omega_2\wedge\omega_2=\varpi_2\wedge\varpi_2$ and so, $\eps_2$ is also dual to the Liouville volume form with respect to $\varpi_2$. Next, let us assume that $M^3$ factorises as $M^3=M^2\times N$ with $N$ a one-dimensional manifold, and we take 
            \begin{equation}
                \bg{g}_3\ =\ \bg{g}_2+\rmd x^3\otimes\rmd x^3~,
            \end{equation}
            as the metric on $M^3$ with $\bg{g}_2$ a metric on $M^2$ and $x^3$ a local coordinate coordinate on $N$. Assuming that $p\in\scC^\infty(M^2)$, a short calculation then reveals that
            \begin{equation}
                \varpi_3\ =\ \varpi_2\wedge\rmd x^3+\vol{M^2}\wedge\rmd q_3
                \eand
                \alpha_3\ =\ \alpha_2\wedge\rmd x^3+\varpi_2\wedge\rmd q_3~.
            \end{equation}
            The decomposition for $\alpha_3$ and the effectiveness $\alpha_2\wedge\varpi_2=0$ imply that 
            \begin{equation}
            	\alpha_3\wedge(X\intprod\alpha_3)\ =\ -2(\varpi_2\wedge X\intprod\alpha_2)\,\rmd q_3\wedge\rmd x^3
            \end{equation}
            for all $X\in\frX(T^*M^2)$.\footnote{The horizontal lift of $X$ to $T^*M^3$ is trivial because of the assumed form of the metric $\bg{g}_3$.} Since $\eps_3=\eps_2\wedge\parder{x^3}\wedge\parder{q_3}$, this then yields 
			\begin{equation}            
            	\eps_3\intprod(\alpha_3\wedge X\intprod\alpha_3)\ =\ -2\eps_2\intprod(\varpi_2\wedge X\intprod\alpha_2)~.
            \end{equation}
            Consequently, combining this result with~\eqref{eq:definitionJ3d} and~\eqref{eq:definitionJ2dRewritten}, we finally obtain
            \begin{equation}
                \hat\caJ_3|_{M^2}\ =\ \hat\caJ_2~.
            \end{equation}
        \end{remark}

        \begin{remark}
            The metrics~\eqref{eq:fluidMetric2d} and~\eqref{eq:fluidMetric3d} on $T^*M$ are in spirit of the \uline{rescaled Sasaki metrics} studied e.g.~in~\cite{Wang:2011aa}. The main difference here is that our $\hat f$ is a function on $T^*M$ rather than on $M$. This results in a metric~\eqref{eq:fluidMetric3d} which is allowed to change type across $T^*M$. Furthermore, earlier work~\cite{Gezer:2014aa} has focused on constructing almost para-Nordenian manifolds in the case $\hat f>0$, preferentially selecting a structure which is almost para-complex~\cite{Cruceanu:1993aa}, as opposed to almost complex.
        \end{remark}

        Before moving on, we conclude with some remarks concerning the curvatures of the metric~\eqref{eq:fluidMetric3d} on $T^*M$ and its pull-back to $L$, as well as a connection with helicity.

        \paragraph{Curvature.}
        In view of our later applications, let us state the curvature scalar for the metric~\eqref{eq:fluidMetric3d}. The following is  derived in \cref{app:phaseMetricCurvature} and holds in any dimension. In particular, we have
        \begin{equation}\label{eq:curvatureScalarFluidMetric}
            \begin{aligned}
                \hat R\ &=\ \frac{1}{\hat f}\bg{R}-\frac{1}{4\hat f^2}\bg{R}_{ijk}{}^l\bg{R}^{ijkm}q_kq_m-(m-1)\hat\lap_{\rm B}\log(|\hat f|)-\bg{g}_{ij}\parder[^2]{q_i\partial q_j}\log(|\hat f|)
                \\
                &\kern.6cm+\frac{1}{4\hat f}(m-1)(m-2)\bg{g}^{ij}\left(\parder{x^i}+\bg{\Gamma}_{ik}{}^lq_l\parder{q_k}\right)\log(|\hat f|)\left(\parder{x^j}+\bg{\Gamma}_{jm}{}^nq_n\parder{q_m}\right)\log(|\hat f|)
                \\
                &\kern1.1cm+\frac14m(m-3)\bg{g}_{ij}\parder{q_i}\log(|\hat f|)\parder{q_j}\log(|\hat f|)~,
            \end{aligned}
        \end{equation}
        where $\hat\lap_{\rm B}$ is the Beltrami Laplacian for $\hat g$. The occurrence of the term $\hat\lap_{\rm B}\log(|\hat f|)$ again suggests that the accumumlation of $\hat f$ will determine the sign of the scalar curvature, as it does in the two-dimensional case. 

        \paragraph{Pull-back metric.}
        It is a straightforward exercise to check that the pull-back $g\coloneqq\iota^*\hat g$ of~\eqref{eq:fluidMetric3d} to the 2-Lagrangian submanifold $L$ via $\iota$ given by~\eqref{eq:incompressibilityFromLagrangian} with $i=1,2,3$ is
        \begin{equation}\label{eq:fluidMetric3dPullBack}
            g\ =\ \tfrac12g_{ij}\rmd x^i\odot\rmd x^j
            \ewith
            g_{ij}\ \coloneqq\ A^k{}_iA_{kj}-\tfrac12{\bg g}_{ij}A_{kl}A^{lk}~.
        \end{equation}
        Here, we have made use of the \uline{velocity-gradient tensor} $A_{ij}\coloneqq\bg\nabla_j v_i$ and noted that 
        \begin{equation}\label{eq:pullBackHatf3D}
            f\ \coloneqq\ \iota^*\hat f\ =\ \tfrac12(\zeta_{ij}\zeta^{ij}-S_{ij}S^{ij})\ =\ -\tfrac12A_{ij}A^{ji}
        \end{equation}
        with $\zeta_{ij}$ the vorticity two-form and $S_{ij}$ the rate-of-strain tensor introduced in~\eqref{eq:vorticityAndStrain}, and the indices on $\zeta_{ij}$, $S_{ij}$, and $A_{ij}$ raised with the background metric. Again, as in the two-dimensional case, the pull-back metric is a quadratic function of the velocity gradient tensor and curvature will be generated by gradients of vorticity and rate-of-strain. 

        \paragraph{Helicity.}
		In two dimensions, we utilised the local Gau{\ss}--Bonnet theorem~\eqref{eq:localGaussBonnet} in order to relate the geometry of fluid flows, as described by the curvature scalar~\eqref{eq:curvature2D}, to a topological invariant, namely the Euler characteristic of a given compact region. In three dimensions, it quickly becomes apparent that this is not a suitable approach and that we require an alternative topological quantity.

		Recall that the pull-back of the standard symplectic form $\omega$ seen in~\eqref{eq:standardSymp3D}, under~\eqref{eq:incompressibilityFromLagrangian}, is $\iota^*\omega=\rmd v=\zeta_{ij}\rmd x^i\wedge\rmd x^j$ with $\zeta_{ij}$ the vorticity. The pull-back of the associated tautological one-form $\theta\coloneqq q_i\rmd x^i$ is simply $v=v_i\rmd x^i$. It then follows that
        \begin{equation}\label{eq:helicity}
            \iota^*(\theta\wedge\rmd\theta)\ =\ v_i\zeta^i\,\vol{M^3}
            \ewith
            \zeta^i\ \coloneqq\ \sqrt{\det(\bg{g}_3)}\eps^{ijk}\zeta_{jk}
        \end{equation}
        the vorticity in three dimensions, derived from~\eqref{eq:vorticityAndStrain}. Integrals of quantities of the form~\eqref{eq:helicity}, over a compact region $U\subseteq M^3$, are referred to as \uline{helicity}~\cite{Moffatt:1969aa,Woltjer:1958aa}. Hence, in our context, $v_i\zeta^i$ may be referred to as the \uline{helicity per volume}.
		
		Consider an inviscid, incompressible fluid, with kinematics described by the Euler equations, on a compact region $U\subseteq M^3$. Suppose also that $U$ describes the volume contained inside a closed orientable surface, which is moving with the fluid and has (outward) unit normal $n$ with components denoted $n_i$. It is shown\footnote{In the context of magneto-hydrodynamics, the analogous result was presented in~\cite{Woltjer:1958aa}.} in~\cite{Moffatt:1969aa} that, provided the distribution of vorticity is local and continuous, and $n_i\zeta^i=0$, then the integral of~\eqref{eq:helicity} is an invariant of the Euler equations and the vorticity field within the volume is conserved. Furthermore, it is shown that for discrete vortex filaments, this quantity can be associated\footnote{We point the interested reader towards~\cite{Moffatt:1992aa,Ricca:1992aa} for elaboration on these associations.} with the topological invariants given by the \uline{Gau{\ss} linking number} and \uline{C\u{a}lug\u{a}reanu invariant}~\cite{Calugareanu:1959aa,Calugareanu:1961aa}. In~\cite{Whitehead:1947aa} it was also shown that helicities are isotopy invariants of their volume. Perhaps more significantly, a recent work~\cite{Liu:2012aa} has managed to demonstrate that, in ideal conditions, helicity-type quantities can be reinterpreted as Abelian Chern--Simons actions and hence can be related to the Jones polynomial.
		
		In addition to the interpretation of the pull-backs of~\eqref{eq:MongeAmpereStructure3d} under~\eqref{eq:incompressibilityFromLagrangian} as the divergence-free constraint and the pressure equation, we now also have that the corresponding pull-back of the standard symplectic form encodes the helicity. Additionally, previous work relating helicity to various topological invariants suggests that, as in two dimensions, one can relate the topology of fluid flows to our geometric constructions.

        \subsection{Examples in three dimensions}\label{sec:eg3d}

        In this section, we adopt the notation $x\coloneqq x^1$, $y\coloneqq x^2$, and $z\coloneqq x^3$ and consider some classical examples of flows in $\IR^3$ with background metric $\bg{g}_{ij}=\delta_{ij}$.

        \paragraph{Preliminaries.}
        Recall the expressions of the vorticity two-form and the rate-of-strain tensor defined in~\eqref{eq:vorticityAndStrain}. In view of our discussion below, we set 
    	\begin{subequations}
            \begin{equation}
                \zeta\ \eqqcolon\
                \begin{pmatrix}
                    0 & \zeta_3 & -\zeta_2
                    \\
                    -\zeta_3 & 0 & \zeta_1
                    \\
                    \zeta_2 & -\zeta_1 & 0
                \end{pmatrix}
                \eand
                S\ \eqqcolon\
                \begin{pmatrix}
        			\alpha & \sigma_3 & \sigma_2
                    \\
        			\sigma_3 & \beta & \sigma_1
                    \\
        			\sigma_2 & \sigma_1 & \gamma		
        		\end{pmatrix}
            \end{equation}
            and introduce the velocity-gradient matrix
            \begin{equation}\label{eq:VGT}
                A\ \coloneqq\ S-\zeta\ =\
                \begin{pmatrix}
                    \alpha & \sigma_3 - \zeta_3 & \sigma_2+\zeta_2
                    \\
                    \sigma_3+\zeta_3 & \beta & \sigma_1-\zeta_1
                    \\
                    \sigma_2-\zeta_2 & \sigma_1 + \zeta_1 & \gamma
                \end{pmatrix}.
        	\end{equation}
        \end{subequations}
        Furthermore, the metric~\eqref{eq:fluidMetric3d} in the then takes the form
        \begin{equation}
        	\hat g\ =\ 
            \begin{pmatrix}
                \frac12\lap p\unit_3 & 0
                \\
                0 & \unit_3
            \end{pmatrix}
            \ewith
            \lap p\ =\ -\tr(A^2)~.
        \end{equation}
        It now follows that the pull-back metric~\eqref{eq:fluidMetric3dPullBack} is 
        \begin{equation}\label{eq:fluidMetric3dPullBackFlat}
        	g\ =\ A^\sfT A-\tfrac12\tr(A^2)\unit_3~.
        \end{equation}
        Whilst it is now possible to substitute~\eqref{eq:VGT} into~\eqref{eq:fluidMetric3dPullBackFlat}, the result would not be particularly helpful. To see the structure of the pull-back metric a little more clearly, we next consider Burgers' canonical model of the vortex, for which the velocity-gradient matrix takes a relatively simple form, and which in turn motivates our work on higher symplectic reduction. Studying this example will show how the signature of the metric depends on relationships between vorticity and rate-of-strain.
    
        \paragraph{Burgers' vortex.}
        Earlier works~\cite{Roulstone:2009bb,Banos:2015aa} considered a class of solutions to the three-dimensional incompressible Euler and Navier--Stokes equations, with Euclidean background metric, that can be reduced to solutions to the incompressible equations in two dimensions via the \uline{Lundgren transformation}~\cite{Lundgren:1982aa}. Such solutions, which take the form~\cite{Ohkitani:2000aa}
		\begin{equation}\label{eq:flow3d}
            (\dot x,\dot y,\dot z)\ \coloneqq\ \big(v_x(t,x,y),v_y(t,x,y),z\phi(t,x,y) + W(t,x,y)\big)
		\end{equation}
        for some functions $\phi$ and $W$, where the superposed dot refers to the derivative with respect to the time $t$, are often referred to as \uline{two-and-a-half-dimensional flows}~\cite{Gibbon:1999aa}. In particular, a geometric description of Burgers' vortex~\cite{Burgers:1948aa} has been presented through this lens. 

        Consider the following idealised \uline{Burgers' vortex}~\cite{Burgers:1948aa} with the velocity components be given by
        \begin{equation}
        	u\ =\ \alpha x+(\sigma_3-\zeta_3)y~,
            \quad
        	v\ =\ \beta y+(\sigma_3+\zeta_3)x~,
            \eand
        	w\ =\ \gamma z
        \end{equation}
        with $\alpha$, $\beta$, $\gamma$, $\sigma_3$, and $\zeta_3$ constant in space. Then, the divergence-free constraint is given by $\alpha+\beta+\gamma=0$. In particular, we have chosen $\phi=\gamma(t)$ and $W\equiv 0$ in~\eqref{eq:flow3d}.
       
        Next, the velocity-gradient matrix is simply
        \begin{equation}\label{eq:BurgersVGTForm}
            A\ =\
            \begin{pmatrix}
                \alpha & \sigma_3-\zeta_3 & 0
                \\
                \sigma_3+\zeta_3 & \beta & 0
                \\
                0 & 0 & \gamma
            \end{pmatrix},
        \end{equation}
        and using the divergence-free constraint, it follows that
        \begin{equation}\label{eq:BurgersPressure}
	       \tfrac12\lap p\ =\ -\tfrac12\tr(A^2)\ =\ \alpha\beta+\gamma(\alpha+\beta)+\zeta_3^2-\sigma_3^2~.
        \end{equation}
        We can now deduce that the pull-back metric~\eqref{eq:fluidMetric3dPullBackFlat} is 
        \begin{equation}\label{eq:BurgersMetric}
        	g\ =\
            \begin{pmatrix}
                \gamma\beta+2\zeta_3(\sigma_3+\zeta_3) & \alpha(\sigma_3-\zeta_3)+\beta(\sigma_3+\zeta_3) & 0
                \\
                \alpha(\sigma_3-\zeta_3)+\beta(\sigma_3+\zeta_3) & \gamma\alpha+2\zeta_3(\zeta_3-\sigma_3) & 0
                \\
                0 & 0 & \alpha\beta+(\zeta_3-\sigma_3)(\zeta_3+\sigma_3)
        	\end{pmatrix}
        \end{equation}
        which has eigenvalues 
        \begin{equation}
        	\begin{gathered}
        		E_{\pm}\ =\ \tfrac12\Big\{4\zeta_3^2-\gamma^2\pm\sqrt{\gamma^2(\alpha-\beta)^2+4\big[4\sigma_3^2\zeta_3^2+(\alpha+\beta)^2\sigma_3^2 +(\alpha-\beta)^2\zeta_3^2\big]}\Big\}\,, 
                \\
        		E_3\ =\ \alpha\beta-\sigma_3^2+\zeta_3^2~.
        	\end{gathered}
        \end{equation}
        The top left $(2\times 2)$-block of~\eqref{eq:BurgersMetric}, with $\gamma=0$ is precisely the pull-back metric of an incompressible two-dimensional flow with velocity-gradient matrix given by the top left $(2\times 2)$-block of~\eqref{eq:BurgersVGTForm}, where $\gamma=-(\alpha+\beta)=0$. Hence, setting $\gamma\neq 0$ produces compressible two-dimensional flows, for example Burgers' vortex after reduction as described in~\cite{Banos:2015aa}. It follows that when $\lap p>0$, $E_3>0$. Furthermore, if we assume axi-symmetry by setting $\alpha=\beta=-\tfrac12 \gamma$, then with $\lap p>0$ and $E_3>0$, we have $E_+>0$, while $E_-$ is bounded below by $-\gamma^2$. Further investigation of such criteria might facilitate a classification of conditions under which accumulations of vorticity could constitute `a vortex'. 

        With~\cite{Banos:2015aa} in mind, we now show how higher symplectic geometry and reductions thereof, provide a mechanism for formulating the Monge--Amp{\`e}re geometry of certain exact solutions to the incompressible Navier--Stokes equations in three dimensions. 

        \subsection{Higher symplectic reductions}\label{sec:kprs}

        In the following, we wish to study dimensional reductions from three to two dimensions. In particular, we shall focus on symplectic and higher symplectic reductions. This will enable us to study fluid flows in three dimensions with symmetries that eventually can be analysed as effective two-dimensional problems. As we shall explain, symplectic and higher symplectic reductions yield, to the extent in which we are interested in this paper, essentially the same geometric information in two dimensions; however, the higher symplectic reduction will yield the two-dimensional problem directly in terms of a stream function, thus resolving the lower-dimensional would-be divergence-free constraint automatically. Before analysing examples in \cref{sec:examplesReduction}, including the Arnol'd--Beltrami--Childress flow and Hicks--Moffatt-type vortices, let us set the stage. In particular, we first recall the \uline{Marsden--Weinstein reduction process}~\cite{Marsden:1974aa,Meyer:1973aa}, a well-known tool from symplectic geometry, for reducing spaces with symmetries. Concretely, this reduction process can be summarised as follows.
        
        \begin{theorem}\label{thm:marsdenWeinstein}
            Let $(M,\omega)$ be a symplectic manifold. Suppose that $\sfG$ is a Lie group acting by symplectomorphisms on $(M,\omega)$. Let $\mu:M\rightarrow\frg^*$ be the moment map for this action with $\frg$ the Lie algebra of $\sfG$. Furthermore, let $c\in\frg^*$ be a regular value of $\mu$ and $\sfG_c\subseteq\sfG$ the (coadjoint) stabiliser group of $c$. We assume that $\sfG_c$ acts freely and properly on $\mu^{-1}(\{c\})$. Set $M_c\coloneqq\mu^{-1}(\{c\})/\sfG_c$ and consider,
            \begin{equation}
                \begin{tikzcd}
                    \mu^{-1}(\{c\}) \arrow[r, hook, "\fri"]\arrow[d, two heads, "\frp"] & M
                    \\
                    M_c &
                \end{tikzcd}
            \end{equation}
            Then, there exists a unique symplectic structure $\omega_c$ on $M_c$ such that $\frp^*\omega_c=\fri^*\omega$.
        \end{theorem}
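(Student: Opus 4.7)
My plan is to follow the standard four-stage proof of the Marsden--Weinstein reduction theorem, emphasising the underlying linear-algebra fact that on a symplectic vector space the $\omega$-orthogonal of a coisotropic subspace carries a natural symplectic form after quotienting by its radical.

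First, I would establish the smoothness of the quotient. Since $c$ is a regular value of $\mu$, the level set $C\coloneqq\mu^{-1}(\{c\})$ is an embedded submanifold of $M$ of codimension $\dim\frg$. Equivariance of the moment map with respect to the $\sfG$-action and the coadjoint action gives $\mu(g\cdot p)=\Ad^*_{g^{-1}}\mu(p)$, so $\sfG_c$ preserves $C$, and by hypothesis it acts freely and properly there. Hence $M_c\coloneqq C/\sfG_c$ inherits a smooth manifold structure for which $\frp:C\twoheadrightarrow M_c$ is a surjective submersion, and $\fri:C\hookrightarrow M$ is an embedding.

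Next, I would construct $\omega_c$ by showing that $\fri^*\omega$ descends to the quotient. Descent requires $\fri^*\omega$ to be $\sfG_c$-invariant, which is immediate as $\sfG$ acts by symplectomorphisms, and basic with respect to the fibres of $\frp$. For the latter, if $\xi\in\frg_c$ with fundamental vector field $\xi_M$, the defining identity $\xi_M\intprod\omega=\rmd\langle\mu,\xi\rangle$ implies, for any $v\in T_pC=\ker\rmd\mu_p$, that $\omega(\xi_M(p),v)=\langle\rmd\mu_p(v),\xi\rangle=0$. Thus $\fri^*\omega$ annihilates the vertical distribution of $\frp$, so there is a unique two-form $\omega_c$ on $M_c$ with $\frp^*\omega_c=\fri^*\omega$, and this uniqueness also gives the uniqueness statement of the theorem. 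Closedness follows at once: $\frp^*\rmd\omega_c=\rmd\fri^*\omega=\fri^*\rmd\omega=0$, and surjectivity of $\frp_*$ implies $\rmd\omega_c=0$.

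The main obstacle, and the step I would carry out most carefully, is non-degeneracy of $\omega_c$. Pointwise, let $p\in C$, and set $U\coloneqq T_p(\sfG\cdot p)$ and $W\coloneqq T_pC$. The moment map identity $\omega(\xi_M(p),\cdot)=\rmd\langle\mu,\xi\rangle|_p$ shows that $W=U^{\perp_\omega}$, so $W$ is coisotropic. The radical of $\omega|_W$ is
\begin{equation}
    W\cap W^{\perp_\omega}\ =\ U^{\perp_\omega}\cap U~.
\end{equation}
By equivariance of $\mu$, a fundamental vector field $\xi_M(p)$ lies in $W$ if and only if $\Ad^*_{\exp(t\xi)}c=c$ for all $t$, i.e.\ $\xi\in\frg_c$. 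Hence $U\cap W=T_p(\sfG_c\cdot p)$, which is precisely the vertical distribution of $\frp$. Quotienting by the kernel of $\frp_*$ therefore produces a non-degenerate bilinear form on $T_{\frp(p)}M_c$, and by construction this agrees with $\omega_c$. Since non-degeneracy is pointwise, $\omega_c$ is symplectic on $M_c$, completing the proof.
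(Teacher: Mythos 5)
The paper does not prove this theorem: it is recalled as a classical result and attributed to the literature (Marsden--Weinstein, Meyer), so there is no in-paper argument to compare against. Your proposal is the standard proof and is correct in substance. Regularity of $c$ together with the free and proper $\sfG_c$-action gives the smooth quotient with $\frp$ a surjective submersion; the identity $\xi_M\intprod\omega=\rmd\langle\mu,\xi\rangle$ shows that $\fri^*\omega$ annihilates the vertical distribution, which combined with $\sfG_c$-invariance makes it basic, yielding a unique closed $\omega_c$ with $\frp^*\omega_c=\fri^*\omega$; and the pointwise identification of the radical of $\fri^*\omega$ with $\ker\rmd\frp$ gives non-degeneracy downstairs.

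One slip is worth correcting. From $T_pC=\big(T_p(\sfG\cdot p)\big)^{\perp_\omega}$ you conclude that $T_pC$ is coisotropic. That is false in general: coisotropy of $W=U^{\perp_\omega}$ is equivalent to $U\subseteq W$, i.e.\ to $\rmd\mu_p(\xi_M(p))=\tfrac{\rmd}{\rmd t}\big|_{t=0}\Ad^*_{\exp(-t\xi)}c=0$ for all $\xi\in\frg$, which holds only when $c$ is fixed by the coadjoint action (e.g.\ $c=0$ or $\sfG$ abelian). For general $c$ the level set is not coisotropic --- this is precisely why one quotients by $\sfG_c$ rather than by all of $\sfG$. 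Fortunately nothing downstream uses coisotropy: the computation of the radical $W\cap W^{\perp_\omega}=U^{\perp_\omega}\cap U=T_p(\sfG_c\cdot p)$ is valid for an arbitrary subspace and is exactly what the non-degeneracy argument needs, so the proof stands once that one sentence is removed.
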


        To discuss symmetry reductions of higher-dimensional fluid flows which are described directly in terms of higher Monge--Amp{\`e}re structures, we would like to generalise this result to the higher symplectic geometry summarised in \cref{sec:geometricProperties3d}. Fortunately for us, \cref{thm:marsdenWeinstein} has been generalised to the $k$-plectic case rather recently as follows~\cite{Blacker:2021aa}.
        
        \begin{theorem}\label{thm:generalisedMarsdenWeinstein}
            Let $(M,\varpi)$ be a $k$-plectic manifold. Suppose that $\sfG$ is a Lie group acting by $k$-plectomorphisms on $(M,\varpi)$. Let $\mu:M\rightarrow\bigwedge^{k-1}T^*M\otimes\frg^*$ be the moment map for this action with $\frg$ the Lie algebra of $\sfG$. Furthermore, let $c\in\Omega^{k-1}(M,\frg^*)$ be closed and define
            \begin{equation}
                \begin{aligned}
                    \mu^{-1}(\{c\})\ &\coloneqq\ \{x\in M\,|\,\mu(x)=c_x\}~,
                    \\
                    \sfG_c\ &\coloneqq\ \big\{g\in\sfG\,\big|\,g^{-1}_*X_1\intprod\ldots\intprod g^{-1}_*X_{k-1}\intprod \Ad^*_g c_{g^{-1}x}=X_1\intprod\ldots\intprod X_{k-1}\intprod c_x
                    \\
                    &\kern1.5cm\text{ for all }x\in M\text{ and for all }X_1,\ldots,X_{k-1}\in T_xM\big\}~.
                \end{aligned}
            \end{equation}
            Suppose that $\mu^{-1}(\{c\})$ is an embedded submanifold of $M$ and that $\sfG_c$ acts freely and properly on $\mu^{-1}(\{c\})$. Set $M_c\coloneqq\mu^{-1}(\{c\})/\sfG_c$ and consider,
            \begin{equation}
                \begin{tikzcd}
                    \mu^{-1}(\{c\}) \arrow[r, hook, "\fri"]\arrow[d, two heads, "\frp"] & M
                    \\
                    M_c &
                \end{tikzcd}
            \end{equation}
            Then, there exists a unique closed differential form $\varpi_c\in\Omega^{k+1}(M_c)$ on $M_c$ such that $\frp^*\varpi_c=\fri^*\varpi$.
        \end{theorem}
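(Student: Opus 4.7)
The plan is to adapt the classical Marsden--Weinstein argument underlying \cref{thm:marsdenWeinstein} to the higher symplectic setting. The existence and uniqueness of $\varpi_c$ reduce to showing that $\fri^*\varpi$ descends along the principal $\sfG_c$-bundle $\frp : \mu^{-1}(\{c\}) \to M_c$, which in turn amounts to verifying that $\fri^*\varpi$ is both $\sfG_c$-invariant and $\sfG_c$-horizontal, i.e.\ basic.

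Invariance is essentially immediate: because $\sfG$ acts by $k$-plectomorphisms, $\varpi$ is $\sfG$-invariant and hence $\sfG_c$-invariant; the defining stabiliser condition ensures that each element of $\sfG_c$ preserves the level condition $\mu(x) = c_x$, so $\sfG_c$ acts on $\mu^{-1}(\{c\})$ and $\fri^*\varpi$ inherits $\sfG_c$-invariance by functoriality of pull-back.

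The technical heart of the argument lies in the horizontality step, where I expect the main obstacle to arise. For $\xi \in \frg_c$ with fundamental vector field $\xi_M$, the $k$-plectic moment map condition reads $\xi_M \intprod \varpi = \rmd \mu(\xi)$, with $\mu(\xi) \in \Omega^{k-1}(M)$ the pairing of $\mu$ with $\xi$. Pulling back along $\fri$ and substituting $\mu = c$ on $\mu^{-1}(\{c\})$ gives $\fri^*(\xi_M \intprod \varpi) = \fri^* \rmd c(\xi)$. The defining condition of $\sfG_c$ is tailored precisely so that its infinitesimal counterpart, combined with closedness of $c$ as a $\frg^*$-valued form and the Cartan identity $\caL_{\xi_M} = \rmd \iota_{\xi_M} + \iota_{\xi_M}\rmd$, forces the relevant contractions of $\fri^*(\xi_M \intprod \varpi)$ with tangent vectors to $\mu^{-1}(\{c\})$ to vanish. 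Unpacking this interplay between the coadjoint action, the Cartan calculus, and the value of $c$ is the delicate point; it is also where the discussion genuinely departs from the classical symplectic case, since one must track several additional wedge slots relative to the standard argument.

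Once horizontality is in hand, the standard principal-bundle descent lemma furnishes a unique $\varpi_c \in \Omega^{k+1}(M_c)$ with $\frp^*\varpi_c = \fri^*\varpi$. Closedness follows at once from
\begin{equation*}
    \frp^* \rmd \varpi_c \ =\ \rmd \frp^*\varpi_c \ =\ \rmd \fri^*\varpi \ =\ \fri^* \rmd\varpi \ =\ 0\,,
\end{equation*}
combined with the injectivity of $\frp^*$ on differential forms for the surjective submersion $\frp$. I note that the theorem does \emph{not} claim non-degeneracy of $\varpi_c$; this is a well-known subtlety of higher symplectic reduction which requires additional hypotheses and is properly sidestepped here.
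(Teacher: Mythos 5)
The paper does not prove this theorem at all: it is quoted verbatim as a result of the cited reference (Blacker, 2021), with the reader directed there for details. So there is no in-paper argument to compare against; your proposal has to be judged on its own terms, and on those terms it is a correct outline of the standard descent argument, which is indeed how the cited reference proceeds. One remark: the step you flag as the ``technical heart'' is in fact the easiest part. Horizontality of $\fri^*\varpi$ means $\xi_M\intprod\fri^*\varpi=\fri^*(\xi_M\intprod\varpi)=0$ for every $\xi$ in the Lie algebra of $\sfG_c$, and this follows in one line from the moment map condition together with closedness of $c$:
\begin{equation*}
    \fri^*(\xi_M\intprod\varpi)\ =\ \fri^*\rmd\langle\mu,\xi\rangle\ =\ \rmd\,\fri^*\langle\mu,\xi\rangle\ =\ \rmd\,\fri^*\langle c,\xi\rangle\ =\ \fri^*\langle\rmd c,\xi\rangle\ =\ 0\,,
\end{equation*}
where the third equality uses that $\mu=c$ pointwise on the level set. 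No Cartan-formula or coadjoint-action gymnastics is needed here; the role of the stabiliser condition defining $\sfG_c$ is only to guarantee (via equivariance of $\mu$) that $\sfG_c$ preserves the level set and that $\fri^*\varpi$ is $\sfG_c$-invariant, as you say. With horizontality and invariance in hand, the principal-bundle descent lemma and the injectivity of $\frp^*$ give existence, uniqueness, and closedness exactly as you describe, and you are right that non-degeneracy of $\varpi_c$ is not claimed and genuinely can fail for $k>1$.
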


        \noindent
        Evidently, for $k=1$ this result reduces to \cref{thm:marsdenWeinstein}. It is important to stress that for $k>1$, the differential form $\varpi_c\in\Omega^{k+1}(M_c)$ might be degenerate. For full details of the above, see~\cite{Blacker:2021aa}.

        \paragraph{Setting for dimensional reduction.}
        In \cref{rmk:dimensionalReduction}, we have already discussed a simple dimensional reduction of the Monge--Amp{\`e}re structure~\eqref{eq:MongeAmpereStructure3d} by assuming that the three-dimensional background manifold $M^3$ is a direct product of a two-dimensional manifold $M^2$ and a one-dimensional manifold $N$. Let us now assume it is of warped-product form instead, that is, we take
        \begin{equation}\label{eq:warpedProduct}
            \bg{g}_3\ =\ \bg{g}_2+\rme^{2\varphi}\,\rmd x^3\otimes\rmd x^3
        \end{equation}
        with $\varphi\in\scC^\infty(M^2)$ as the metric on $M^3$ where $\bg{g}_2$ is a metric on $M^2$ and $x^3$ local coordinates on $N$, respectively. Put differently, we assume that there is a one-parameter family of isometries, and we choose adapted coordinates. Now let $i,j,\ldots=1,2$, such that
        \begin{equation}
        	\bg{g}_2\ =\ \tfrac12\bg{g}_{ij}\rmd x^i\odot\rmd x^j~.
        \end{equation}
        Hence, the only non-vanishing Christoffel symbols for the metric $\bg{g}_3$ are $\bg{\Gamma}_{ij}{}^k$, which are precisely the Christoffel symbols for $\bg{g}_2$, alongside
        \begin{equation}
        	\bg{\Gamma}_{33}{}^i\ =\ -\rme^{2\varphi}\bg{g}^{ij}\partial_j\varphi
            \eand
            \bg{\Gamma}_{i3}{}^3\ =\ \partial_i\varphi~.
        \end{equation}

        Next, consider the differential forms~\eqref{eq:MongeAmpereStructure3d} on $M^m$ for $m=2,3$. As in \cref{rmk:dimensionalReduction}, let us denote them by $\varpi_m$ and $\alpha_m$, and let us also use a similar notation for other quantities. Then, under the assumption that $p\in\scC^\infty(M^2)$, some algebra reveals that 
        \begin{subequations}\label{eq:warpedReduction}
            \begin{equation}
                \begin{aligned}
                    \varpi_3\ &=\ \rme^\varphi\,\varpi_2\wedge\rmd x^3+\rme^{-\varphi}\,\vol{M^2}\wedge\bg{\nabla}q_3~,
                    \\
                    \alpha_3\ &=\ \rme^\varphi(\alpha_2-\hat h_+\vol{M^2})\wedge\rmd x^3+\rme^{-\varphi}\,(\varpi_2-q_3\rmd x^3\wedge{\star_{\bg{g}_2}\rmd\varphi})\wedge\bg{\nabla}q_3
                \end{aligned}
            \end{equation}
            with 
            \begin{equation}\label{eq:warpedReductionhath}
                \hat h_\pm\ \coloneqq\ \tfrac12[\bg{\nabla}^i\varphi\partial_ip-\big(\bg{\nabla}^i\bg{\nabla}^j\varphi\pm\bg{\nabla}^i\varphi\bg{\nabla}^j\varphi\big)q_iq_j-\rme^{-2\varphi}(\bg{\lap}_{\rm B}\varphi\pm\bg{\nabla}^i\varphi\partial_i\varphi)q_3^2\big]\,,
            \end{equation}
        \end{subequations}
        where again all differential operators in $\hat h_\pm$ are with respect to the metric $\bg{g}_2$. 

        Furthermore, we obtain
        \begin{equation}
            \begin{aligned}\label{eq:reducedMA}
                \varpi'_2\ &\coloneqq\ \parder{x^3}\intprod\varpi_3
                \\ 
                &\kern2.5pt=\ \rme^\varphi\big(\varpi_2+q_i\bg{\nabla}^i\varphi\,\vol{M^2}\big)\,,
                \\
                \alpha'_2\ &\coloneqq\ \parder{x^3}\intprod\alpha_3
                \\
                &\kern2.5pt=\ \rme^\varphi\big[\alpha_2-\big(\hat h_++\rme^{-2\varphi}\bg{\nabla}^i\varphi\partial_i\varphi\,q_3^2\big)\,\vol{M^2}+q_i\bg{\nabla}^i\varphi\,\varpi_2\big]+\rme^{-\varphi}q_3\rmd q_3\wedge{\star_{\bg{g}_2}\rmd\varphi}~.
            \end{aligned}
        \end{equation}
        A short calculation then shows that both $\varpi'_2$ and $\alpha'_2$ are closed. In fact, using~\eqref{eq:dstardx}, we also have that
        \begin{equation}\label{eq:exactnessReducedOmega3}
           \varpi'_2\ =\ \rmd({\star_{\bg{g}_2}\rme^\varphi q_i\rmd x^i})~.
        \end{equation}

        \paragraph{Symplectic reduction of the higher Monge--Amp{\`e}re structure.}
        Given that $\varpi'_2$ and $\alpha'_2$ are closed, Cartan's formula for Lie derivatives then immediately yields that $\caL_{\parder{x^3}}\varpi_3=0=\caL_{\parder{x^3}}\alpha_3$. Consequently, we can consider a dimensional reduction following \cref{thm:marsdenWeinstein}. In particular, we take the symplectic form
        \begin{equation}
            \omega_3\ \coloneqq\ \rmd q_i\wedge\rmd x^i-\rmd(\lambda q_3)\wedge\rmd x^3~,
        \end{equation}
        where $\lambda\in\scC^\infty(M^2)$ is assumed to be non-vanishing. Evidently, $\parder{x^3}\intprod\omega_3=\rmd(\lambda q_3)$, so the corresponding moment map is $\mu(x,q)=\lambda q_3$. Hence, 
        \begin{equation}
        	\mu^{-1}(\{c\})\ =\ \{(x,q)\,|\,q_3=c\,\lambda^{-1}\}
        \end{equation}
        for any regular value $c\in\IR$. Consequently, $\mu^{-1}(\{c\})/\sfG_c$ is locally given by $(x^i,x^3,q_i,q_3)=(x^i,\text{const},q_i,q_3=q_3(x^i))$. Next, by virtue of \cref{thm:marsdenWeinstein}, we obtain the symplectic form
        \begin{equation}
        	\omega_c\ \coloneqq\ \rmd q_i\wedge\rmd x^i
        \end{equation}
        on $\mu^{-1}(\{c\})/\sfG_c\cong T^*M^2$ which satisfies $\frp^*\omega_c=\fri^*\omega_3$, as well as two closed differential two-forms given by
        \begin{equation}\label{eq:reducedMA2}
            \begin{aligned}
                \tilde\varpi_2\ &\coloneqq\ \rme^\varphi\big(\varpi_2+q_i\bg{\nabla}^i\varphi\,\vol{M^2}\big)\,,
                \\
                \tilde\alpha_2\ &\coloneqq\ \rme^\varphi\big\{\alpha_2-\big[\hat h_++\rme^{-2\varphi}\big(\bg{\nabla}^i\varphi\partial_i\varphi\,q_3^2-q_3\bg{\nabla}^i\varphi\partial_iq_3\big)\big]\,\vol{M^2}+q_i\bg{\nabla}^i\varphi\,\varpi_2\big\}\,,
            \end{aligned}
        \end{equation}
        which are simply the differential two-forms from~\eqref{eq:reducedMA} with $q_3$ understood as a function of $x^1$ and $x^2$. Upon requiring the vanishing of the pull-back of $\tilde\varpi_2$ and $\tilde\alpha_2$ along~\eqref{eq:incompressibilityFromLagrangian} together with the relabelling the function $q_3$ by $v_3$, we obtain
        \begin{equation}\label{eq:reducedKinematicSystem}
            \begin{aligned}
                \bg{\nabla}_iv^i\ &=\ -v^i\partial_i\varphi~,
                \\
                \bg{\lap}_{\rm B}p+\bg{\nabla}_iv^j\bg{\nabla}_jv^i+\tfrac12|v|^2\bg{R}\ &=\ -\bg{g}^{ij}\partial_i\varphi\partial_j p+v^iv^j\bg{\nabla}_i\partial_j\varphi
                \\
                &\kern1cm+\rme^{-2\varphi}\big[\big(\bg{\lap}_{\rm B}\varphi-\bg{g}^{ij}\partial_i\varphi\partial_j\varphi\big)v_3^2+2v_3\bg{g}^{ij}\partial_i\varphi\partial_jv_3\big]~.
            \end{aligned}
        \end{equation}
        These are precisely the divergence-free constraint~\eqref{eq:generalNavierStokesIncompressibilityLocal} and the pressure equation~\eqref{eq:pressureLaplacian} when adapted to the warped product metric~\eqref{eq:warpedProduct}, under the assumption that $p$ is independent of $x^3$. Evidently, when $\varphi=0$, we obtain the standard situation of an incompressible fluid flow in two dimensions from \cref{sec:geometry2dFlows}, and $v_3$ is not constrained by~\eqref{eq:reducedKinematicSystem}.

        Next, let $X$ be a vector field on $\mu^{-1}(\{c\})/\sfG_c\cong T^*M^2$ and consider its horizontal lift $\tilde X$ to $T^*M^3$ using the Levi-Civita connection for the metric~\eqref{eq:warpedProduct},
        \begin{equation}
            \tilde X\ \coloneqq\ X+X\intprod\rmd x^i\bg{\Gamma}_{i3}{}^3q_3\parder{q_3}\ =\ X+X\intprod\rmd\varphi\,q_3\parder{q_3}~.
        \end{equation}
        Using that
        \begin{equation}
        	\tilde X\intprod\bg{\nabla}q_3\ =\ 0
            \eand
            \varpi_2\wedge(\alpha_2-\hat h_+\,\vol{M^2})\ =\ 0~,
        \end{equation}
        we obtain 
        \begin{equation}
        	\alpha_3\wedge(\tilde X\intprod\alpha_3)\ =\ -2\varpi_2\wedge X\intprod(\alpha_2-\hat h_+\,\vol{M^2})\wedge\bg{\nabla}q_3\wedge\rmd x^3~.
        \end{equation}
        Consequently, the endomorphism~\eqref{eq:definitionJ3d} becomes
        \begin{equation}
            \hat\caJ_3\tilde X\ =\ \frac{1}{\sqrt{|\hat f_2+\hat h_+|}}\,\eps_2\intprod\big[\varpi_2\wedge X\intprod\big(\alpha_2-\hat h_+\,\vol{M^2}\big)\big]\,,
        \end{equation}
        where $\eps_2$ is the dual to the Liouville volume form on $T^*M^2$; see also \cref{rmk:dimensionalReduction}. Hence, we obtain an endomorphism $\hat\caJ_2$ on $\mu^{-1}(\{c\})/\sfG_c$ that is precisely of the form~\eqref{eq:definitionJ2dRewritten} (or, equivalently of the form~\eqref{eq:alternativeDefinitionJ2d}) when using the Monge--Amp{\`e}re structure $\big(\varpi_2,\alpha_2-\hat h_+\,\vol{M^2}\big)$. Here, $\hat h_+$ is considered to be a function of $(x^i,q_i)$ only, since $\mu^{-1}(\{c\})/\sfG_c\cong T^*M^2$, with $q_3=v_3(x^i)$. Note $\alpha_2-\hat h_+\,\vol{M^2}$ is simply $\alpha_2$ with $\hat f_2$ replaced with $\hat f_2+\hat h_+$. Also, whilst $\varpi_2$ is a symplectic form on $T^*M^2$, $\alpha_2-\hat h_+\,\vol{M^2}$ fails to be closed and is degenerate when $\hat f_2+\hat h_+=0$. Next, we consider~\eqref{eq:alternativeAlmostKaehlerForm2d} and set
        \begin{equation}
            \hat\caK_2\ \coloneqq\ \sqrt{|\hat f_2+\hat h_+|}\,\bg{\nabla}q_i\wedge\rmd x^i~.
        \end{equation}
        Then, as before, $\hat\caK_2(\hat\caJ_2X,Y)=-\hat\caK_2(X,\hat\caJ_2Y)$ for all vector fields $X$ and $Y$ on $\mu^{-1}(\{c\})/\sfG_c$ so that $\hat g_2(X,Y)\coloneqq\hat\caK_2(X,\hat\caJ_2Y)$ is an almost (para-)Hermitian metric on $\mu^{-1}(\{c\})/\sfG_c$. Explicitly,
        \begin{equation}\label{eq:fluidMetric3dReduced}
            \hat g_2\ =\ \tfrac12(\hat f_2+\hat h_+)\bg{g}_{ij}\rmd x^i\odot\rmd x^j+\tfrac12\bg{g}^{ij}\bg{\nabla}q_i\odot\bg{\nabla}q_j~.
        \end{equation}

        \paragraph{Higher symplectic reduction of the higher Monge--Amp{\`e}re structure.}
        Let us now discuss the $2$-plectic reduction of the Monge--Amp{\`e}re structure~\eqref{eq:MongeAmpereStructure3d} following \cref{thm:generalisedMarsdenWeinstein}. In particular, by virtue of exactness~\eqref{eq:exactnessReducedOmega3}, we can take
        \begin{equation}\label{eq:2PlecticMoment}
            \mu(x,q)\ =\ {\star_{\bg{g}_2}\rme^\varphi q_i\rmd x^i}
        \end{equation}
        as the moment map which, of course, is defined up to a shift by an exact form. Then, for $\psi\in\scC^\infty(M^2)$, it follows the $\mu^{-1}(\{-\rmd\psi\})$ is non-empty and given by
        \begin{equation}\label{eq:2plecticInverseImageMomentMap}
            \mu^{-1}(\{-\rmd\psi\})\ =\ \big\{(x,q)\,\big|\,q_i=-\sqrt{\det(\bg{g}_2)}\,\rme^{-\varphi}\eps_{ij}\bg{g}^{jk}\partial_k\psi\big\}~.
        \end{equation}
        Consequently, the quotient $\mu^{-1}(\{-\rmd\psi\})/\sfG_{-\rmd\psi}$ is locally given by $(x^i,x^3,q_i,q_3)=\big(x^i,\text{const},-\sqrt{\det(\bg{g}_2)}\,\rme^{-\varphi}\eps_{ij}\bg{g}^{jk}\partial_k\psi,q_3\big)$. Furthermore, we obtain
        \begin{equation}
        	\varpi_{-\rmd\psi}\ \coloneqq\ \rme^{-\varphi}\vol{M^2}\wedge\rmd q_3
        \end{equation}
        on $\mu^{-1}(\{-\rmd\psi\})/\sfG_{-\rmd\psi}$, which satisfies $\frp^*\varpi_{-\rmd\psi}=\fri^*\varpi_3$. In addition, whilst the pull-back of $\varpi'_2$ given in~\eqref{eq:reducedMA} to $\mu^{-1}(\{-\rmd\psi\})$ vanishes identically, there is a closed differential two-form $\alpha_{-\rmd\psi}$ on $\mu^{-1}(\{-\rmd\psi\})/\sfG_{-\rmd\psi}$ given by
        \begin{equation}\label{eq:2plecticReducedForm}
            \alpha_{-\rmd\psi}\ \coloneqq\ \rme^\varphi\big[\det(\bg{\nabla}^iq_j)-\big(\hat f_2+\hat h_-\big)\big]\Big|_{q_i=-\sqrt{\det(\bg{g}_2)}\,\rme^{-\varphi}\eps_{ij}\bg{g}^{jk}\partial_k\psi}\vol{M^2}+\rme^{-\varphi}q_3\rmd q_3\wedge{\star_{\bg{g}_2}\rmd\varphi}    
        \end{equation}
        and which satisfies $\frp^*\alpha_{-\rmd\psi}=\fri^*\alpha'_2$. The function $\hat h_-$ used here was defined in~\eqref{eq:warpedReductionhath}. Finally, upon requiring the vanishing of the pull-back of $\alpha_{-\rmd\psi}$ along 
        \begin{equation}\label{eq:pullback2Plectic}
            \iota\,:\,x^i\ \mapsto\ (x^i,q_3)\ \coloneqq\ (x^i,v_3(x^i))~,
        \end{equation}
        we obtain the system~\eqref{eq:reducedKinematicSystem} with $v_i$ given by
        \begin{equation}\label{eq:velocity2Plectic}
        	v_i\ =\ -\sqrt{\det(\bg{g}_2)}\,\rme^{-\varphi}\eps_{ij}\bg{g}^{jk}\partial_k\psi\,.
        \end{equation}
		Evidently, the first equation of~\eqref{eq:reducedKinematicSystem} can be rewritten as $\bg{\nabla^i}(\rme^\varphi v_i)=0$, and by the Poincar\'e lemma, any solution to $\bg{\nabla^i}(\rme^\varphi v_i)=0$ is locally of the form~\eqref{eq:velocity2Plectic} for some $\psi\in\scC^\infty(M^2)$. Hence, the $2$-plectic reduction of the Monge--Amp{\`e}re structure directly yields the two-dimensional fluid flow in terms of the stream function. Indeed, as already indicated, the symplectic reduction provides all of the geometric information that $2$-plectic reduction does, at least to the extent in which we are interested in this paper, thus enabling the analysis of singularities and curvature scalars as in two-dimensions. However, should one only require a description of the reduced kinematics, $k$-plectic reduction is certainly a more elegant, compact tool. 

        Before discussing specific examples, let us close this section by stating that the pull-back of the metric~\eqref{eq:fluidMetric3dReduced} along 
        \begin{equation}\label{eq:reductionPullback}
        	\tilde\iota\,:\,x^i\mapsto(x^i,q_i,q_3)\ \coloneqq\ \big(x^i,-\sqrt{\det(\bg{g}_2)}\,\rme^{-\varphi}\eps_{ij}\bg{g}^{jk}\partial_k\psi,v_3(x^i)\big)
        \end{equation}
        given by~\eqref{eq:2plecticInverseImageMomentMap} and~\eqref{eq:pullback2Plectic}, is
        \begin{subequations}\label{eq:pullbackFluidMetric3dReduced}
        	\begin{equation}
                g_2\ =\ \tfrac12\big(\bg\lap_{\rm B}\psi\bg{\nabla}_i\partial_j\psi+T_{ij}\big)\rme^{-2\varphi}\rmd x^i\odot\rmd x^j
        	\end{equation}
            with
        	\begin{equation}
        		\begin{aligned}
        			T_{ij}\ &\coloneqq\ \bg{g}_{ij}\big\{ \bg{\nabla}^l\varphi\,\partial_l\psi\big(\bg{\nabla}^k\varphi\, \partial_k\psi-\bg\lap_{\rm B}\psi\big)-\big(\bg{\nabla}^k\varphi\,\partial_k\varphi\big)\big(\bg{\nabla}^l\psi\,\partial_l\psi\big)
                    \\
                    &\kern1.5cm+\bg{\nabla}^k\varphi\big[\bg{\nabla}^l\psi\,\bg{\nabla}_k\partial_l\psi+v_3\big( \partial_kv_3 - v_3 \partial_k \varphi \big) \big]\big\}
                    \\
        			&\kern2cm+\partial_i\varphi\,\partial_j\varphi\big(\bg{\nabla}^k\psi\,\partial_k\psi\big)-\bg{\nabla}^k\psi\big[\partial_i\varphi\,\bg{\nabla}_j\partial_k\psi+\partial_j\varphi\,\bg{\nabla}_i\partial_k\psi\big]\,.
        		\end{aligned}
        	\end{equation}
        \end{subequations}
        Evidently, $T_{ij}=0$ when $\varphi=0$, in which case we recover the metric~\eqref{eq:fluidMetric2dPullBack}.
        
        Alternatively, we may write the above formula in such a way that the term $\hat f_2+\hat h_+$ remains explicit
        \begin{subequations}\label{eq:pullbackFluidMetric3dReducedV2}
        	\begin{equation}
        		g_2\ =\ \tfrac12g_{ij}\rmd x^i\odot\rmd x^j  
        	\end{equation}
        	with
        	\begin{equation}
        		\begin{aligned}
        			g_{ij}\ &=\ \tilde\iota^*(\hat f_2+\hat h_+)\bg g_{ij} 
        			\\
        			&\kern1cm+\rme^{-2\varphi}\big\{(\bg\nabla^k\partial_i\psi)(\bg\nabla_k\partial_j\psi)+(\partial_i\varphi)(\partial_j\varphi)(\bg\nabla^k\psi)(\partial_k\psi)-2(\partial_k\psi)[(\bg\nabla^k\partial_{(i}\psi)(\partial_{j)}\varphi)]\big\}\,.
        		\end{aligned}
        	\end{equation}        
        \end{subequations} 
        It follows from~\eqref{eq:reductionPullback} that $\tilde\iota^*\hat f_2=f_2$ if and only if $\varphi=0$. Again, we note from~\eqref{eq:pullbackFluidMetric3dReduced} and~\eqref{eq:pullbackFluidMetric3dReducedV2} that the pull-back metric is a quadratic function of the velocity gradients.
        
        \begin{remark}\label{rmk:ReducedPressure}
        	Recall that we define
        	\begin{equation}
        			\hat f_m\ \coloneqq\ \tfrac12(\bg{\lap}_{\rm B}p+\bg{R}_{ij}q^i q^j)~,     	
        	\end{equation}
        	where the differential operators are taken with respect to the metric $\bg{g}_m$ and $i,j=1,2,\ldots,m$, as in~\eqref{eq:PressureCurvature2d} and~\eqref{eq:PressureCurvature3d}. For fluid flows on three-dimensional background manifolds with warped-product metric~\eqref{eq:warpedProduct}, assuming both $p,v_3\in\scC^\infty(M^2)$, it follows that
        	\begin{equation}\label{eq:PressureCurvatureReduced}
        		\hat f_3\ =\ \hat f_2+\hat h_+~,
        	\end{equation}	 
        	with $\hat h_+$ as defined in~\eqref{eq:warpedReductionhath}. Hence, for three-dimensional flows with symmetry $\parder{x^3}$, the function $\hat f_2+\hat h_+$ should be interpreted as the diagnostic quantity $\hat f_3$, where the former representation highlights the deviation from the diagnostic quantity $\hat f_2$ for two-dimensional incompressible fluid flows.			
			
			In computing the pull-back of~\eqref{eq:PressureCurvatureReduced} along~\eqref{eq:reductionPullback},\footnote{This is equivalent to computing~\eqref{eq:pullBackHatf3D} for flows with symmetry $\parder{x^3}$.} similar representations of the traces of the squares of the vorticity two-form and the rate-of-strain tensor~\eqref{eq:vorticityAndStrain} are also enlightening. Let $\zeta^m_{ij}$ and $S^m_{ij}$ respectively denote the vorticity two-form and the rate-of-strain tensor in $m$ dimensions, where $i,j=1,2,\ldots m$ and the covariant derivatives occurring in~\eqref{eq:vorticityAndStrain} are understood to be with respect to $\bg{g}_m$. Then,			
        	\begin{subequations}
        		\begin{equation}\label{eq:ReducedVorticity}
        			\zeta_{IJ}^{3}\zeta^{IJ}_{3}\ =\ \zeta_{ij}^{2}\zeta^{ij}_{2}+\tfrac12(\partial_i v_3)(\bg \nabla^i v_3)\rme^{-2\varphi}
        		\end{equation}
                and
        		\begin{equation}\label{eq:ReducedStrain}
        			S_{IJ}^{3} S^{IJ}_{3}\ =\ S_{ij}^{2} S^{ij}_{2}+\rme^{-2\varphi}\big[\tfrac12(\partial_i v_3)(\bg{\nabla}^i v_3)-(\partial_i v_3)(\bg{\nabla}^i\varphi)v_3+(\partial_i\varphi)(\bg{\nabla}^i\varphi)v_3^2\big]+(v_i\bg{\nabla}^i\varphi)^2~,
        		\end{equation}
        	\end{subequations}
        	where the indices $I,J=1,2,3$ and $i,j=1,2$. Like $\hat f_2+\hat h_+$, these expressions do not depend on the coordinate $x^3$, however unlike $\hat f_2$, the quantities $\zeta^2_{ij}$ and $S^2_{ij}$ retain some dependence on the three-dimensional geometry via $\varphi$, since $v_i$ must satisfy~\eqref{eq:reducedKinematicSystem}.     	
        \end{remark}
        
        \begin{remark}
            The reduction presented explicitly in this work assumes a one-dimensional symmetry $\parder{x^3}$ of the underlying manifold $M^3$, dictated by complete $x^3$ independence of the velocity components. In contrast, by applying the above approach to cases where the symmetry lies in $\frX(T^*M^3)$, flows where $v_3(x)$ depends linearly on $x^3$ may also be considered. In particular, it was shown in~\cite{Banos:2015aa} that Burgers' vortex, a flow of the form~\eqref{eq:flow3d} with $W\equiv0$ and $\phi=\phi(t)$, has a symmetry generated by $\parder{x^3}+\phi\parder{q_3}$, hence admits a Hamiltonian reduction.     
        \end{remark}
        
        \subsection{Examples of higher symplectic reductions}\label{sec:examplesReduction}

        Let us now discuss a few examples of the reduction processes as outlined in \cref{sec:kprs}. 
        
        \paragraph{Arnol'd--Beltrami--Childress flow.}
		Let us consider flows on $M_3\coloneqq\IR^3$ equipped with the standard Euclidean metric	
		\begin{equation}
			\bg{g}_3\ \coloneqq\ \bg{g}_2+\rmd z\otimes\rmd z 
			\ewith
			\bg{g}_2\ \coloneqq\ \rmd x\otimes\rmd x+\rmd y\otimes\rmd y~,
		\end{equation}	
		which corresponds to the case when $\varphi=0$. Then, $\hat h_\pm=0$ and our symplectic reduction yields an incompressible fluid flow in two dimensions, on an Euclidean background. In summary, the equations~\eqref{eq:reducedMA2} reduce to $\tilde\varpi_2=\varpi_2$, and $\tilde\alpha_2=\alpha_2$, with the divergence-free constraint and the pressure equation~\eqref{eq:reducedKinematicSystem} respectively given by
		\begin{subequations}\label{eq:reducedEqABC}
        	\begin{equation}\label{eq:incompABC}
				\partial_xv_x+\partial_yv_y\ =\ 0
			\end{equation}
			and
			\begin{equation}\label{eq:pressureABC}
				\lap p\ =\ 2\big(\partial_xv_x\partial_yv_y-\partial_xv_y\partial_yv_x \big)
                \ewith
                \lap\ \coloneqq\ \partial_x^2+\partial_y^2~,
        	\end{equation}
        \end{subequations}
        where $v_x$ and $v_y$ are functions of $x$ and $y$ only.
        
        \begin{figure}[ht]
            \vspace{15pt}
            \begin{center}
                \begin{subfigure}[b]{0.425\textwidth}
                    \includegraphics[scale=0.201]{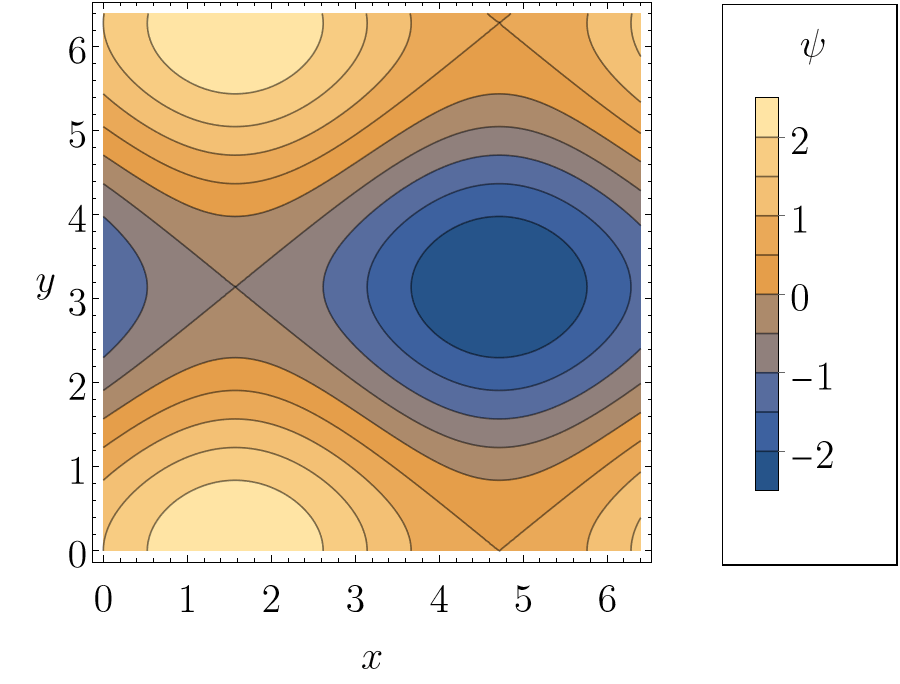}
                    \caption{Plot of the streamlines for $\psi$. The locus $\psi=0$ defines a shear layer between two homoclinic orbits, corresponding to vanishing vorticity $\zeta\coloneqq\lap\psi = -\psi$.}
                    \label{fig:plotStreamFunctionABC}
                \end{subfigure}
                \hspace{25pt}
                \begin{subfigure}[b]{0.425\textwidth}
                    \includegraphics[scale=0.201]{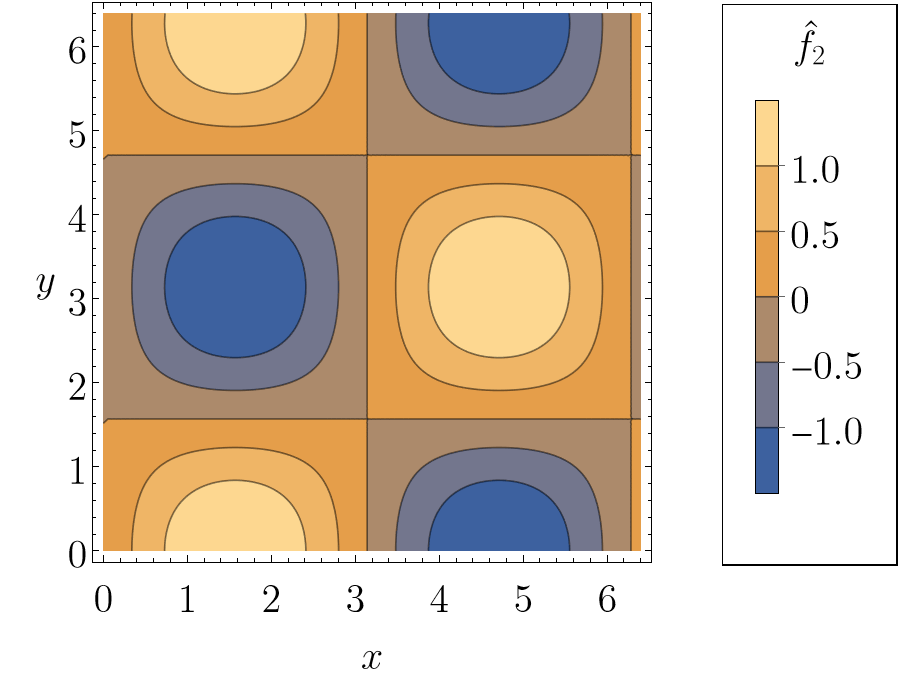}
                    \caption{Contour plot for $\hat f_2$. The domain is partitioned into squares of side length $\pi$, across which the sign of $\lap p$ alternates. \\}
                    \label{fig:pressureLapABC}
                \end{subfigure}
                \caption{Plots of the iso-lines of the stream function~\eqref{eq:streamFunctionABC} and reduced Laplacian of pressure~\eqref{eq:pressureLapABC} for an integrable Arnol'd--Beltrami--Childress flow with parameters $A=1.5$ and $B=1$.}
            \end{center}
            \vspace{-15pt}
        \end{figure}
		
		Additionally, performing the $2$-plectic reduction to obtain velocity components $v_x$ and $v_y$ satisfying~\eqref{eq:incompABC}, in terms of a stream function in two dimensions, yields the same result as applying the Poincar\'e lemma to~\eqref{eq:incompABC} itself, that is, 
		\begin{equation}\label{eq:velocitiesABC}
        	q_x\ \coloneqq\ v_x\ =\ -\partial_y\psi
            \eand
        	q_y\ \coloneqq\ v_y\ =\ \partial_x\psi
        \end{equation}	
        for some stream function $\psi=\psi(x,y)$. The corresponding differential form~\eqref{eq:2plecticReducedForm} is 
        \begin{equation}
        	\alpha_{-\rmd\psi}\ =\ \big[\partial_x^2\psi\,\partial_y^2\psi-(\partial_x\partial_y\psi)^2-\tfrac12\lap p\big]\,\rmd x\wedge\rmd y~.
        \end{equation}
		This is unchanged when pulled back along $(x,y)\mapsto(x,y,q_z)\coloneqq(x,y,v_z(x,y))$, so imposing a vanishing pull-back condition is equivalent to the Monge--Amp{\`e}re equation
		\begin{equation}\label{eq:pressureStreamABC}
			\tfrac12\lap p\ =\ \partial_x^2\psi\,\partial_y^2\psi-(\partial_x\partial_y\psi)^2~,
		\end{equation}
		which is, in turn, precisely~\eqref{eq:pressureABC} with $v_x$ and $v_y$ evaluated as per~\eqref{eq:velocitiesABC}. Hence, one is free to choose a pair of $z$-independent functions $\psi$ and $v_z$ in order to recover an incompressible fluid flow in $\IR^3$ that reduces to an incompressible flow on the $(x,y)$-plane.
		
        \begin{figure}[ht]
            \begin{center}
                \begin{subfigure}[b]{0.425\textwidth}
                    \includegraphics[scale=0.201]{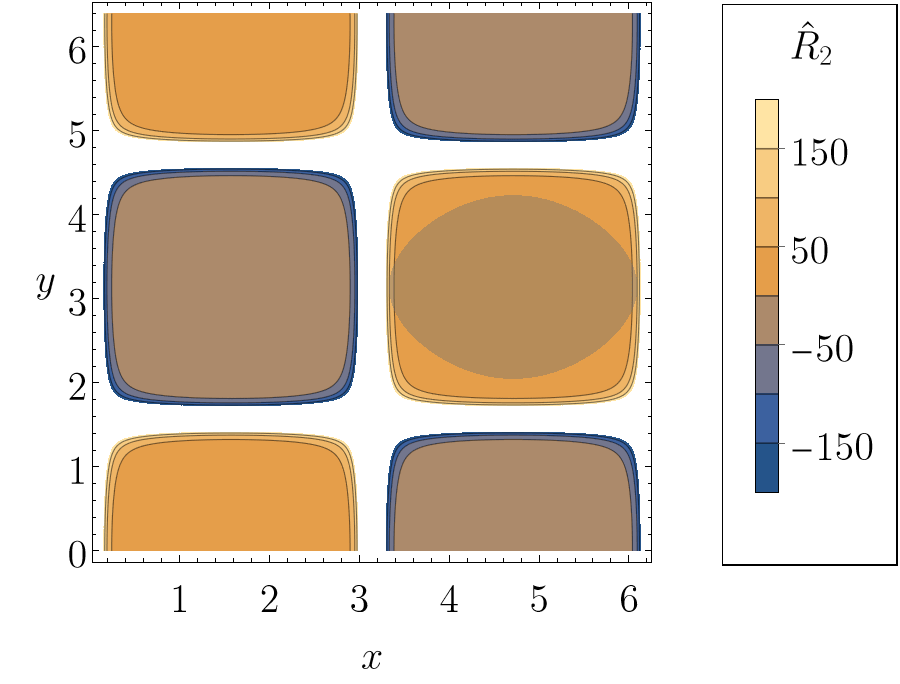}
                    \caption{Contour plot for the curvature scalar $\hat R_2$. Both $\hat R_2$ and $\hat f_2$ have the same signs and $\hat R_2$ blows up as $\hat f_2$ tends to zero.}
                    \label{fig:curvatureLRvortexABC}
                \end{subfigure}
                \hspace{25pt}
                \begin{subfigure}[b]{0.425\textwidth}
                    \includegraphics[scale=0.201]{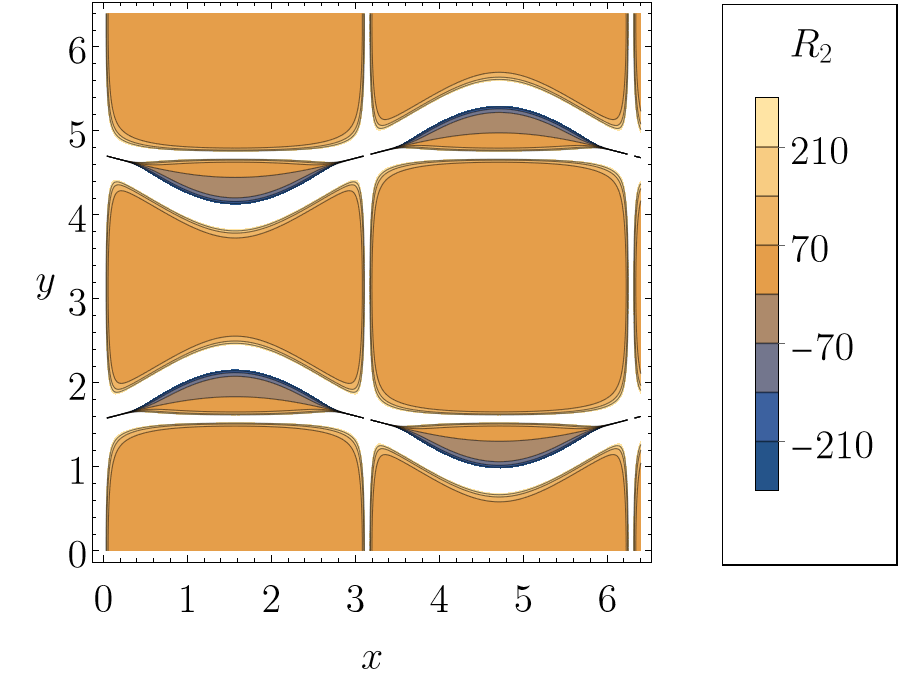}
                    \caption{Contour plot for the curvature scalar $R_2$. Note that curvature singularities occur when $\hat f_2=0$. }
                    \label{fig:curvature2DABC}
                \end{subfigure}
                \caption{Contour plots of the curvatures~\eqref{eq:4DCurvatureABC} (left) and~\eqref{eq:curvature2DABC} (right) respectively, for the Arnol'd--Beltrami--Childress flow with parameters $A=1.5$ and $B=1$. The ellipse highlighted on the left is the domain bounded by the closed streamline $\psi=-\frac{27}{16}$, which  is contained in a region on which the metrics $\hat g_2$ and $g$ are Riemannian, and $\hat f_2>0$.}
            \end{center}
        \end{figure}  		
		
		Making the choice
		\begin{equation}\label{eq:streamFunctionABC}
			v_z(x,y)\ =\ \psi(x,y)\ \coloneqq\ A\cos(y)+B\sin(x)
		\end{equation}
        for $A,B\in\IR$ some constants, see \cref{fig:plotStreamFunctionABC}, and computing~\eqref{eq:velocitiesABC}, we recover the velocity field for the integrable case of \uline{Arnol'd--Beltrami--Childress flow}~\cite{Dombre:1986aa},
		\begin{equation}\label{eq:velocityABC}
			(v_x,v_y,v_z)\ =\ (\dot x,\dot y,\dot z)\ =\ \big(A\sin(y),B\cos(x),A\cos(y)+B\sin(x)\big)\,.
		\end{equation}

        Next, following~\cite{Dombre:1986aa}, upon taking the quotient of $v_x$ and $v_y$, this system integrates to $v_z=A\cos(y)+B\sin(x)=\text{const}$. Furthermore,~\eqref{eq:pressureStreamABC} becomes
        \begin{equation}\label{eq:pressureLapABC}
            \hat f_2\ =\ \tfrac12\lap p\ =\ AB\sin(x)\cos(y)~,
        \end{equation}
        and this is displayed in \cref{fig:pressureLapABC}.

		Since $\hat h_+=0$ and $M^2=\IR^2$, it follows that the metric~\eqref{eq:fluidMetric3dReduced} on the reduced phase space $\mu^{-1}(\{c\})/\sfG_c\cong T^*\IR^2$ is precisely~\eqref{eq:flatBack2DMetric}. Hence, we may follow exactly the treatment from \cref{sec:examples2D}. Therefore, the curvature scalar $\hat R_2$ for the metric~\eqref{eq:fluidMetric3dReduced} follows directly from~\eqref{eq:flatBackCurvatureLR2D},
		\begin{equation}\label{eq:4DCurvatureABC}
            \hat R_2\ =\ \frac{\sin^2(x)+\cos^2(y)}{AB\sin^3(x)\cos^3(y)}~,
		\end{equation}
		and as in previous examples, for $\hat f_2\gtrless 0$ the metric $\hat g_2$ is Riemannian/Kleinian and the associated curvature is positive/negative. Again, when $\hat f_2=0$, both the metric and the curvature scalar are singular.

		In turn, the pull-back metric~\eqref{eq:pullbackFluidMetric3dReduced}, with $v_x$ and $v_y$ as given in~\eqref{eq:velocityABC}, is 
		\begin{equation}\label{eq:metricPullbackABC}
            g_2\ =\ [A\cos(y)+B\sin(x)]
            \begin{pmatrix}
                B\sin(x) & 0
                \\
                0 & A\cos(y)
            \end{pmatrix},
        \end{equation}
        where the vorticity is $\zeta=-A\cos(y)-B\sin(x)$. This metric is again singular when $\hat f_2=0$, however it also exhibits a further singularity when $A\cos(y)+B\sin(x)=0$, which is precisely the shear layer featuring in the streamlines of~\cref{fig:plotStreamFunctionABC} and which also corresponds to vanishing vorticity. The curvature scalar $R_2$ associated with~\eqref{eq:metricPullbackABC} is then
        \begin{equation}\label{eq:curvature2DABC}
        	R_2\ =\ \frac{B\sin(x)\big[\sin^2(x)+3\cos^2(y)\big]+A\cos(y)\big[\cos^2(y)+3\sin^2(x)\big]}{2\sin^2(x)\cos^2(y)[B\sin(x)+A\cos(y)]^3}~.
        \end{equation}

		The lines $x=n\pi$ and $y=\big(n+\frac12\big)\pi$ for all $n\in\IZ$, along which $f_2=0$, are singularities of both the metric $g$ and its curvature $R$, as was the case for the metric~\eqref{eq:fluidMetric3dReduced}. Additionally, the presence of $A\cos(y)+B\sin(x)$ in the denominator illustrates that the shear layer is a curvature singularity. See \cref{fig:curvature2DABC}. This curvature singularity arises due to the vanishing vorticity and is otherwise unseen by the pressure criterion. The shear layer is a separatrix between topologically distinct flows.

        \begin{figure}[ht]
            \vspace{15pt}
            \begin{center}
                \begin{subfigure}[b]{0.425\textwidth}
                        \includegraphics[scale=0.201]{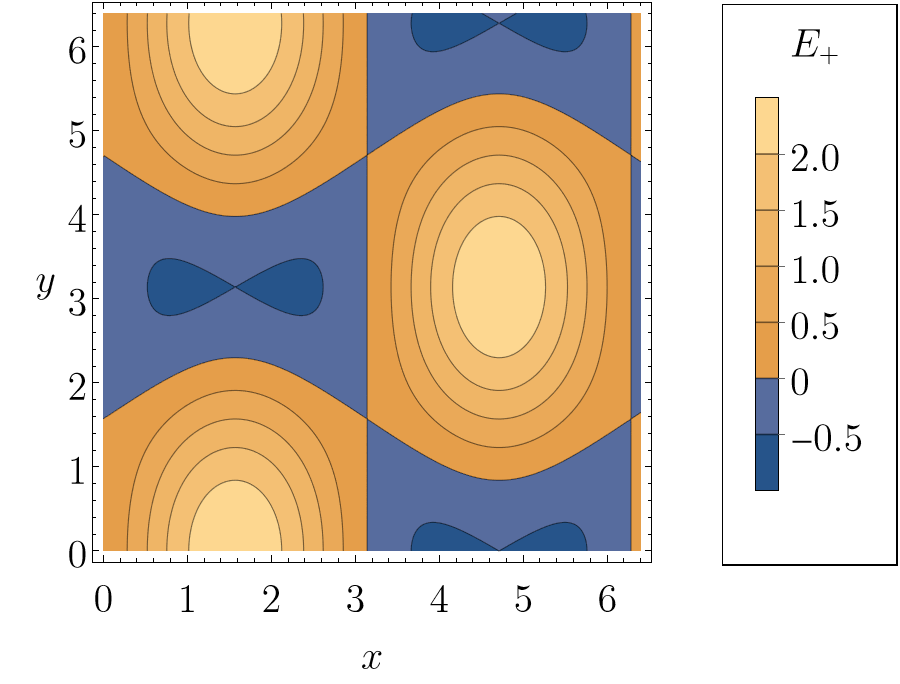}
                        \caption{Contour plot for the eigenvalue $E_+$, which vanishes both along the shear layer and along $x=\pi$.}
                \end{subfigure}
                \hspace{25pt}
                \begin{subfigure}[b]{0.425\textwidth}
                        \includegraphics[scale=0.201]{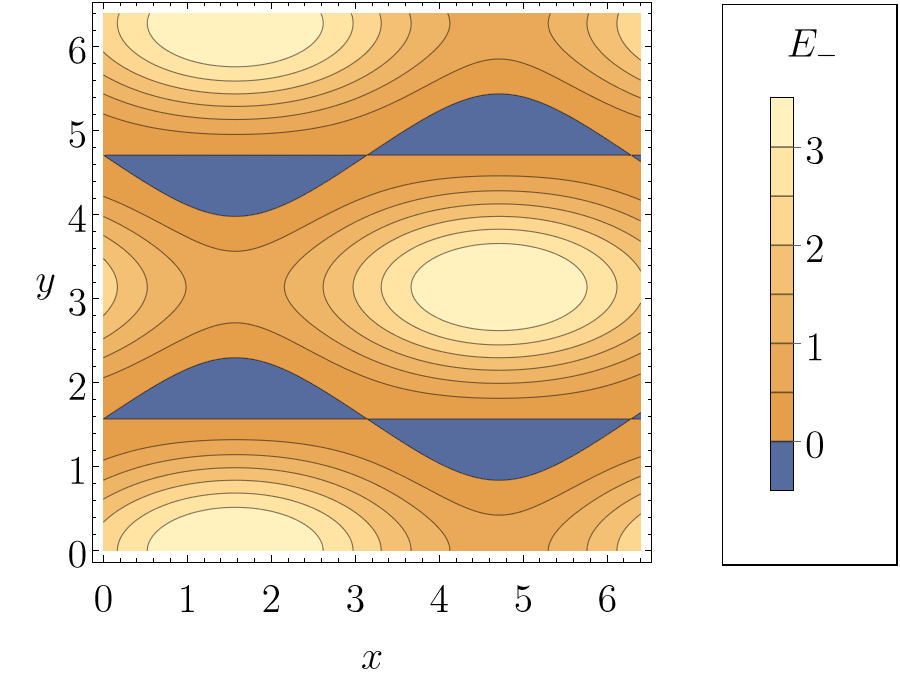}
                        \caption{Contour plot for the eigenvalue $E_-$. In addition to the shear layer, $E_-$ also vanishes along $y=\frac\pi2$ and $y=\frac{3\pi}{2}$.}       
                \end{subfigure}
                \caption{Plots of the eigenvalues of the pull-back metric~\eqref{eq:metricPullbackABC} with $g_2=\diag(E_+,E_-)$ of the Arnol'd--Beltrami--Childress flow with parameters $A=1.5$ and $B=1$. The signs of both eigenvalues change across the shear layer, where the vorticity prefactor changes sign, hence the signature of the metric is unchanged across this singularity.}
            \end{center}
        \end{figure}

        \paragraph{Hicks--Moffatt vortex.}
        We now discuss another important class of examples --- vortices of \uline{Hicks--Moffatt type}~\cite{Hicks:1899aa, Moffatt:1969aa}. Consider flows on $M\coloneqq (\IR^+ \times \IR)\times_{r^2} S^1$ equipped with
        \begin{equation}\label{eq:metricCylindricalCoordinates}
            \bg{g}_3\ \coloneqq\ \bg{g}_2+r^2\rmd\theta\otimes\rmd\theta
            \ewith
            \bg{g}_2\ \coloneqq\ \rmd r\otimes\rmd r+\rmd z\otimes\rmd z~,
        \end{equation}
        where $r\in\IR^+$, $z\in\IR$, and $\theta\in[0,2\pi)$, that is, standard cylindrical coordinates. Then, 
        \begin{equation}
        	\varphi\ =\ \log(r) \eand \hat h_+\ =\ \tfrac{1}{2r}\partial_rp
        \end{equation}
        with $p=p(r,z)$ in~\eqref{eq:warpedReduction}. Hence, the equations~\eqref{eq:reducedMA2} reduce to
        \begin{equation}
            \begin{gathered}
                \tilde\varpi_2\ =\ r\big(\varpi_2+\tfrac1rq_r\,\rmd r\wedge\rmd z\big)\,,
                \\
                \tilde\alpha_2\ =\ r\big\{\alpha_2-\big[\tfrac{1}{2r}\partial_rp+\tfrac{1}{r^2}\big(\tfrac{1}{r^2}q_\theta^2-\tfrac{1}{r}q_\theta\partial_rq_\theta\big)\big]\,\rmd r\wedge\rmd z+\tfrac{1}{r}q_r\varpi_2\big\}\,.
            \end{gathered}
        \end{equation}
        Furthermore, the requirements that the pull-backs of $\tilde\varpi_2$ and $\tilde\alpha_2$ under~\eqref{eq:incompressibilityFromLagrangian} vanish become
        \begin{subequations}\label{eq:reducedEqHills}
        	\begin{equation}\label{eq:incompHills}
				\tfrac1r\partial_r(rv_r)+\partial_zv_z\ =\ 0~,
			\end{equation}
			and
			\begin{equation}\label{eq:pressureHills}
				\tfrac1r\partial_r(r\partial_rp)+\partial_z^2p\ =\ 2\big[\partial_rv_r\partial_zv_z-\partial_rv_z\partial_zv_r-\tfrac{1}{r^2}v_r^2-\tfrac{1}{r^4}\big(v_\theta^2-\tfrac{r}{2}\partial_rv_\theta^2\big)\big]\,,
        	\end{equation}
        \end{subequations}
        which are the equations~\eqref{eq:reducedKinematicSystem} for the metric~\eqref{eq:metricCylindricalCoordinates}, with $v_\theta=v_\theta(r,z)$ arbitrary. Evidently, the first equation is simply the divergence of $v$ for such a $v_\theta$ and the left hand side of the second equation is the Laplacian of $p=p(r,z)$, both expressed in cylindrical polar coordinates.
        
        Turning now to the $2$-plectic reduction, note that we can take the moment map~\eqref{eq:2PlecticMoment} to be 
        \begin{equation}
        	\mu(x,q)\ =\ rq_r\rmd z-rq_z\rmd r~.
        \end{equation} 
        It then follows that, locally on $\mu^{-1}(\{-\rmd\psi\})/\sfG_{-\rmd\psi}$, we have
        \begin{equation}\label{eq:velocitiesHills}
    		q_r\ \coloneqq\ v_r\ =\ -\tfrac1r\partial_z\psi
            \eand
    		q_z\ \coloneqq\ v_z\ =\ \tfrac1r\partial_r\psi~,
        \end{equation}
        which can be interpreted as expressions for the velocity components in the $r$ and $z$ directions, in terms of a stream function $\psi=\psi(r,z)$ in two dimensions. Consequently, these solve the adapted divergence-free constraint~\eqref{eq:incompHills}. In fact, imposing that the pull-back of the closed differential form~\eqref{eq:2plecticReducedForm} along $(r,z)\mapsto(r,z,q_\theta)\coloneqq(r,z,v_\theta(r,z))$ vanishes, we find 
        \begin{equation}\label{eq:pressureStreamHills}
        	\begin{aligned}
            \tfrac12\big[\tfrac1r\partial_r(r\partial_rp)+\partial_z^2p\big]\ &=\ \tfrac{1}{r^2}\big[\partial_r^2\psi\partial_z^2\psi-(\partial_r\partial_z\psi)^2\big]-\tfrac{1}{r^4}(\partial_z\psi)^2
            \\
            &\kern1.5cm +\tfrac{1}{r^3}\big(\partial_z\psi\partial_r\partial_z\psi-\partial_r\psi\partial_z^2\psi\big)-\tfrac{1}{r^4}\big(v_\theta^2-\tfrac{r}{2}\partial_rv_\theta^2\big)\,,
            \end{aligned}
        \end{equation}
        that is,~\eqref{eq:reducedEqHills} with $v_r$ and $v_z$ given in terms of $\psi$ as in~\eqref{eq:velocitiesHills}. One is free to choose $\psi$ and $v_\theta$, provided they satisfy~\eqref{eq:pressureStreamHills}. Furthermore,~\eqref{eq:incompHills} is trivially satisfied for any such choices, given~\eqref{eq:velocitiesHills}.
        
        In what follows, we fix of $\psi$ and $v_\theta$ corresponding to vortices of Hicks--Moffatt type. In particular, we shall discuss a class of spherical vortices with swirl parameter $\kappa$, normalising the radius of the sphere to $1$ for convenience. For an in-depth review of such vortices, we direct the interested reader to~\cite{Abe:2022aa}.

        \begin{figure}
            \begin{center}
                \begin{subfigure}[b]{0.425\textwidth}
                    \includegraphics[scale=0.201]{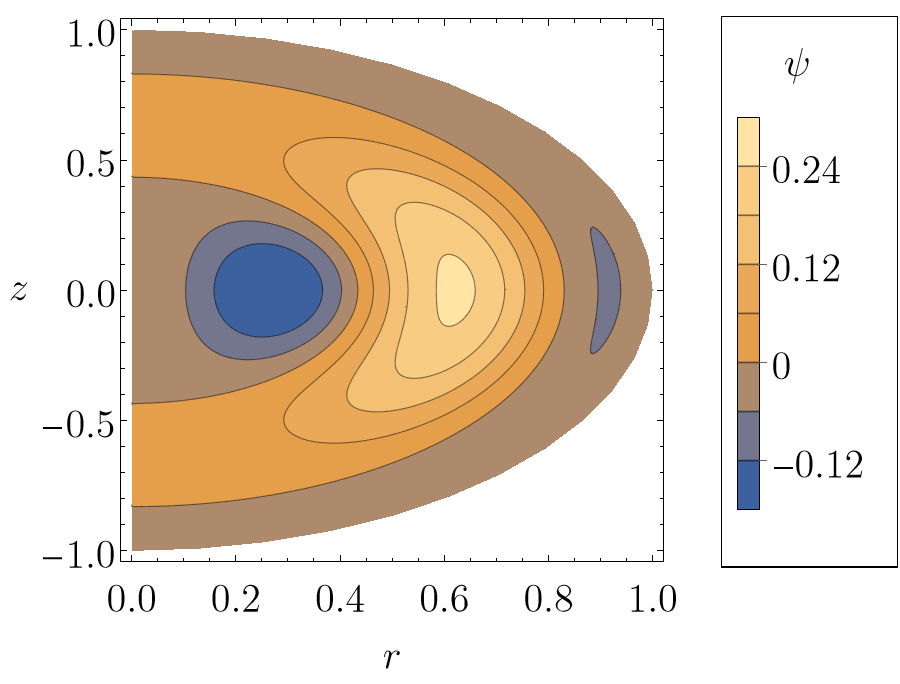}
                    \caption{A plot of the streamlines for~\eqref{eq:hicksIntStream}. The increased helicity (knotting) of the vortex lines results in layers of closed contours of alternating sign.}
                \end{subfigure}
                \hspace{25pt}
                \begin{subfigure}[b]{0.425\textwidth}
                    \includegraphics[scale=0.201]{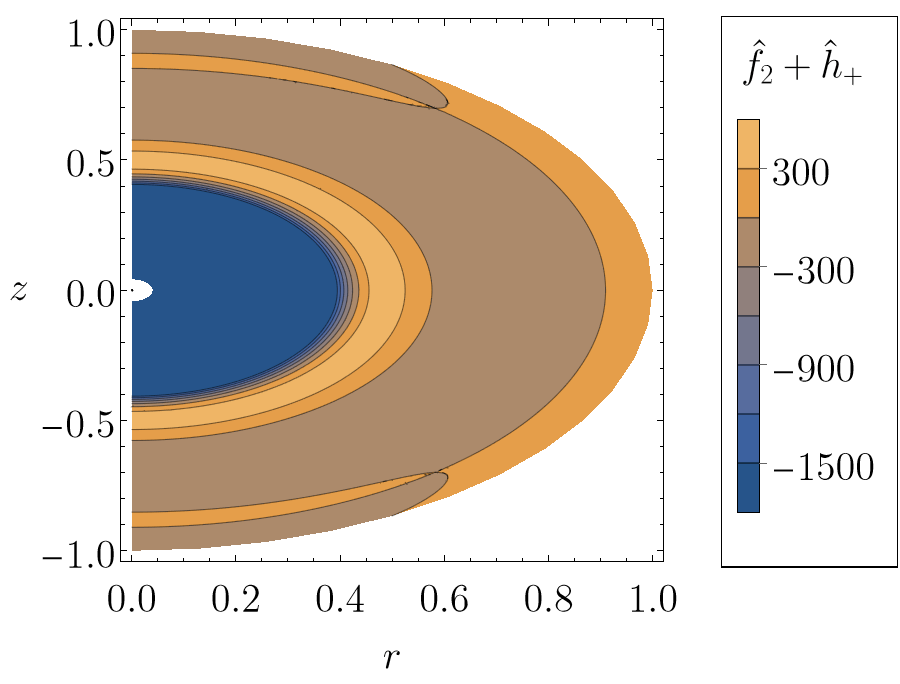} 
                    \caption{A plot of the Laplacian of pressure function $\hat f_2+\hat h_+$, as discussed in~\cref{rmk:ReducedPressure}, for the stream function $\psi_{\rm{int},10}$. \\} 
                    \label{subfig:HicksMoffatPressure}
                \end{subfigure}
                \begin{subfigure}[b]{0.425\textwidth}
                    \vspace{20pt}
                    \includegraphics[scale=0.181]{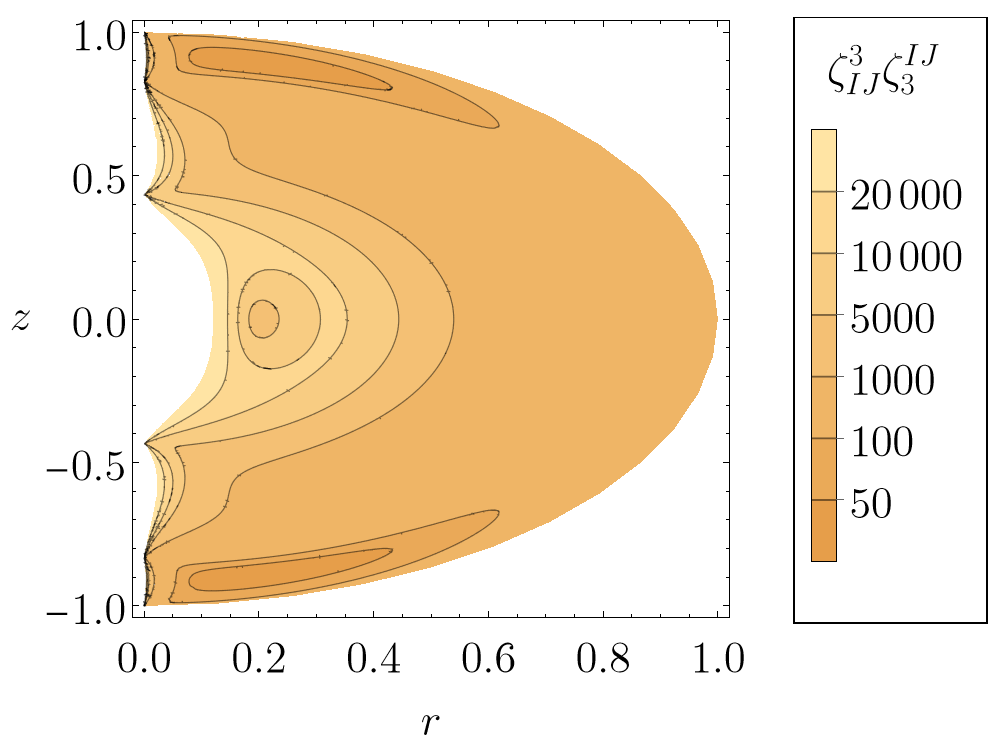}
                    \caption{A plot of the iso-lines of~\eqref{eq:ReducedVorticity}, for the Hicks--Moffatt vortex with $\kappa=10$. Note that this quantity is independent of $\theta$.}
                    \label{subfig:HicksMoffatVorticity}
                \end{subfigure}
                \hspace{25pt}
                \begin{subfigure}[b]{0.425\textwidth}
                    \includegraphics[scale=0.181]{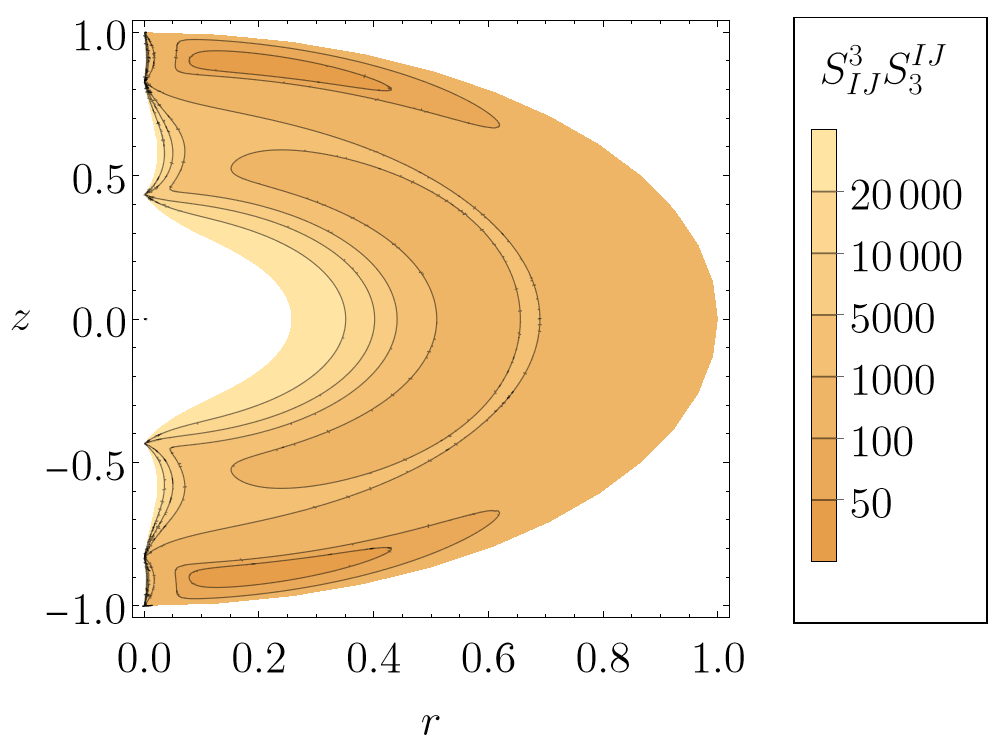}
                    \caption{A plot of the iso-lines of~\eqref{eq:ReducedStrain} for the Hicks--Moffatt vortex with $\kappa=10$. Note that this quantity is independent of $\theta$.}
                    \label{subfig:HicksMoffatStrain}
                \end{subfigure}
                \caption{A selection of plots for the Hicks--Moffatt vortex interior solution with swirl parameter $\kappa=10$. Since $\hat f_3 = \hat f_2 + \hat h_+$ in this case, the quantity shown in \cref{subfig:HicksMoffatPressure} is precisely the difference between those shown in \cref{subfig:HicksMoffatVorticity} and \cref{subfig:HicksMoffatStrain}.}
                \label{fig:HicksMoffatInterior}
            \end{center}
        \end{figure} 

		Firstly, consider a unit sphere in $\IR^3$ and set $\sigma(r,z)\coloneqq\sqrt{r^2+z^2}$ in cylindrical polar coordinates, as above. Fix the angular velocity to be
		\begin{equation}\label{eq:hicksAngleVelocity}
			v_{\theta,\kappa}(r,z)\ =\ \frac{\kappa\psi}{r}~,
		\end{equation} 
		on the whole domain. We then fix the stream function on the interior and exterior of the sphere, such that they coincide on the boundary. In particular, on the interior we set
        \begin{subequations}\label{eq:hicksIntStream}
    		\begin{equation}
    			\psi_{\rm{int},\kappa}(r,z)\ \coloneqq\ \frac32r^2\left(b(\kappa)-c(\kappa)\frac{J_{\frac32}(\kappa\sigma)}{(\kappa\sigma)^{\frac32}}\right),
    		\end{equation}
    		with 
    		\begin{equation}
    			b(\kappa)\ \coloneqq\ \frac{J_{\frac32}(\kappa)}{\kappa J_{\frac52}(\kappa)}
    			\eand
    			c(\kappa)\ \coloneqq\ \frac{\sqrt{\kappa}}{J_{\frac52}(\kappa)}~,
    		\end{equation}
        \end{subequations}
		where $J_n(x)$ is the $n$-th order Bessel function with argument $x$.\footnote{Such an explicit solution was found in the context of magneto-hydrodynamics~\cite{Prender:1956aa}. In the context of Navier--Stokes, solutions to this type are also referred to as Hill's spherical vortex with swirl.} See \cref{fig:HicksMoffatInterior} for plots of the interior solution. On the exterior of the sphere, one chooses the stream function to be independent of the swirl parameter and to match the interior solution on the boundary of the sphere, given by $\sigma^2=1$; for example, we choose 
		\begin{equation}\label{eq:hicksExtStream}
			\psi_{\rm{ext}}(r,z)\ \coloneqq\ \frac12r^2\left(1-\frac{1}{\sigma^3}\right),
		\end{equation}	
        so that the flow far from the sphere is uniform with unit speed directed along the $z$ axis, with non-zero velocity in the $\theta$-direction when $\kappa\neq0$. It is important to observe that the helicity~\eqref{eq:helicity} is non-zero if and only if the flow has non-zero swirl~\cite{Bannikova:2016aa,Tsinober:1992aa}. 
		
		\begin{figure}[ht]
            \vspace{15pt}
            \begin{center}
                \begin{subfigure}[b]{0.425\textwidth}
                    \includegraphics[scale=0.201]{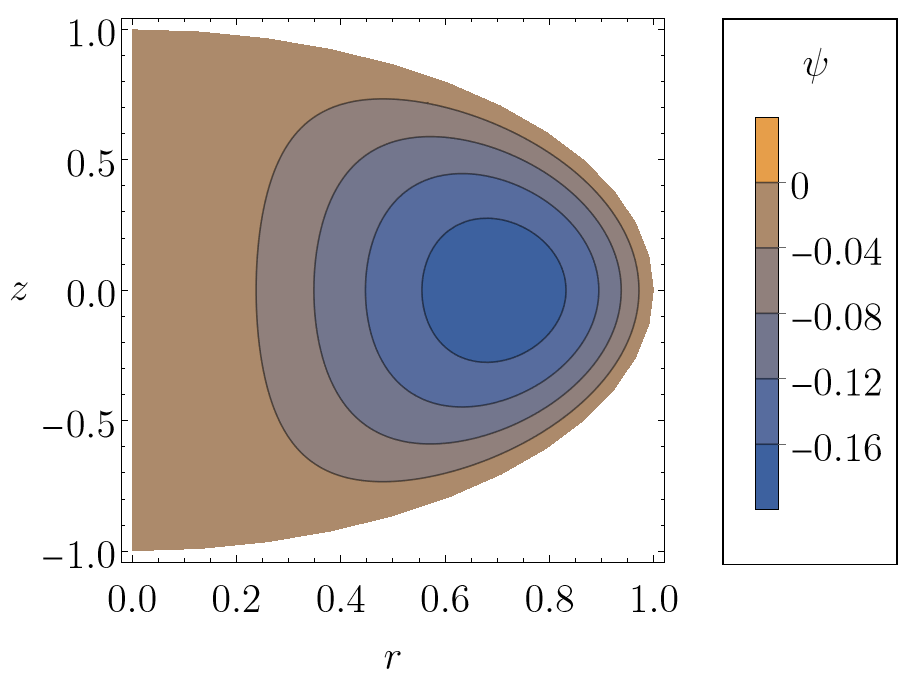}
                    \caption{Contours for $\psi_{\rm int,0}$~\eqref{eq:hillInteriorStreamFunction}. The contours are closed and concentric, forming toroidal vortex tubes when rotated around the $z$ axis to form our three-dimensional flow.}
                    \label{fig:interiorStreamFunctionContours} 
                \end{subfigure}
                \hspace{25pt}
                \begin{subfigure}[b]{0.425\textwidth}
                    \includegraphics[scale=0.201]{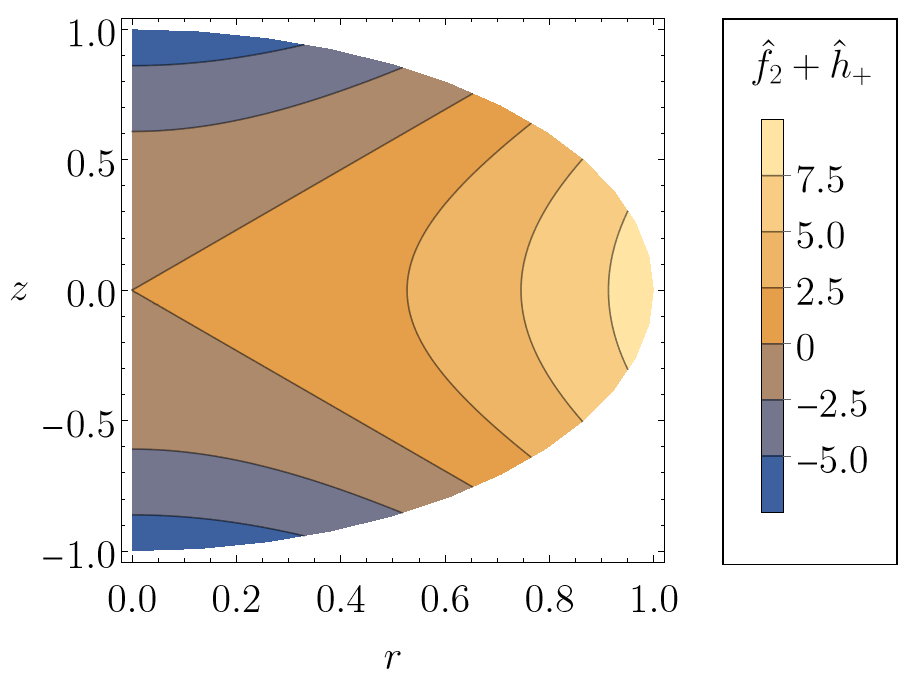}
                    \caption{$\hat f_2+\hat h_+$ on the interior of the unit sphere, with stream function~\eqref{eq:hillInteriorStreamFunction}. The function vanishes along $4r^2=3z^2$, is positive between these curves and negative outside of them.}
                    \label{fig:hillInteriorConformalFactor} 
                \end{subfigure}
                \caption{Plots of the iso-lines of the stream function~\eqref{eq:hillInteriorStreamFunction} and the function~\eqref{eq:hillInteriorConformalFactor} respectively, for the interior of Hill's spherical vortex.}
            \end{center}
            \vspace{-10pt}
        \end{figure}
		
		Let us now focus on the limiting case $\kappa=0$ when the flow has vanishing helicity. This corresponds to \uline{Hill's spherical vortex}~\cite{Hill:1894aa}. The $\theta$-component of velocity,~\eqref{eq:hicksAngleVelocity} then becomes $v_{\theta,0}(r,z)=0$. Firstly, note that for the exterior solution,~\eqref{eq:hicksExtStream} remains the same. Henceforth, we focus our attention on the interior alone. The interior solution, for which the stream function~\eqref{eq:hicksIntStream} is given by
		\begin{equation}\label{eq:hillInteriorStreamFunction}
        	\psi_{\rm{int},0}(r,z)\ \coloneqq\ \tfrac34r^2\big(r^2+z^2-1\big)\,.
        \end{equation}
		See \cref{fig:interiorStreamFunctionContours}. 

        Upon applying~\eqref{eq:incompHills}, we obtain the velocity components
        \begin{equation}
            v_r\ =\ -\tfrac32rz
            \eand
            v_z\ =\ \tfrac32\big(2r^2+z^2-1\big)\,,
        \end{equation}      
        Then, it follows from~\eqref{eq:pressureStreamHills} that the Laplacian of pressure is given by 
        \begin{equation}\label{eq:hillInteriorConformalFactor}
            \hat f_2+\hat h_+\ =\ \tfrac12\big(\partial_r^2p+\partial_z^2p+\tfrac1r\partial_rp\big)\ =\ \tfrac94\big(4r^2-3z^2\big)\,.
        \end{equation}
        See \cref{fig:hillInteriorConformalFactor}.   

        \begin{figure}[ht]
            \begin{center}
                \begin{subfigure}[b]{0.425\textwidth}
                    \includegraphics[scale=0.201]{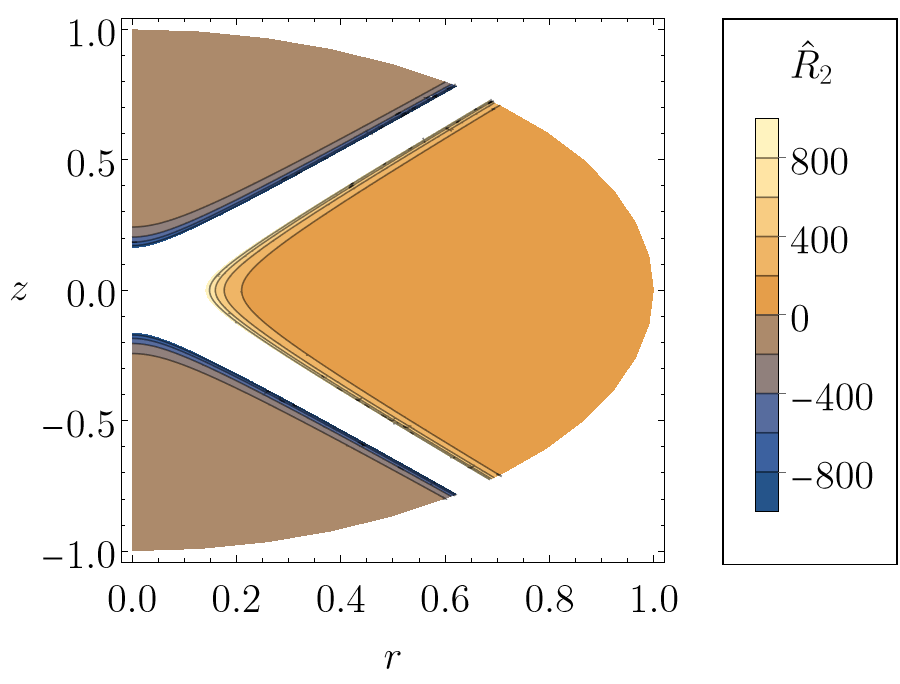}
                    \caption{A plot of $\hat R_2$ for the stream function~\eqref{eq:hillInteriorStreamFunction}. The curvature decreases in magnitude towards the boundary of the sphere and is singular along $4r^2=3z^2$.}
                    \label{fig:curvature4DHV} 
                \end{subfigure}
                \hspace{25pt}
                \begin{subfigure}[b]{0.425\textwidth}
                    \includegraphics[scale=0.201]{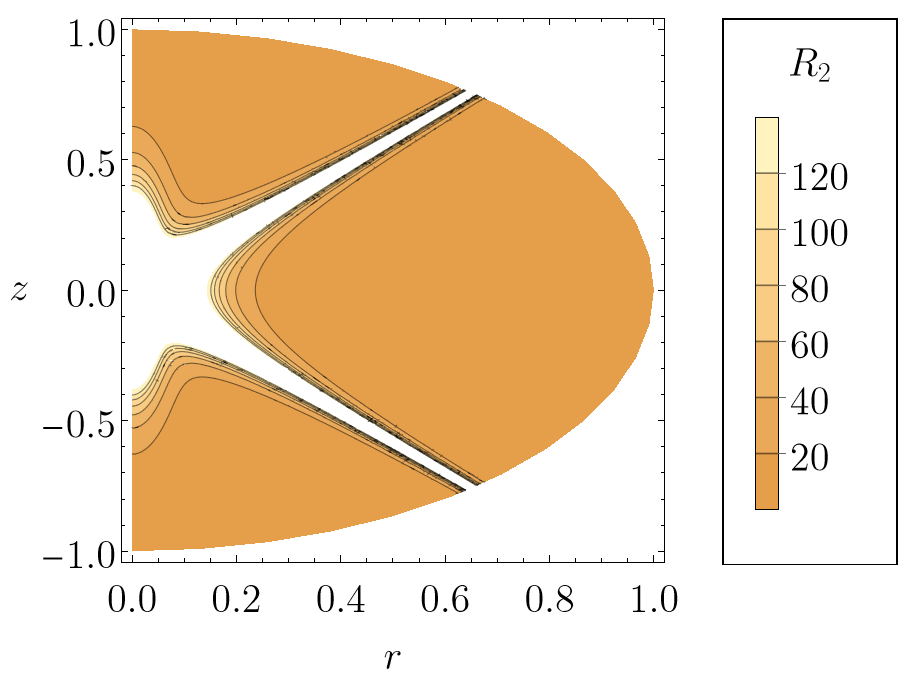}
                    \caption{A plot of $R_2$ for the stream function~\eqref{eq:hillInteriorStreamFunction}. This is singular along $100r^4-71r^2z^2-2z^4=0$ where $E_-=0$ and hence where the metric $g_2$ is degenerate.}
                    \label{fig:curvatureR2InteriorHill} 
                \end{subfigure}
                \caption{Contour plots of the curvatures~\eqref{eq:curvature4DHV} (left) and~\eqref{eq:curvatureHV} (right) respectively, for the interior of Hill's spherical vortex. Note that the curvature singularities in these two plots do not coincide, in contrast to earlier examples.}
            \end{center}
            \vspace{-10pt}
        \end{figure} 

        The metric~\eqref{eq:fluidMetric3dReduced} takes the form
        \begin{equation}\label{eq:hicksLRMetric}
            \hat g_2\ =\ 
            \begin{pmatrix}
                (\hat f_2+\hat h_+)\unit_2 & 0
                \\
                0 & \unit_2
            \end{pmatrix}.
        \end{equation}
        and $\hat R_2$ is given by~\eqref{eq:flatBackCurvatureLR2D}, with $f$ replaced by $\hat f_2+\hat h_+$. Namely
        \begin{equation}\label{eq:curvature4DHV}
        	\hat R_2\ =\ \frac{56\big(4r^2+3z^2\big)}{9\big(4r^2-3z^2\big)^3}~,
        \end{equation}
        which is plotted in \cref{fig:curvature4DHV}. When $4r^2>3z^2$, then $\hat f_2+\hat h_+>0$, the metric is Riemannian, and the curvature scalar is positive. Similarly, the metric is Kleinian and the curvature scalar negative when $\hat f_2+\hat h_+<0$ and $4r^2<3z^2$. Furthermore, the metric is singular when $4r^2=3z^2$, that is, when $\hat f_2+\hat h_+=0$ and it is clear that this singularity is also one for the curvature.        

        The pull-back metric~\eqref{eq:pullbackFluidMetric3dReduced} becomes
		\begin{equation}\label{eq:pullbackMetricHV}
			g_2\ =\ \frac94
            \begin{pmatrix}
                20r^2-2z^2 & 9rz
                \\
                9rz & 5r^2+z^2
            \end{pmatrix}.
		\end{equation}
		Its eigenvalues, displayed in \cref{fig:eigenvaluesHV}, are given by
		\begin{equation}\label{eq:eigenvaluesHV}
			E_\pm\ =\ \tfrac98\Big(25r^2-z^2\pm3\sigma\sqrt{(25r^2+z^2)}\Big)\,.
		\end{equation}
		
		 \begin{figure}[ht]
            \begin{center}
                \begin{subfigure}[b]{0.425\textwidth}
                    \includegraphics[scale=0.201]{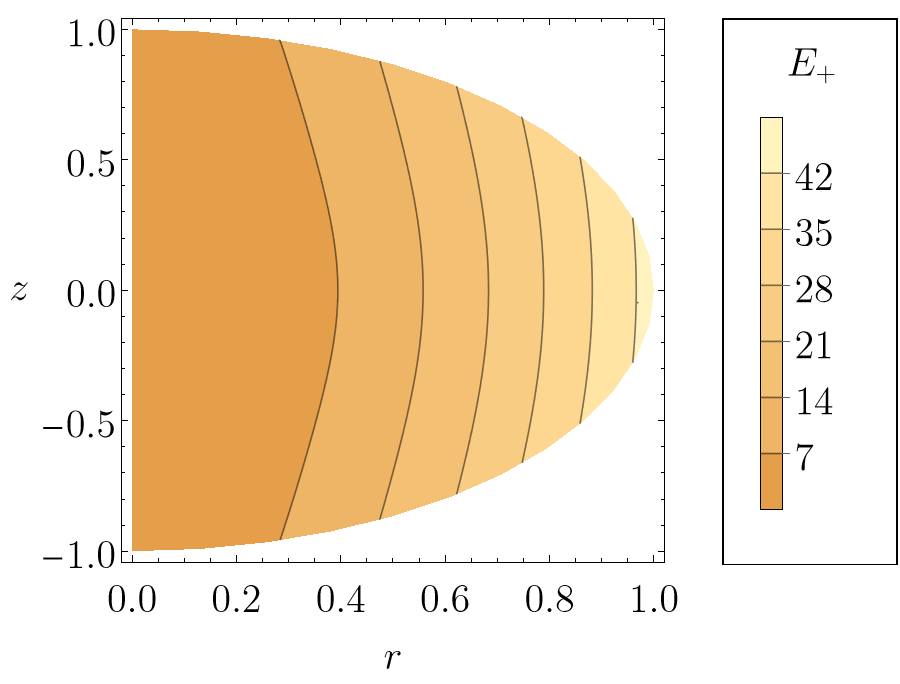}
                    \caption{A contour plot for the eigenvalue $E_+$. This eigenvalue is always positive and increases in magnitude with $z$. Hence, the signature of $g_2$ is determined by $E_-$.}   
                \end{subfigure}
                \hspace{25pt}
                \begin{subfigure}[b]{0.425\textwidth}
                    \includegraphics[scale=0.201]{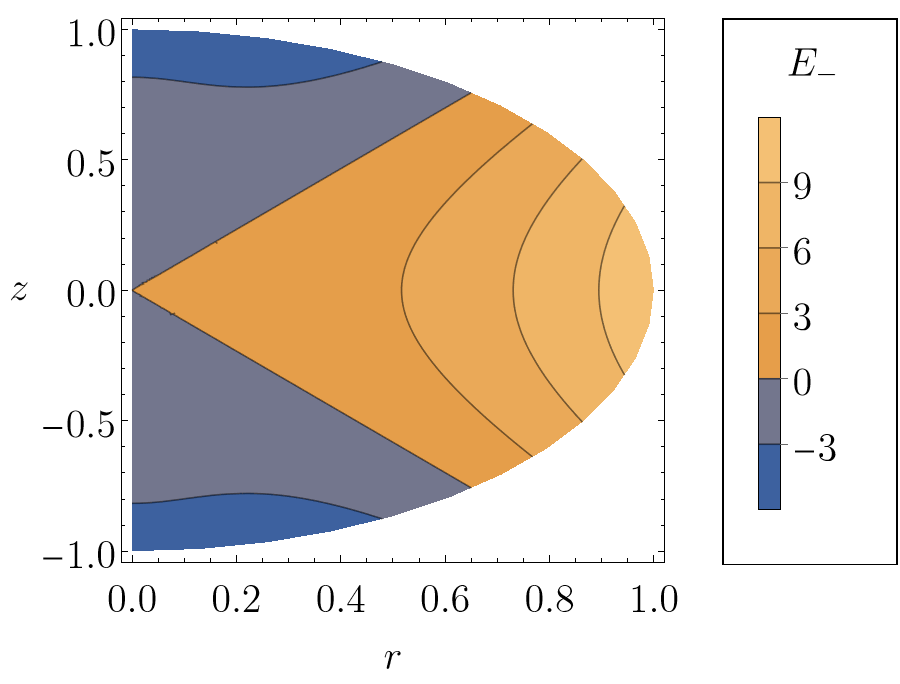}
                    \caption{A contour plot for the eigenvalue $E_-$. This eigenvalue vanishes along $100r^4-71r^2z^2-2z^4=0$, is positive between these lines and negative outside of them.}
                \end{subfigure}
                \caption{Plots of the eigenvalues~\eqref{eq:eigenvaluesHV} of the pull-back metric~\eqref{eq:pullbackMetricHV} for the interior solution of Hill's spherical vortex.}
                \label{fig:eigenvaluesHV}
            \end{center}
        \end{figure}

        Furthermore, the curvature scalar $R_2$ associated with~\eqref{eq:pullbackMetricHV} is
        \begin{equation}\label{eq:curvatureHV}
            R_2\ =\ \frac{28\big(50r^4+z^4\big)}{9\big(100r^4-71r^2z^2-2z^4\big)^2}~.
        \end{equation}
        Both the curvature $R_2$ and eigenvalue $E_+$ are non-negative, with $E_+$ vanishing only at the origin. Additionally, $R_2$ is singular precisely when $E_-$ vanishes, that is, where the metric is singular. See \cref{fig:curvatureR2InteriorHill}.

        \section{Summary and Conclusions}\label{sec:conclusions}

        We have developed a framework for studying the Poisson equation for the pressure, for incompressible flows, by formulating the concept of higher Monge--Amp{\`e}re geometry. Earlier work has been revisited and the definition of Monge--Amp{\`e}re structures extended, using higher symplectic geometry, to facilitate a study of equations in three (or more) independent variables that are not necessarily of explicit Monge--Amp{\`e}re type. 

        In contrast to the earlier work, the focal point of our investigations has shifted from the perspective of symplectic geometry to that of Riemannian and Kleinian geometries, as defined by a metric and its pull-back to higher Lagrangian submanifolds. This change of viewpoint has illuminated some seemingly important connections between fluid flows dominated by either vorticity or strain and the geometry of Lagrangian submanifolds. From the examples studied thus far, regions of the flow one might label `a vortex' are characterised by Monge--Amp{\`e}re structures with Riemannian metrics, whereas those regions in which strain dominates, are characterised by Kleinian metrics. Where vorticity or strain accumulate, the higher Lagrangian submanifold develops curvature, and the singular behaviour of the pull-back metric and the scalar curvature arising from~\eqref{eq:PEP} appears to delineate regions of the flow with distinct topological characteristics. Furthermore, where vorticity dominates over the strain, the metric is typically Riemannian with positive curvature scalar.

        We have focused on the Ricci scalar curvature, simply because, as an invariant, it is a natural starting point for attempting to identify the salient connections between the characteristics of the fluid flow and the geometry of the Lagrangian submanifolds. However, the Ricci curvature itself may reveal further insights. 

        We have also noted that there is typically a one-parameter family of metrics, with time $t$ acting as the parameter. The evolution of the metrics, as the parameter $t$ increases, will depend on whether the evolution of the dynamics is governed by the incompressible Euler or by the incompressible Navier--Stokes equations. It is through such a time evolution that the geometry we have introduced will reflect the differences in the solutions to these two sets of equations. It is therefore intriguing to speculate as to whether Monge--Amp{\`e}re structures may reveal information on the existence and/or regularity of solutions to the Euler and Navier--Stokes equations, via a notional geometric flow. A connection between Monge--Amp{\`e}re geometry and optimal transport~\cite{Kim:2010aa}, in terms of a metric whose properties relate to regularity of solutions to Monge--Amp{\`e}re equations has been made in~\cite{Donofrio:2023aa}.

        Possible future directions include a systematic study of the fully three-dimensional solutions to the Navier--Stokes equations and the associated metrics, with particular emphasis on topological properties of the higher Lagrangian submanifolds that might be characterised by the metrics and their curvature. 

        A recent companion study to this paper,~\cite{Donofrio:2022aa}, has also explored the connections between Monge--Amp{\`e}re structures and the geometry of Lagrangian submanifolds arising in a more conventional application of Monge--Amp{\`e}re geometry, in which a fully non-linear Monge--Amp{\`e}re equation lies at the heart of a model used in the study of geophysical flows. Emphasis in that paper is placed on the projection between $M$ and $L$, and singularities thereof. This is an important issue to follow up in the context of the results presented here.

        \appendix
        \addappheadtotoc
        \appendixpage 

        \appendices

        \section{Lagrangian submanifolds}\label{app:LagrangianSubmanifolds}
        
        Let $M$ be an $m$-dimensional manifold and $\pi:T^*M\rightarrow M$ its cotangent bundle. We call a submanifold $\iota: L\hookrightarrow T^*M$ \uline{locally a section} if and only if, for each $y\in L$, there exists a neighbourhood $V_y\subseteq L$ of $y$, an open and contractible set $U_y\subseteq M$, and a function $\psi_y\in\scC^\infty(U_y)$ such that $\iota(V_y)=\rmd\psi_y(U_y)$. Next, let $U_M\subseteq M$ be open and contractible and let $(x^i,q_i)$ with $i,j,\ldots=1,\ldots,m$ be local coordinates on $\pi^{-1}(U_M)\subseteq T^*M$ with $x^i$ local coordinates on $M$ and $q_i$ local fibre coordinates, respectively. Consider a Lagrangian submanifold $\iota:L\hookrightarrow T^*M$ with respect to the standard symplectic form $\omega\coloneqq\rmd q_i\wedge\rmd x^i$.

        \begin{proposition} 
            The Lagrangian submanifold $L$ is locally a section $\rmd\psi:U_M\rightarrow T^*M$ for some $\psi\in\scC^\infty(U_M)$ if and only if $\pi|_L\coloneqq\pi\circ\iota:L\rightarrow M$ is a local diffeomorphism.
        \end{proposition}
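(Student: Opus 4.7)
The statement is a standard local normal-form result for Lagrangian submanifolds of a cotangent bundle, and the plan is to prove the two implications separately, with the non-trivial direction reducing to the Poincar{\'e} lemma.

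\textbf{Forward direction.} Assume $L$ is locally a section. Fix $y \in L$ with neighbourhood $V_y \subseteq L$ and contractible $U_y \subseteq M$ such that $\iota(V_y) = \rmd\psi_y(U_y)$ for some $\psi_y \in \scC^\infty(U_y)$. In the local coordinates $(x^i, q_i)$ on $\pi^{-1}(U_y)$, the embedding reads $\iota : x^i \mapsto (x^i, \partial_i \psi_y)$, so $\pi|_L$ is represented by $(x^i, \partial_i\psi_y) \mapsto x^i$ on $V_y$, which is manifestly a diffeomorphism onto $U_y$. Since $y$ was arbitrary, $\pi|_L$ is a local diffeomorphism.

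\textbf{Backward direction.} Assume $\pi|_L$ is a local diffeomorphism. Fix $y \in L$. Shrinking if necessary, pick a neighbourhood $V_y \subseteq L$ on which $\pi|_L$ restricts to a diffeomorphism onto an open and contractible $U_y \subseteq M$ (this uses the good-cover hypothesis stated in the paper). Then $V_y$ is parametrised by the coordinates $x^i$ pulled back from $U_y$, and $\iota|_{V_y}$ takes the form $x^i \mapsto (x^i, q_i(x))$ for smooth functions $q_i$ on $U_y$. The Lagrangian condition $\iota^*\omega = 0$ then reads
\begin{equation}
    0 \;=\; \iota^*(\rmd q_i \wedge \rmd x^i) \;=\; \partial_j q_i \, \rmd x^j \wedge \rmd x^i \;=\; \tfrac12(\partial_j q_i - \partial_i q_j)\,\rmd x^j \wedge \rmd x^i,
\end{equation}
so the one-form $\vartheta \coloneqq q_i \rmd x^i$ on $U_y$ is closed. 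By the Poincar{\'e} lemma (applicable since $U_y$ is open and contractible), there exists $\psi_y \in \scC^\infty(U_y)$ with $\vartheta = \rmd\psi_y$, i.e.\ $q_i = \partial_i \psi_y$. Hence $\iota(V_y) = \rmd\psi_y(U_y)$, showing that $L$ is locally a section.

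\textbf{Main obstacle.} There is no substantive obstacle: the forward direction is immediate from the definition, and the backward direction is a standard consequence of the Poincar{\'e} lemma once one observes that the Lagrangian condition in sectional coordinates is precisely the closedness of $q_i\rmd x^i$. The only subtlety is ensuring that one can shrink $V_y$ so that $\pi(V_y)$ is contractible (not merely open), which is guaranteed by the good-cover convention adopted at the outset of the paper.
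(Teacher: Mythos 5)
Your proof is correct and follows essentially the same route as the paper's: the forward direction is immediate from the definition, and the backward direction converts the Lagrangian condition into closedness of $q_i\,\rmd x^i$ and invokes the Poincar\'e lemma. The only difference is cosmetic --- you take $x^i$ as coordinates on $V_y$ from the outset, which eliminates the chain-rule computation the paper performs in arbitrary coordinates $y^i$ (a simplification the paper itself notes in the remark immediately following its proof).
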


        \begin{proof}
            Suppose that $L$ is locally a section and consider a point $y\in L$. Evidently, $V_y$ and $\iota(V_y)$ are diffeomorphic. Furthermore, as the restriction $\pi|_{\rmd\psi_y(U_y)}$ is the inverse of $\rmd\psi_y: U_y\rightarrow \rmd\psi_y(U_y)\subseteq T^*M$, it follows that $U_y$ and $\rmd\psi_y(U_y)$ are diffeomorphic. As $\rmd\psi_y(U_y)= \iota(V_y)$, it then follows that $U_y$ and $V_y$ are diffeomorphic, with diffeomorphism given by $\pi|_L|_{V_y}:V_y\rightarrow U_y$ and $\iota^{-1}\circ\,\rmd\psi_y:U_y\rightarrow V_y$. Since this holds for all $y\in L$, it follows that $\pi|_L:L\rightarrow M$ is a local diffeomorphism.  

            Conversely, suppose that $\pi|_L:L\rightarrow M$ is a local diffeomorphism and let $y\in L$ be arbitrary. By the local diffeomorphism property of $\pi|_L:L\rightarrow M$, there exists a neighbourhood $V_y\subseteq L$ of $y$ such that $\pi|_L(V_y)\eqqcolon U_y$ is open and contractible in $M$ and $\pi|_L|_{V_y}$ is a diffeomorphism onto its image. Let $y^i$ be local coordinates on $V_y$. Then, $V_y\ni y^i\xmapsto{\pi|_L}x^i(y)\in U_y$. Again by the local diffeomorphism properties of $\pi|_L$, we have locally the invertibility of the Jacobian $\parder[x^i]{y^j}$ and the inverse relation $y^i=y^i(x)$. Hence, the embedding $\iota:L\hookrightarrow T^*M$ becomes 
            \begin{equation}
            	\iota\,:\,y^i\ \mapsto\ (x^i(y),q_i(y))\ =\ (x^i,q_i(y(x)))\ \eqqcolon\ (x^i,p_i(x))
            \end{equation} 
            in local coordinates. Furthermore, since $L$ is Lagrangian with respect to $\omega=\rmd q_i\wedge\rmd x^i$, we find
            \begin{equation}
                \parder[x^i]{y^j}\parder[q_i]{y^k}\ =\ \parder[x^i]{y^k}\parder[q_i]{y^j}
            \end{equation}
            upon computing $\iota^*\omega=0$.\footnote{Note that $\iota_*\big(\parder{y^i}\big)=\parder[x^j]{y^i}\parder{x^j}+\parder[q_j]{y^i}\parder{q_j}$.} Hence,
            \begin{equation}
                \begin{aligned}
                    \parder[x^i]{y^l}\left(\parder[p_j]{x^i}-\parder[p_i]{x^j}\right)\ &=\ \parder[x^i]{y^l}\left(\parder[y^k]{x^i}\parder[q_j]{y^k}-\parder[y^k]{x^j}\parder[q_i]{y^k}\right)
                    \\
                    &=\ \parder[q_j]{y^l}-\parder[y^k]{x^j}\parder[x^i]{y^l}\parder[q_i]{y^k}
                    \\
                    &=\ \parder[q_j]{y^l}-\parder[y^k]{x^j}\parder[x^i]{y^k}\parder[q_i]{y^l}
                    \\
                    &=\ \parder[q_j]{y^l}-\parder[q_j]{y^l}
                    \\
                    &=\ 0~.
                \end{aligned}
            \end{equation}
            Therefore,
            \begin{equation}
                \parder[p_j]{x^i}-\parder[p_i]{x^j}\ =\ 0~,
            \end{equation}       
            that is, the one-form $\eta\coloneqq p_i\rmd x^i$ is closed. Consequently, by the Poincar\'e lemma, there is a function $\psi_y\in\scC^\infty(U_y)$ so that $\eta=\rmd\psi_y$ (and therefore $p_i = \partial_i\psi_y$). It follows that $\rmd\psi_y(U_y) = \iota(V_y)$. As this holds for any $y\in L$, it follows that $L$ is locally a section.        
        \end{proof}

        When $\pi|_L:L\rightarrow M$ is a local diffeomorphism, we may choose coordinates on $L$ so that $\pi|_L$ becomes locally the identity in those coordinates. Put differently, we may take the $x^i$ as local coordinates on $L$ in that case.

        \section{Non-degenerate Monge--Amp{\`e}re structures}\label{app:MongeAmpereStructures}

        Let $(M,\omega)$ be a $2m$-dimensional almost symplectic manifold. Following~\cite{Lychagin:1979aa}, a differential $p$-form is called \uline{$\omega$-effective} if and only if $\omega^{-1}\intprod\alpha=0$. Whenever $p=m$ this is equivalent to requiring $\alpha\wedge\omega=0$. Then, we have the \uline{Hodge--Lepage--Lychagin theorem}~\cite{Lychagin:1979aa} (see also the text book~\cite{Kushner:2007aa} for a comprehensive treatment):

        \begin{theorem}\label{thm:HodgeLepageLychagin}
            Let $(M,\omega)$ be an almost symplectic manifold.  Then, any differential $p$-form $\alpha\in\Omega^p(M)$ has a unique decomposition 
            \begin{equation}
            	\alpha\ =\ \alpha_0+\alpha_1\wedge\omega+\alpha_2\wedge\omega\wedge\omega+\cdots
            \end{equation}
            into $\omega$-effective differential $(p-2k)$-forms $\alpha_k\in\Omega^{p-2k}(M)$. Furthermore, if two $\omega$-effective $p$-forms vanish on the same $p$-dimensional isotropic submanifolds, they must be proportional.
        \end{theorem}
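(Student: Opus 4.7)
The plan is to reduce the entire statement to a pointwise claim about the exterior algebra $\bigwedge^\bullet T_x^*M$ at each point $x\in M$ and then invoke a representation-theoretic argument. Concretely, at each point one has the three endomorphisms $L\colon\alpha\mapsto\omega\wedge\alpha$, $\Lambda\colon\alpha\mapsto\omega^{-1}\intprod\alpha$, and $H\coloneqq[L,\Lambda]$, which a direct check shows satisfy the $\mathfrak{sl}(2)$ commutation relations $[H,L]=2L$, $[H,\Lambda]=-2\Lambda$ on $\bigwedge^\bullet T_x^*M$, with $H$ acting as multiplication by $k-m$ on $k$-forms (after a shift). In this language, an $\omega$-effective form is precisely a primitive (lowest-weight) vector, i.e.\ an element of $\ker\Lambda$.

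For the first (decomposition) assertion, I would show that $\bigwedge^\bullet T_x^*M$ decomposes as a direct sum of finite-dimensional irreducible $\mathfrak{sl}(2)$-modules, and then use the standard structure theorem: every such irreducible has a one-dimensional space of primitives generating the whole module via repeated application of $L$. Thus each $\alpha\in\Omega^p(M)$ decomposes uniquely, pointwise, as $\alpha=\sum_{k\geq 0} L^k\alpha_k=\sum_{k\geq 0}\alpha_k\wedge\omega^k$ with $\alpha_k$ primitive of degree $p-2k$; the smoothness of the pieces $\alpha_k$ follows because the projection onto each isotypic component is given by a universal polynomial in $L$ and $\Lambda$ (with coefficients depending only on $m$ and $p$), which are smooth tensorial operations. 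The uniqueness is immediate from the direct-sum property of the representation.

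For the second assertion, the plan is again to work pointwise at $x\in M$ and use a model argument in $V\coloneqq T_x^*M$ with the symplectic form $\omega_x$. I would first fix a Darboux-type basis of $V$ in which $\omega_x=\sum_i e^i\wedge f^i$, and parametrise $p$-dimensional isotropic subspaces $W\subseteq T_xM$ via their annihilators; the value of an effective $p$-form $\alpha$ on such a $W$ is a function on the Grassmannian of isotropic $p$-planes. The key lemma, due to Lychagin, is that this evaluation map is injective on the space of effective $p$-forms up to scalar when restricted appropriately: two effective forms with the same zero locus in this Grassmannian differ by a scalar. I would prove this by showing, using the transitivity of the symplectic group on isotropic $p$-planes and Schur's lemma applied to the irreducible $\mathfrak{sp}(2m)$-representation on primitive $p$-forms, that the common vanishing locus determines the form projectively.

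The main obstacle is the last step: passing from the pointwise projective rigidity of effective forms to a global statement about Monge--Amp\`ere-type equations requires checking that the scalar of proportionality can be chosen smoothly and consistently across $M$. This is where one must be careful about the difference between \emph{isotropic submanifolds} (a global geometric object, namely integral manifolds of the relevant Pfaffian system) and pointwise isotropic subspaces; I would handle this by arguing that through each point there are enough isotropic submanifolds tangent to every isotropic $p$-plane, so that the pointwise result propagates, and then noting that the pointwise proportionality constant is a continuous (in fact smooth) function on $M$ which, by the irreducibility argument above, must be locally constant on the open locus where the forms are nonzero. Everything else --- the $\mathfrak{sl}(2)$ commutation relations, the combinatorics of $L^k$ on primitives, and the formula for the projectors --- is standard and can be imported from the representation theory of $\mathfrak{sl}(2)$ as in~\cite{Kushner:2007aa}.
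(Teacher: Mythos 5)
The paper does not actually prove \cref{thm:HodgeLepageLychagin}; it is imported verbatim from~\cite{Lychagin:1979aa} and the textbook~\cite{Kushner:2007aa}, so your proposal has to be measured against the standard proof in those references. For the first assertion your sketch is essentially that proof: the operators $L=\omega\wedge\cdot$, $\Lambda=\omega^{-1}\intprod\cdot$ and $H=[L,\Lambda]$ generate a pointwise $\mathfrak{sl}(2)$-action on $\bigwedge^\bullet T^*_xM$, effective forms are the primitive vectors, and existence, uniqueness, and smoothness of the components follow from complete reducibility together with the fact that the projectors onto the primitive pieces are universal polynomials in $L$ and $\Lambda$. Closedness of $\omega$ is never used, so the almost symplectic setting is fine. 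This part is correct.

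The second assertion is where there is a genuine gap. Transitivity of $\mathsf{Sp}(2m)$ on isotropic $p$-planes plus Schur's lemma does not deliver the conclusion: the subspace of effective $p$-forms vanishing on a prescribed family $Z$ of isotropic $p$-planes is linear, but it is invariant only under the stabiliser of $Z$ in $\mathsf{Sp}(2m)$, not under the full group, so irreducibility of the primitive $p$-forms as an $\mathsf{Sp}(2m)$-module says nothing about its dimension. Nor can one argue via a ratio of sections on the isotropic Grassmannian, since over $\IR$ two forms can share a zero locus without their ``divisors'' agreeing (compare the failure of the analogous statement for real quadratic forms). The actual proof in~\cite{Lychagin:1979aa,Kushner:2007aa} is a direct induction on $p$ and on $\dim V$, using that for a vector $X$ in an isotropic plane the contraction $X\intprod\alpha$ descends to an effective $(p-1)$-form on the symplectic reduction $X^{\perp_\omega}/\langle X\rangle$; you would need to supply this (or an equivalent) argument. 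Two smaller points: the statement is really the pointwise linear-algebraic one (isotropic subspaces of $T_xM$), so your worries about constructing enough isotropic submanifolds and about the proportionality factor being locally constant are beside the point --- on a manifold the factor is merely a function, and local constancy is neither needed nor implied.
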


        Let now $M$ be four-dimensional and $(\omega,\alpha)$ a Monge--Amp{\`e}re structure on $M$, that is, $\alpha\in\Omega^2(M)$ with $\alpha\wedge\omega=0$ and suppose that the \uline{Pfaffian} $\sfPf(\alpha)\in\scC^\infty(M)$, defined by $\alpha\wedge\alpha=\sfPf(\alpha)\omega\wedge\omega$, is non-zero. We then set~\cite{Lychagin:1993aa}
        \begin{equation}\label{eq:almostStructure4D}
            \frac{\alpha}{\sqrt{|\sfPf(\alpha)|}}\ \eqqcolon\ J_\alpha\intprod\omega
        \end{equation}
        for $J_\alpha$ an endomorphism of the tangent bundle. This yields the identity
        \begin{equation}
            J_\alpha\intprod(\alpha\wedge\omega)\ =\ \sqrt{|\sfPf(\alpha)|}\left(J^2_\alpha\intprod\omega+\frac{\sfPf(\alpha)}{|\sfPf(\alpha)|}\,\omega\right)\!\wedge\omega~.
        \end{equation}
        Upon combining this identity with the $\omega$-effectiveness of $\alpha$ and the non-degeneracy of $\omega$, we immediately see that $J_\alpha$ is an almost complex (respectively, para-complex) structure when $\sfPf(\alpha)>0$ (respectively, $\sfPf(\alpha)<0$). The differential forms $\omega$ and $J_\alpha\intprod\omega$ define the non-degenerate differential $(2,0)$- and $(0,2)$-forms with respect to $J_\alpha$. Then, we have the following result:

        \begin{proposition}\label{prop:3rd3Form2d}
            For $J_\alpha$ as defined in~\eqref{eq:almostStructure4D} there exists a differential $(1,1)$-form $K$ on $M$ such that $ K\wedge K\neq0$, $K\wedge\omega=0$, and $ K\wedge(J_\alpha\intprod\omega)=0$.
        \end{proposition}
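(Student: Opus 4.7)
The plan is to reduce the existence of $K$ to a type argument with respect to $J_\alpha$: in real dimension four, the wedge product of a $(1,1)$-form with a form of type $(2,0)$ or $(0,2)$ lands in $\Lambda^{3,1}$ or $\Lambda^{1,3}$, both of which vanish since the underlying complex (or para-complex) dimension is only two. Hence, once one knows that $\omega$ and $J_\alpha\intprod\omega$ both lie in the $(2,0)+(0,2)$-subspace of $\Lambda^2T^*M$ at each point, the two wedge conditions $K\wedge\omega=0$ and $K\wedge(J_\alpha\intprod\omega)=0$ are automatic for any $(1,1)$-form $K$, and only the non-degeneracy condition $K\wedge K\neq 0$ remains.

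First, I would verify that at each point, $\omega$ and $J_\alpha\intprod\omega$ span the real $(2,0)+(0,2)$-subspace. Concretely, this amounts to showing $\omega(J_\alpha\cdot,J_\alpha\cdot) = -\omega(\cdot,\cdot)$ when $\sfPf(\alpha)>0$ and $\omega(J_\alpha\cdot,J_\alpha\cdot) = \omega(\cdot,\cdot)$ when $\sfPf(\alpha)<0$. This follows from the defining relation \eqref{eq:almostStructure4D}, the $\omega$-effectivity $\alpha\wedge\omega=0$, and the Pfaffian identity $\alpha\wedge\alpha=\sfPf(\alpha)\,\omega\wedge\omega$; it is essentially the same algebraic calculation that already establishes $J_\alpha^2=\pm 1$ in Lychagin's theorem. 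Linear independence of $\omega$ and $J_\alpha\intprod\omega$ follows from effectivity: any proportionality $J_\alpha\intprod\omega=\lambda\omega$ with $\lambda\neq 0$ would force $\alpha=\sqrt{|\sfPf(\alpha)|}\,\lambda\,\omega$ and hence $\alpha\wedge\omega\propto\omega\wedge\omega\neq 0$, contradicting $\alpha\wedge\omega=0$; the case $\lambda=0$ is excluded since $\sfPf(\alpha)\neq 0$. As the real $(2,0)+(0,2)$-subspace has dimension two, $\omega$ and $J_\alpha\intprod\omega$ span it pointwise.

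Finally, I would construct a non-degenerate $(1,1)$-form via a standard almost (para-)Hermitian metric construction: there exists a metric $h$ on $M$ satisfying $h(J_\alpha\cdot,J_\alpha\cdot)=\epsilon\,h(\cdot,\cdot)$, with $\epsilon=+1$ when $\sfPf(\alpha)>0$ (so that $h$ is Riemannian) and $\epsilon=-1$ when $\sfPf(\alpha)<0$ (so that $h$ is of split signature $(m,m)$). Setting $K(X,Y) := h(J_\alpha X,Y)$ produces an antisymmetric 2-form, as follows immediately from the compatibility relation together with the symmetry of $h$. The same compatibility makes $K$ of type $(1,1)$ with respect to $J_\alpha$, and non-degeneracy of $h$, combined with invertibility of $J_\alpha$, makes $K$ non-degenerate. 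In real dimension four, non-degeneracy of $K$ is equivalent to $K\wedge K\neq 0$, which together with the type argument from the second paragraph yields the remaining conditions $K\wedge\omega=0$ and $K\wedge(J_\alpha\intprod\omega)=0$. The main obstacle is the type assertion in the second paragraph: although elementary, it requires careful algebraic bookkeeping of $J_\alpha$ acting on 2-forms and is the essential input from the Kushner--Lychagin--Rubtsov framework; granting it, the remainder of the proof is formal.
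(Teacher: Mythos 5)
Your argument is correct, but it runs in the opposite direction to the paper's. The paper first manufactures the two wedge conditions and only afterwards deduces the $(1,1)$ type: starting from an arbitrary $\rho\in\Omega^2(M)$ pointwise independent of $\omega$ and $J_\alpha\intprod\omega$, it applies \cref{thm:HodgeLepageLychagin} twice to strip off the components proportional to $\omega$ and to $J_\alpha\intprod\omega$ (i.e.\ it projects onto the orthogonal complement of their span with respect to the wedge pairing on $\bigwedge^2T^*M$), obtains $K\coloneqq\rho_1$ satisfying $K\wedge\omega=0=K\wedge(J_\alpha\intprod\omega)$ and $K\wedge K\neq 0$, and only then concludes that $K$ is of type $(1,1)$ because it annihilates the non-degenerate $(2,0)$- and $(0,2)$-forms built from $\omega$ and $J_\alpha\intprod\omega$. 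You instead build $K$ directly as the fundamental form of a compatible almost (para-)Hermitian metric and get the wedge conditions for free from the bidegree count $\Lambda^{3,1}=\Lambda^{1,3}=0$ in complex (or para-complex) dimension two. Your key input --- that $\omega$ and $J_\alpha\intprod\omega$ satisfy $\omega(J_\alpha\cdot,J_\alpha\cdot)=-\sgn(\sfPf(\alpha))\,\omega$ and hence lie in the real $(2,0)+(0,2)$ subspace --- is an immediate consequence of $J_\alpha^2=-\sgn(\sfPf(\alpha))\,\unit$ together with the $\omega$-symmetry of $J_\alpha$ (which holds because $J_\alpha\intprod\omega\propto\alpha$ is skew), both of which the paper establishes in the displayed identity just before the proposition; so this step is less of an external import than you suggest. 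Your route has the advantage of exhibiting $K$ as a genuinely geometric object (the paper's explicit choice~\eqref{eq:almostKaehlerForm2d} is exactly of this form), whereas the paper's route stays entirely inside the effective-forms calculus and gives the type property as a corollary rather than an input.

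The one step you should not wave through is the existence of the compatible metric $h$ in the Kleinian case $\sfPf(\alpha)<0$: averaging a Riemannian metric over $J_\alpha$ produces $h(J_\alpha\cdot,J_\alpha\cdot)=+h$, not $-h$, so the para-Hermitian case instead requires a non-degenerate pairing between the two rank-two eigenbundles of $J_\alpha$ (equivalently an isomorphism $T^-\cong(T^+)^*$), which is immediate locally but needs a word globally. To be fair, the paper's proof has a soft spot at the corresponding place --- the assertion that $\rho_1\wedge\rho_1\neq0$ does not follow merely from non-degeneracy of the wedge pairing on all of $\bigwedge^2T^*M$, since a non-zero vector in a split-signature space may be null; what saves both arguments is that the wedge pairing restricted to the $(1,1)$-subbundle (equivalently, to the orthogonal complement of $\spn\{\omega,J_\alpha\intprod\omega\}$) is itself non-degenerate, so non-null elements exist pointwise, and the good-cover assumption together with the explicit formula~\eqref{eq:almostKaehlerForm2d} disposes of the global question in the case actually used.
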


        \begin{proof}
            Note that $\omega$ and $J_\alpha\intprod\omega$ are linearly independent. Next, let $\rho\in\Omega^2(M)$ be such that $\{\omega, J_\alpha\intprod\omega,\rho\}$ is linearly independent. By \cref{thm:HodgeLepageLychagin}, we have a unique decomposition $\rho=\rho_0+\lambda_0\omega$ with $\rho_0\wedge\omega=0$ and $\lambda_0\in\scC^\infty(M)$. Since $(J_\alpha\intprod\omega)\wedge(J_\alpha\intprod\omega)\neq0$, we may again apply \cref{thm:HodgeLepageLychagin} to obtain the unique decomposition $\rho_0=\rho_1+\lambda_1(J_\alpha\intprod\omega)$ with $\lambda_1\in\scC^\infty(M)$ such that $\rho_1\wedge(J_\alpha\intprod\omega)=0$. Since $(J_\alpha\intprod\omega)\wedge\omega=0$, we also have $\rho_1\wedge\omega=0$. Hence, $\{\omega, J_\alpha\intprod\omega,\rho_1\}$ is linearly independent, and we must also have that $\rho_1\wedge\rho_1\neq0$ since the exterior product yields a non-degenerate metric  on $\bigwedge^2T^*M$. In summary, we have thus obtained a $K\coloneqq\rho_1$ such that  $K\wedge K\neq0$, $K \wedge\omega=0$, and $K\wedge(J_\alpha\intprod\omega)=0$. 

            Finally, since $\omega$ and $J_\alpha\intprod\omega$ combine to give non-degenerate differential $(2,0)$- and $(0,2)$-form, $\Omega^{(2,0)}$ and $\Omega^{(0,2)}$, and since $K\wedge\omega=0$ and $ K\wedge(J_\alpha\intprod\omega)=0$, we conclude that $ K\wedge\Omega^{(2,0)}=0$ and $ K \wedge\Omega^{(0,2)}=0$. Since $\Omega^{(2,0)}\wedge\Omega^{(0,2)}\neq0$, $K$ must be of type $(1,1)$ with respect to $J_\alpha$.
        \end{proof}

        \section{Connections and curvatures}

		\subsection{Pull-back metric in two dimensions}\label{app:pullbackCurvature}

		In what follows, we shall provide some more details on the computation of the Levi-Civita connection and curvature scalar associated with the metric~\eqref{eq:fluidMetric2dPullBack} from~\cref{sec:geometry2dFlows}. Firstly, recall that using~\eqref{eq:vorticity2d}, the metric~\eqref{eq:fluidMetric2dPullBack} can be written in the form
		\begin{equation}
			g_{ij}\ =\ \zeta\tilde g_{ij}
			\ewith
			\tilde g_{ij}\ =\ \psi_{ij}~,
		\end{equation}
		where the indices on $\psi\in\scC^\infty(M)$ are interpreted via~\eqref{eq:psiIndexNotation}. As $g$ is, up to a sign, a conformal scaling of the Hessian metric with respect to $\psi$ when $\zeta\neq0$, we wish to exploit this to write the connection and curvature scalar of $g$ in terms of those for $\tilde g$. 
		
		\paragraph{Connection.}
		We begin by observing that 
		\begin{equation}\label{eq:threePsi}
            \bg{\nabla}_i\psi_{jk}\ =\ \psi_{ijk}+\tfrac13\big([\bg{\nabla}_i,\bg{\nabla}_j]\psi_k+[\bg{\nabla}_i,\bg{\nabla}_k]\psi_j\big)\ =\ \psi_{ijk}-\tfrac13\big(\bg{R}_{ijk}{}^l+\bg{R}_{ikj}{}^l\big)\psi_l
		\end{equation}
        and so,
		\begin{equation}\label{eq:threePsiSum}
            \bg{\nabla}_i\psi_{jk}+\bg{\nabla}_j\psi_{ik}-\bg{\nabla}_k\psi_{ij}\ =\ \psi_{ijk}+\tfrac43\bg R_{k(ij)}{}^l\psi_l~.
		\end{equation}
		Consequently, the Christoffel symbols for $\tilde g$ are given by
        \begin{subequations}\label{eq:LCPullback2dApp}
            \begin{equation}
                \begin{aligned}
                    \tilde\Gamma_{ij}{}^k\ &=\ \tfrac12\tilde g^{kl}(\partial_i\tilde g_{jl}+\partial_j\tilde g_{il}-\partial_l\tilde g_{ij})
                    \\
                    &=\ \bg{\Gamma}_{ij}{}^k+\tfrac12\tilde g^{kl}\big(\bg{\nabla}_i\psi_{jl}+\bg{\nabla}_j\psi_{il}-\bg{\nabla}_l\psi_{ij}\big)
                    \\
                    &=\ \bg{\Gamma}_{ij}{}^k+\tfrac12\Upsilon_{ijl}\tilde g^{lk}~,
                \end{aligned}
            \end{equation}
    		where we have used~\eqref{eq:threePsiSum} and introduced the notation
            \begin{equation}\label{eq:UpsilonApp}
                \Upsilon_{ijk}\ \coloneqq\ \psi_{ijk}+\tfrac43\bg{R}_{k(ij)}{}^l\psi_l~.
            \end{equation}
        \end{subequations}
        This thus verifies~\eqref{eq:hessianChristoffel}. The Christoffel symbols~\eqref{eq:pullback2DChristoffel} then follow from the usual argument for conformal rescalings (see e.g.~\cite{Besse1987aa}).
		
		\paragraph{Curvature.}
		Let us now compute the curvature scalar for~\eqref{eq:fluidMetric2dPullBack}. Firstly, we note that
        \begin{equation}\label{eq:curvaturePullBack2dApp}
            \begin{aligned}
                \tilde R_{ijk}{}^l\ &=\ \partial_i\tilde\Gamma_{jk}{}^l-\partial_j\tilde\Gamma_{ik}{}^l-\tilde\Gamma_{ik}{}^m\tilde\Gamma_{jm}{}^l+\tilde\Gamma_{jk}{}^m\tilde\Gamma_{im}{}^l
                \\
                &=\ \bg{R}_{ijk}{}^l+\tfrac12\big(\bg{\nabla}_i\Upsilon_{jk}{}^l-\bg{\nabla}_j\Upsilon_{ik}{}^l-\tfrac12\Upsilon_{ik}{}^m\Upsilon_{jm}{}^l+\tfrac12\Upsilon_{jk}{}^m\Upsilon_{im}{}^l\big)\,,
            \end{aligned}
        \end{equation}
        where we have used~\eqref{eq:LCPullback2dApp} and set $\Upsilon_{ij}{}^k\coloneqq\Upsilon_{ijl}\tilde g^{lk}$. Next, it is not too difficult to see that
        \begin{equation}
            \bg{\nabla}_i\tilde g^{jk}\ =\ -\tilde g^{jl}\tilde g^{km}\big(\psi_{ilm}-\tfrac23\bg{R}_{i(lm)}{}^n\psi_n\big)\,.
        \end{equation}
        and
        \begin{equation}
            \bg{\nabla}_i\psi_{jkl}\ =\ \psi_{ijkl}-2\bg{R}_{i(jk}{}^m\psi_{l)m}-\tfrac12\psi_m\bg{\nabla}_{(j}\bg{R}_{|i|kl)}{}^m~.
        \end{equation}
        Using these two relations, we find that $\Upsilon_{jk}{}^l=\tilde g^{lm}\Upsilon_{jkm}$,
        \begin{equation}
            \begin{aligned}
                \bg{\nabla}_i\Upsilon_{jk}{}^l\ &=\ -\tilde g^{lr}\big(\psi_{irs}-\tfrac23\bg{R}_{i(rs)}{}^n\psi_n\big)\Upsilon_{jk}{}^s
                \\
                &\kern1cm+\tfrac43\tilde g^{lm}\big(\bg{R}_{m(jk)}{}^n\psi_{in}+\psi_n\bg{\nabla}_i\bg{R}_{m(jk)}{}^n\big)
                \\
                &\kern1cm+\tilde g^{lm}\big(\psi_{ijkm}-\tfrac12\psi_n\bg{\nabla}_{(j}\bg{R}_{|i|km)}{}^n-2\bg{R}_{i(jk}{}^n\psi_{m)n}\big)\,.
            \end{aligned}
        \end{equation}
        Upon substituting this expression and~\eqref{eq:UpsilonApp} into~\eqref{eq:curvaturePullBack2dApp}, the curvature scalar~\eqref{eq:HessianScalarCurvature} then follows directly from the traces $\tilde R=\tilde g^{ij}\tilde R_{kij}{}^k$. Finally, the curvature scalar~\eqref{eq:scalarCurvature2D} then follows from the usual argument for conformal rescalings (see e.g.~\cite{Besse1987aa}).
				
		\subsection{Phase space curvature}\label{app:phaseMetricCurvature}

        We now compute the curvature of the metric~\eqref{eq:fluidMetric2d}. Before we do so, however, let us recap the vielbein formalism as it is more efficient than working in a coordinate basis.  

        \paragraph{Vielbein formalism.}
        Let $(M,g)$ be an $m$-dimensional (semi-)Riemannian manifold coordinatised by $x^i$ with $i,j,\ldots=1,\ldots,m$. Then 
        \begin{equation}
	        g\ =\ \tfrac12g_{ij}\rmd x^i\odot\rmd x^j~.
	    \end{equation}        
        We denote the vielbeins by $E_a=E_a{}^i\partial_i\in\frX(M)$ for $a,b,\ldots=1,\ldots,m$, with $(E_a{}^i)\in\scC^\infty(M,\sfGL(m))$. Dually, we have $e^a=\rmd x^ie_i{}^a\in\Omega^1(M)$, with $(e_i{}^a)\in\scC^\infty(M,\sfGL(m))$, satisfying $E_a\intprod e^b=\delta_a{}^b$. Consequently, we have 
        \begin{equation}
        	E_a{}^ie_i{}^b\ =\ \delta_a{}^b
            \eand 
            e_i{}^aE_a{}^j\ =\ \delta_i{}^j~,
        \end{equation}
        and therefore 
        \begin{equation}
        	g\ =\ \tfrac12e^b\odot e^a\eta_{ab}~,
        \end{equation}
        with $\eta_{ab}=\diag(-1,\ldots,-1,1,\ldots,1)$.

        The \uline{structure functions} $C_{ab}{}^c\in\scC^\infty(M)$ are given by
        \begin{subequations}\label{eq:structureFunctions}
            \begin{equation}
                [E_a,E_b]\ =\ C_{ab}{}^cE_c~,
            \end{equation}
            or, dually,
            \begin{equation}
                \rmd e^a\ =\ \tfrac12e^c\wedge e^bC_{bc}{}^a~.
            \end{equation}
        \end{subequations}
        The \uline{torsion} and \uline{curvature two-forms},
        \begin{subequations}
            \begin{equation}
                T^a\ =\ \tfrac12e^c\wedge e^bT_{bc}{}^a
                \eand
                R_a{}^b\ =\ \tfrac12e^d\wedge e^cR_{cda}{}^b~,
            \end{equation}
            are defined by the \uline{Cartan structure equations}
            \begin{equation}\label{eq:cartanStructureEquations}
                \rmd e^a-e^b\wedge\omega_b{}^a\ \eqqcolon\ -T^a
                \eand
                \rmd\omega_a{}^b-\omega_a{}^c\wedge\omega_c{}^b\ \eqqcolon\ -R_a{}^b~,
            \end{equation}
            where $\omega_a{}^b=e^c\omega_{ca}{}^b$ is the \uline{connection one-form}. The associated Ricci tensor and the curvature scalar are then given by
            \begin{equation}\label{eq:RicciScalarVielbein}
                R_{ab}\ \coloneqq\ R_{cab}{}^c
                \eand
                R\ \coloneqq\ \eta^{ba}R_{ab}~.
            \end{equation}
        \end{subequations}
        Furthermore, \uline{metric compatibility} amounts to requiring
        \begin{equation}\label{eq:metricCompatibility}
            \omega_{ab}\ =\ -\omega_{ba}
            \ewith
            \omega_{ab}\ \coloneqq\ \omega_a{}^c\eta_{cb}~.
        \end{equation}

        Imposing~\eqref{eq:metricCompatibility} and the torsion freeness constraint $T^a=0$ yields the Levi-Civita connection and a short calculation shows that this connection is given by
        \begin{equation}\label{eq:LCConnectionVielbein}
            \omega_{ab}{}^c\ =\ \tfrac12(C^c{}_{ab}+C^c{}_{ba}+C_{ab}{}^c)
        \end{equation}
        with indices raised and lowered using $\eta_{ab}$. In this case, the curvature scalar~\eqref{eq:RicciScalarVielbein} is
        \begin{equation}\label{eq:curvatureScalarVielbein}
            R\ =\ 2E_aC^a{}_b{}^b-C_{ab}{}^bC^a{}_c{}^c-\tfrac12C_{abc}C^{acb}-\tfrac14C_{abc}C^{abc}~.
        \end{equation}
		
		\paragraph{Connection.}
        Let now $(M,\bg{g})$ be a Riemannian manifold, and consider the the metric~\eqref{eq:fluidMetric3d} on $T^*M$ now assumed to be in $2m$ dimensions. Furthermore, let
        \begin{equation}
            \bg{E}_a\ \coloneqq\ \bg{E}_a{}^i\parder{x^i}
            \eand
            \bg{e}^a\ \coloneqq\ \rmd x^i\bg{e}_i{}^a 
        \end{equation}
        be the vielbeins and dual vielbeins on $(M,\bg{g})$ with structure functions $\bg{C}_{ab}{}^c$, and set
        \begin{equation}
            \begin{gathered}
                (\hat e^A)\ =\ (\hat e^a,\hat e_a)\ \coloneqq\ \Big(\sqrt{|\hat f|}\,\rmd x^i\bg{e}_i{}^a,\bg{E}_a{}^i\,\bg{\nabla}q_i\Big)\,,
                \\
                (\hat\eta_{AB})\ =\ 
                \begin{pmatrix}
                    \hat\eta_{ab} & \hat\eta_a{}^b
                    \\
                    \hat\eta^a{}_b& \hat\eta^{ab}
                \end{pmatrix}
                \ \coloneqq\ 
                \begin{pmatrix}
                    \textrm{sgn}(\hat f)\unit_m & 0
                    \\
                    0 & \unit_m
                \end{pmatrix}
            \end{gathered}
        \end{equation}
        for multi-indices $A,B,\ldots$. Then, the metric~\eqref{eq:fluidMetric3d} becomes
        \begin{equation}
            \hat g\ =\ \tfrac12\hat e^B\odot\hat e^A\hat\eta_{AB}~.
        \end{equation}
        Note that $\bg{e}_i{}^a$ and $\bg{E}_a{}^i$ only depend on the base manifold coordinates $x^i$ and not on the fibre coordinates $q_i$.

        Next, dually, we have $\hat E_A\intprod\hat e^B=\delta_A{}^B$ with $(\hat E_A)=(\hat E_a,\hat E^a)$ and
        \begin{equation}\label{eq:vielbeinsPhaseSpaceMetric}
            \hat E_a\ \coloneqq\ \tfrac{1}{\sqrt{|\hat f|}}\bg{E}_a{}^i\left(\parder{x^i}+\bg{\Gamma}_{ij}{}^kq_k\parder{q_j}\right)
            \eand 
            \hat E^a\ \coloneqq\ \bg{e}_i{}^a\parder{q_i}~.
        \end{equation}
        After a straightforward calculation, we obtain for $[\hat E_A,\hat E_B]=\hat C_{AB}{}^C\hat E_C$ the relations
        \begin{subequations}\label{eq:vielbeinCommutation}
            \begin{equation}
                \begin{aligned}
                    [\hat E_a,\hat E_b]\ &=\ \tfrac{1}{\sqrt{|\hat f|}} \bg{C}_{ab}{}^c\hat E_c-\hat E_{[a}\log(|\hat f|)\hat E_{b]}+\tfrac{1}{|\hat f|}\bg{R}_{abc}{}^dq_d\hat E^c~,
                    \\
                    [\hat E_a,\hat E^b]\ &=\ \tfrac{1}{2}\hat E^b\log({|\hat f|)}\hat E_a-\tfrac{1}{\sqrt{|\hat f|}}\bg{\omega}_{ac}{}^b\hat E^c~,
                    \\
                    [\hat E^a,\hat E^b]\ &=\ 0~,
                \end{aligned}
            \end{equation}
            where we have set $q_a\coloneqq\bg{E}_a{}^i q_i$ and used the identities
            \begin{equation}
                \bg{\omega}_{ab}{}^c\ =\ \bg{E}_a{}^i\bg{E}_b{}^j\left(\bg{\Gamma}_{ij}{}^k\bg{e}_k{}^c-\parder{x^i}\bg{e}_j{}^c\right)
                \eand
                \bg{R}_{abc}{}^d\ =\ \bg{E}_a{}^i\bg{E}_b{}^j\bg{E}_c{}^k\bg{R}_{ijk}{}^l\bg{e}_l{}^d~.
            \end{equation}  
        \end{subequations}
        
        Reading off the structure functions $\hat C_{AB}{}^C$ from these relations and using the formula~\eqref{eq:LCConnectionVielbein}, the Levi-Civita connection $\hat\omega_{AB}{}^C$ for the metric~\eqref{eq:fluidMetric3d} can be written in terms of the Levi-Civita connection $\bg{\omega}_{ab}{}^c$ for the background metric $\bg{g}$ as
        \begin{equation}\label{eq:LCCotangent}
            \hat\omega_{AB}{}^C\ =\ \tfrac12(\hat C^C{}_{AB}+\hat C^C{}_{BA}+\hat C_{AB}{}^C)~.
        \end{equation}
        
        \paragraph{Curvature.}
        Upon combining~\eqref{eq:vielbeinCommutation} and~\eqref{eq:LCCotangent} with~\eqref{eq:curvatureScalarVielbein}, the curvature scalar of the metric~\eqref{eq:fluidMetric3d} is given by 
        \begin{subequations}
            \begin{equation}
                \begin{aligned}
                    \hat R\ &=\ \tfrac{1}{\hat f}\bg{R}-\tfrac{1}{4\hat f^2}\bg{R}_{abc}{}^d\bg{R}^{abce}q_dq_e-(m-1)\hat\lap_{\rm B}\log(|\hat f|)-\delta_{ab}\hat E^a\hat E^b\log(|\hat f|)
                    \\
                    &\kern1cm+\tfrac{{\rm sgn}(\hat f)}{4}(m-1)(m-2)\delta^{ab}\hat E_a\log(|\hat f|)\hat E_b\log(|\hat f|)
                    \\
                    &\kern1.5cm+\tfrac14m(m-3)\delta_{ab}\hat E^a\log(|\hat f|)\hat E^b\log(|\hat f|)~,
                \end{aligned}
            \end{equation}
            where $\hat\lap_{\rm B}$ is the Beltrami Laplacian for $\hat g$. Here, $\bg{R}_{abc}{}^d$ is the Riemann curvature tensor for the background metric $\bg{g}$ and $\bg{R}$ the associated curvature scalar. In our coordinate basis, this becomes
            \begin{equation}\label{eq:curvatureScalarFluidMetricApp}
                \begin{aligned}
                    \hat R\ &=\ \frac{1}{\hat f}\bg{R}-\frac{1}{4\hat f^2}\bg{R}_{ijk}{}^l\bg{R}^{ijkm}q_kq_m-(m-1)\hat\lap_{\rm B}\log(|\hat f|)-\bg{g}_{ij}\parder[^2]{q_i\partial q_j}\log(|\hat f|)
                    \\
                    &\kern.6cm+\frac{1}{4\hat f}(m-1)(m-2)\bg{g}^{ij}\left(\parder{x^i}+\bg{\Gamma}_{ik}{}^lq_l\parder{q_k}\right)\log(|\hat f|)\left(\parder{x^j}+\bg{\Gamma}_{jm}{}^nq_n\parder{q_m}\right)\log(|\hat f|)
                    \\
                    &\kern1.1cm+\frac14m(m-3)\bg{g}_{ij}\parder{q_i}\log(|\hat f|)\parder{q_j}\log(|\hat f|)~,
                \end{aligned}
            \end{equation}
        \end{subequations} 
        where we have used~\eqref{eq:vielbeinsPhaseSpaceMetric}. This verifies~\eqref{eq:curvatureScalarFluidMetric}.

        Finally, we note that in the case of the flat background metric $\bg{g}_{ij}=\delta_{ij}$, we have $\hat f=f=\frac12\lap p$ with $\lap$ the standard Laplacian on $\IR^m$ and so, the formula~\eqref{eq:curvatureScalarFluidMetricApp} simplifies to
        \begin{equation}
            \hat R\ =\ \frac{m-1}{4f^3}[(6-m)\partial_if\partial^if-4f\lap f]~.
        \end{equation}

	\end{body}

\end{document}